\title{On the equivalence of linear cyclic and constacyclic codes}
\author{Reza Dastbasteh \and Petr Lison\v{e}k}
\address{Department of Mathematics, Simon Fraser University, Burnaby, BC, Canada}
\email{rdastbas@sfu.ca,  plisonek@sfu.ca}
\begin{document}

\theoremstyle{plain}
\newtheorem{theorem}{Theorem}[section]
\newtheorem{lemma}[theorem]{Lemma}
\newtheorem{corollary}[theorem]{Corollary}
\newtheorem{proposition}[theorem]{Proposition}
\newtheorem{question}[theorem]{Question}
\theoremstyle{definition}
\newtheorem{notations}[theorem]{Notations}
\newtheorem{notation}[theorem]{Notation}
\newtheorem{remark}[theorem]{Remark}
\newtheorem{remarks}[theorem]{Remarks}
\newtheorem{definition}[theorem]{Definition}
\newtheorem{claim}[theorem]{Claim}
\newtheorem{assumption}[theorem]{Assumption}
\numberwithin{equation}{section}
\newtheorem{example}[theorem]{Example}
\newtheorem{examples}[theorem]{Examples}
\newtheorem{propositionrm}[theorem]{Proposition}

\newcommand{\zar}{{\rm zar}}
\newcommand{\an}{{\rm an}}
\newcommand{\red}{{\rm red}}
\newcommand{\codim}{{\rm codim}}
\newcommand{\rank}{{\rm rank}}
\newcommand{\sing}{{\rm sing}}
\newcommand{\reg}{{\rm reg}}
\newcommand{\Char}{{\rm char}}
\newcommand{\Tr}{{\rm Tr}}
\newcommand{\Nr}{{\rm Nr}}
\newcommand{\res}{{\rm res}}
\newcommand{\tr}{{\rm tr}}
\newcommand{\supp}{{\rm supp}}
\newcommand{\Min}{{\rm Min \ }}
\newcommand{\Max}{{\rm Max \ }}
\newcommand{\Span}{{\rm Span }}
\newcommand{\F}{\mathbb{F}}
\newcommand{\lcm}{{\rm lcm}}
\newcommand{\PP}{\mathcal{P}}
\newcommand{\C}{\mathcal{C}}
\newcommand{\K}{\mathcal{K}}
\newcommand{\X}{\mathcal{X}}
\newcommand{\LL}{\mathcal{L}}
\newcommand{\W}{\mathcal{W}}
\newcommand{\Z}{\mathbb{Z}}
\def\ord{{\rm ord}}
\def\wt{{\rm wt}}

%
%



\begin{abstract}
We introduce new sufficient conditions for permutation and monomial equivalence of linear cyclic codes over various finite fields. 
We recall that monomial equivalence and isometric equivalence
are the same relation for linear codes over finite fields.
A necessary and sufficient condition for the monomial equivalence of linear cyclic codes through a shift map on their defining set is also given.
Moreover, we provide new algebraic criteria for the monomial equivalence of constacyclic codes over $\F_4$. Finally, we prove that if $\gcd(3n,\phi(3n))=1$, then all permutation equivalent constacyclic codes of length $n$ over $\F_4$ are given by the action of multipliers. The results of this work allow us to prune the search algorithm for new linear codes and discover record-breaking linear and quantum codes.
\end{abstract}

\maketitle

\subjclass{94B15}
{\small\textit{Keywords:} cyclic code; constacyclic code; permutation equivalence; monomial equivalence}

\section{Introduction}\label{intro}

Despite the long history and extensive study of linear cyclic codes, several questions regarding their equivalence remain unsolved and presumably are very difficult \cite{Huffman1}. There have been several works toward the classification of equivalent cyclic and constacyclic codes using algebraic properties of these codes; for examples see 
\cite{aydin2017some,aydin2019equivalence,Dobson,guenda,Huffman1, aydin2021, bierbrauer}. 
In the present paper we develop new sufficient conditions for permutation and monomial equivalence of linear cyclic and constacyclic codes. We also resolve two questions raised in the literature regarding the isometric equivalence of linear cyclic codes induced by the action of affine maps on their defining sets. Moreover, we give new sufficient conditions for the isometric equivalence of linear constacyclic codes using the action of affine maps on their defining set. Finally, we prove that two constacyclic codes
over $\F_4$ of an odd length $n$ such that $\gcd(3n,\phi(3n))=1$ are permutation equivalent if and only if there exists a multiplier 
that maps the defining set of one code to the defining set of the other. 

In general, finding linear codes with good parameters is one of the most challenging tasks in algebraic coding theory. A lot of work has been done in the literature to make the computer search for linear codes with good parameters more systematic. However, the computationally challenging obstacles such as 
minimum distance computation, which requires a considerable amount of time, have
slowed down the search process considerably.
Recently, several new linear codes were discovered by designing a more efficient search algorithm for new linear codes using equivalence of linear cyclic and constacyclic codes, see for example 
\cite{akre,aydin2017some,aydin2019equivalence}.
The results of our work can be applied to make the search for new linear codes and also binary quantum codes with good parameters more systematic. In particular, we provide two record-breaking binary quantum codes and one new linear code which are obtained from our computer search. These codes were obtained after pruning the search algorithm for new linear and quantum codes using equivalence 
among linear cyclic and constacyclic codes.

This paper is organized as follows. In Section $\ref{backg}$, we recall the basic notions of equivalence for linear codes. The main contributions of this paper are discussed in Section $\ref{contributions}$.

Section $\ref{cyclic intro}$ reviews several known results from the literature on the equivalence of linear cyclic codes over a general finite field. In Section $\ref{section3.3}$, we introduce novel sufficient conditions for permutation and monomial equivalence of linear cyclic codes over various finite fields. Such conditions help to determine permutation or monomially equivalent cyclic codes, which are not necessarily detectable using the usual methods such as the action of affine maps on defining sets or the generalized multipliers. Next, in Section $\ref{conjecture section}$, we provide a necessary and sufficient condition for the isometric equivalence of linear cyclic codes through a shift map on their defining sets. This resolves a conjecture of Aydin, Lambrinos, and VandenBerg, which was proposed in \cite{aydin2019equivalence}.

Section $\ref{constacyclic1}$ recalls some known results on the equivalence of linear constacyclic codes over $\F_4$. In Section $\ref{constacyclic2}$, we present new results on the equivalence of constacyclic codes over $\F_4$.
Finally, Section $\ref{applications}$ gives examples of two record-breaking binary quantum codes and one new record-breaking linear code over $\F_4$.

 
\section{Background}\label{backg}

Let $\F_q$ be the field of $q$ elements, where $q$ is a prime power.
There are different notions of equivalence for linear codes over $\F_q$. This section studies several such equivalence concepts that preserve the weight distribution when passing from one linear code to another.

\begin{definition}\cite[Section 1.6]{Huffman}\label{def1}
Let $C_1$ and $C_2$ be two linear codes of length $n$ over $\F_q$. Then $C_1$ and $C_2$ are called {\em permutation equivalent} if there exists an $n \times n$ permutation matrix $P$ such that $C_1=C_2P$.
\end{definition}

An $n\times n$ matrix $M$ is called a {\em monomial matrix} over $\F_q$ if $M$ has exactly one non-zero entry from $\F_q$ in each row and each column. Each monomial matrix $M$  over $\F_q$ can be decomposed as $M=PD$, where $P$ is a permutation matrix and $D$ is a non-singular diagonal matrix defined over $\F_q$. 

\begin{definition}\label{equivalence definition cyclic}\cite[Section 1.7]{Huffman}
Let $C_1$ and $C_2$ be two linear codes of length $n$ over $\F_q$. The codes $C_1$ and $C_2$ are called \textit{monomially equivalent} provided that there exists an $n\times n$ monomial matrix $M$ over $\F_q$ such that $C_1=C_2M$. 
\end{definition}

Next, we recall another notion of equivalence for linear codes. Let $C$ and $C'$ be two length $n$ linear codes over $\F_q$ and $\phi$ be an $\F_q$-linear bijection from $C$ to $C'$ preserving the Hamming weight. Then $\phi$ is an isometry of the spaces $C$ and $C'$ equipped with the Hamming distance function $d(x,y)$ as the metric. 
We call such mapping {\em isometry of linear codes} and the codes $C$ and $C'$ {\em isometric equivalent}. An example of isometry of linear codes is the Frobenius automorphism of finite fields. For linear cyclic and constacyclic codes over $\F_q$, the actions of Frobenius maps on codewords can be easily translated in terms of the action of certain multipliers on the defining sets. Multipliers are discussed in Section $\ref{cyclic intro}$. 

Although the isometry of linear codes looks different from the previous notions of equivalence for linear codes, it was proved by MacWilliams in \cite{MacWilliam} that the isometry and the monomial equivalence of binary linear codes are identical concepts in the sense of the next theorem. A generalization of MacWilliams' result over a general finite field is quoted below. 

\begin{theorem}\cite[Corollary 1]{Bogart}\label{isometric=monomial}
Let $C$ and $C'$ be two linear codes over a finite field. Then $C$ and $C'$ are isometric equivalent if and only if these codes are monomial equivalent. 
\end{theorem}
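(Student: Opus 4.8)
The forward implication is the routine half. A monomial matrix $M=PD$, with $P$ a permutation matrix and $D$ a nonsingular diagonal matrix, acts on $\F_q^n$ by permuting coordinates and rescaling each coordinate by a nonzero scalar; neither operation changes whether a coordinate is zero, so $\wt(xM)=\wt(x)$ for every $x$. Hence if $C_1=C_2M$, then $x\mapsto xM$ restricts to an $\F_q$-linear Hamming-preserving bijection $C_2\to C_1$, so the codes are isometric equivalent. The substance of the theorem is the converse, which is the MacWilliams extension theorem, and this is where I would concentrate the work.

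For the converse, let $\phi\colon C\to C'$ be an $\F_q$-linear weight-preserving bijection; in particular $\dim C=\dim C'=:k$. I would first pass to generator matrices. Fix a basis $g_1,\dots,g_k$ of $C$, let $G$ be the $k\times n$ matrix with these rows, and let $G'$ be the $k\times n$ matrix with rows $\phi(g_1),\dots,\phi(g_k)$, which is a generator matrix of $C'$. Linearity of $\phi$ gives $\phi(mG)=mG'$ for every $m\in\F_q^k$, so the isometry hypothesis becomes the purely combinatorial condition $\wt(mG)=\wt(mG')$ for all $m\in\F_q^k$. The goal is now to produce a monomial matrix $M$ with $G'=GM$, for then $\phi(mG)=mG'=(mG)M$ shows that $\phi$ is the restriction of $x\mapsto xM$ and that $C'=CM$.

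The key move is to read weights off the \emph{columns}. Writing $v_1,\dots,v_n\in\F_q^k$ for the columns of $G$ and $w_1,\dots,w_n$ for those of $G'$, one has $\wt(mG)=n-\#\{j:\langle m,v_j\rangle=0\}$, so the hypothesis says that for every functional $m$ the number of columns annihilated by $m$ is the same for $G$ and for $G'$. I would group the nonzero columns by the one-dimensional subspace (projective point) they span: let $a_p$ and $b_p$ count the columns of $G$, respectively $G'$, lying on the point $p$ of $\mathrm{PG}(k-1,q)$, and let $z,z'$ count the zero columns. Since a nonzero functional $m$ annihilates exactly the points of the hyperplane $H=\ker m$, the hypothesis reads $z+\sum_{p\subseteq H}a_p=z'+\sum_{p\subseteq H}b_p$ for every hyperplane $H$, together with the total equality $z+\sum_p a_p=z'+\sum_p b_p$.

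It remains to deduce $z=z'$ and $a_p=b_p$ for all $p$; this is the heart of the argument and the step I expect to be the main obstacle. For $z=z'$ I would sum the annihilation counts over all $m\in\F_q^k$ and double-count: each zero column is annihilated by all $q^k$ functionals and each nonzero column by exactly $q^{k-1}$, giving $z\,q^k+(n-z)q^{k-1}$ for $G$ and the analogous expression for $G'$, whence $z=z'$. The hyperplane equations then simplify to $\sum_{p\subseteq H}a_p=\sum_{p\subseteq H}b_p$ for every $H$, i.e. $Av=0$, where $A$ is the point--hyperplane incidence matrix of $\mathrm{PG}(k-1,q)$ and $v=(a_p-b_p)_p$. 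The crux is the classical fact that $A$ is nonsingular over $\mathbb{Q}$: it is square, since points and hyperplanes are equinumerous, and it is the incidence matrix of a symmetric design, so its rank can be pinned down from the eigenvalues of $AA^{\mathsf{T}}$ (or by Möbius inversion on the subspace lattice). This forces $v=0$, i.e. $a_p=b_p$ for all $p$. Equal multiplicities on every projective point, together with $z=z'$, mean the columns of $G'$ are obtained from those of $G$ by a permutation followed by nonzero rescalings, which is exactly a monomial matrix $M$ with $G'=GM$, completing the proof. I would finish by remarking that the only genuinely nontrivial input is the invertibility of the incidence matrix; everything else is bookkeeping.
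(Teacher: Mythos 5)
The paper offers no proof of this statement: it is quoted verbatim from Bogart, Goldberg and Gordon as a known generalization of MacWilliams' extension theorem, so there is no ``paper's proof'' to compare against and your argument has to stand on its own. It does. The forward direction is, as you say, routine. For the converse you give the standard geometric proof via projective multisets: pass to generator matrices $G,G'$ with $\wt(mG)=\wt(mG')$ for all $m\in\F_q^k$, read the weight of $mG$ as $n$ minus the number of columns annihilated by $m$, recover the zero-column count $z=z'$ by averaging over all $q^k$ functionals, and reduce the rest to the linear system $Av=0$ where $A$ is the point--hyperplane incidence matrix of ${\rm PG}(k-1,q)$ and $v=(a_p-b_p)_p$. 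You correctly isolate the only nontrivial input, the nonsingularity of $A$ over $\mathbb{Q}$; for completeness one would add the one-line verification that $AA^{\mathsf T}=q^{k-2}I+\lambda J$ with $\lambda=(q^{k-2}-1)/(q-1)$, which is positive definite, and check the degenerate cases $k\le 1$ separately (they are trivial). Equal point multiplicities plus $z=z'$ then yield a column matching realizing $G'=GM$ for a monomial $M$, and linearity gives $\phi(x)=xM$ on all of $C$. This route is different in flavour from MacWilliams' original induction and from the character-theoretic proof of Ward and Wood, but it is correct and arguably the cleanest: the combinatorics is concentrated in a single invertibility statement about a symmetric design.
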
 

In spite of the above connection between the monomial equivalence and isometry of linear codes, occasionally, using one of these two definitions can be easier than the other. 

\subsection{Our main contributions}\label{contributions}

There exists significant literature on equivalence of linear codes,
in particular cyclic codes and their generalizations. The reader
may already be familiar with some of these results, yet
we need to recall them thoughout the paper as we are using them
in our proofs. Therefore, in order to seperate the previous results
from the new results, we take the liberty to briefly summarize
our new results in this subsection.


In Section $\ref{section3.3}$ we introduce novel sufficient conditions for the permutation and monomial equivalence of linear cyclic codes. The main results of this section are Theorems $\ref{monomial action}$, $\ref{new permutation}$, and $\ref{F4 permutation}$. 
We also provide a list of code lengths and finite fields containing at least a pair of cyclic codes which are monomially equivalent but not affine equivalent or  permutation equivalent by the action of a generalized multiplier. 

In Section $\ref{conjecture section}$,
Theorem $\ref{affine conjecture}$ resolves a conjecture of \cite{aydin2019equivalence}. Let $C_1$ and $C_2$ be two length $n$ linear cyclic codes over $\F_q$ with the defining sets $A$ and $B$ and $\phi_b$ be a shift map. We prove that $C_1$ and $C_2$ are isometric equivalent through the shift map $\phi_b$ if and only if $\phi_b(A)=B$ and $n$ divides $|A|b(q-1)$.  

In Section $\ref{constacyclic2}$
we first provide a sufficient condition for constacyclic codes over $\F_4$ to have the same parameters using the action of affine maps on their defining sets. A very similar result is stated in Theorem $3.4$ of \cite{akre} 
without proof, and our proof completes this missing piece. 

Let $A$ and $B$ be the defining sets of two length $n$ constacyclic codes over $\F_4$. In Theorem $\ref{shift sufficient}$ we prove that if there exists a shift map $\phi_b$ such that $\phi_b(A)=B$, then $b \equiv 3j \pmod {3n}$ for some $1\le j \le n$ and $n$ divides $3j|A|$.

Theorem $\ref{Palfy consta}$ shows that two constacyclic codes of length $n$ over $\F_4$ such that $\gcd(3n,\phi(3n))=1$ are permutation equivalent if and only if they are permutation equivalent by the action of multipliers. 

In Section $\ref{applications}$ we provide two new binary quantum codes and one new linear code with parameters better than the previous best codes with the same parameters. These codes are obtained by our search algorithm after applying the results of this work.

\section{Equivalence of linear cyclic codes}\label{Linear cyclic codes}


\subsection{A short overview on equivalence of linear cyclic codes}\label{cyclic intro}

Throughout the rest of this section, $n$ always is a positive integer such that $\gcd(n,q)=1$. Let $\alpha$ denote a fixed primitive $n$-th root of unity in $K$ which is a finite field extension of $\F_q$. 

 A linear code $C\subseteq \F_q^n$ is called \textit{cyclic} if for every $c=(c_0,c_1,\cdots,c_{n-1})\in C$, the vector $(c_{n-1},c_0,\cdots,c_{n-2})$ obtained by a cyclic shift of the coordinates of $c$ is also in $C$. It is well-known that there is a one-to-one correspondence between cyclic codes of length $n$ over $\F_q$ and ideals of the ring $\F_q [x]/\langle x^n-1\rangle $, for example see \cite[Section 4.2]{Huffman}. Under this correspondence, each cyclic code $C$ can be uniquely represented by a monic polynomial $g(x)$, where $g(x)$ is the minimal degree generator of the corresponding ideal. The polynomial $g(x)$ is called the {\em generator polynomial} of $C$. Since $\alpha$ is fixed,
any such a linear cyclic code $C$ can be represented by its unique {\em defining set} defined by
$$\{t: 0\le t \le n-1 \ \text{and}\ g(\alpha^t)=0 \}.$$
For each $a\in \Z/n\Z$, the set $Z(a)=\{(aq^j) \mod n: 0\le j \le m-1\}$, where $m$ is the multiplicative order of $q$ modulo $n$, is called 
the \textit{$q$-cyclotomic coset} modulo $n$ containing $a$. The smallest member of a $q$-cyclotomic coset is called its {\em coset leader}. All different $q$-cyclotomic cosets modulo $ n$ partition $\Z/n\Z$. Moreover, the defining set
of a linear cyclic code over $\F_q$ is a union of $q$-cyclotomic cosets.
 
\begin{definition}\label{GEN}
Let $C$ be an $[n,k]$ linear code over $\F_q$ and $H$ be an $(n-k) \times n$ matrix defined over a field extension of $\F_q$. The matrix $H$ is called a {\em generalized parity check matrix} for the code $C$ 
if for each $c\in \F_q^n$
we have $Hc^T=0$ if and only if $ c\in C$.
\end{definition}

The reason we called the matrix $H$ in Definition \ref{GEN} a generalized parity check matrix is that the parity check matrices are defined over the same field as their corresponding linear codes.
Let $\{i_1,i_2,\cdots,i_k\}$ be the defining set for a length $n$ linear cyclic code $C$ over $\F_q$. Then the matrix  
\begin{equation}\label{PC matrix of linear cyclic codes}
H=\begin{bmatrix}
1& \alpha^{i_1} &\alpha^{2i_1} & \cdots & \alpha^{(n-1)i_1} \\
1& \alpha^{i_2} &\alpha^{2i_2} & \cdots & \alpha^{(n-1)i_2} \\
 \vdots&\vdots &\vdots & \cdots &\vdots \\
1& \alpha^{i_k} &\alpha^{2i_k} & \cdots & \alpha^{(n-1)i_k} \\
\end{bmatrix}
\end{equation}
 is a generalized parity check matrix for $C$. 
 
For any integer $a$ such that $\gcd(n,a)=1$, the function $\mu_a$ defined on $\mathbb{Z}/n\Z$ by $\mu_a(x)=(ax) \bmod {n}$ is called a {\em multiplier}. Each multiplier is a permutation of $\Z/n\Z$.  

\begin{theorem}\cite[Theorem 3.2.2]{van}\label{multiplier equivalent cyclic}
Let $C_1$ and $C_2$ be two linear cyclic codes of length $n$ over $\F_q$ with defining sets $A_1$ and $A_2$, respectively. Let $c$ be an integer such that $\gcd(c,n)=1$. If $\mu_c(A_1)=A_2$, then $C_1$ and $C_2$ are permutation equivalent. 
\end{theorem}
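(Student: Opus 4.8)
The plan is to realize the multiplier $\mu_c$ directly as a permutation of the $n$ coordinate positions and then to verify that the associated permutation matrix carries $C_1$ onto $C_2$; by Definition \ref{def1} this is exactly what permutation equivalence means. Since $\gcd(c,n)=1$, the map $\mu_c(x)=(cx)\bmod n$ is a bijection of $\Z/n\Z$, so it induces a permutation matrix $P$ acting on $\F_q^n$ by relabelling the entry in position $i$ to position $(ci)\bmod n$. Consequently $P(C_1)$ is automatically permutation equivalent to $C_1$, and the whole problem reduces to identifying the image code $P(C_1)$ and recognizing it as $C_2$.

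To identify $P(C_1)$ I would work through the generalized parity check description in \eqref{PC matrix of linear cyclic codes}: a vector $c\in\F_q^n$ lies in $C_1$ if and only if $c(\alpha^{t})=\sum_{i=0}^{n-1}c_i\alpha^{ti}=0$ for every $t\in A_1$. Writing $d=Pc$ and performing the index substitution $\ell=(ci)\bmod n$ in the evaluation sum (which is legitimate because $\mu_c$ is a bijection of $\Z/n\Z$ and $\alpha^n=1$), the vanishing conditions on $c$ at the points $\{\alpha^{t}:t\in A_1\}$ translate into vanishing conditions on $d$ at the points $\{\alpha^{u}:u\in\mu_{c^{\pm1}}(A_1)\}$, the precise exponent depending only on the chosen direction of $P$. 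In other words, $P(C_1)$ is again a linear cyclic code whose defining set is the image of $A_1$ under a multiplier.

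It then remains to match this defining set with $A_2$. Here I would use that $\mu_c(A_1)$ is genuinely a union of $q$-cyclotomic cosets: since $\mu_c$ commutes with multiplication by $q$, it sends each coset $Z(a)$ to $Z((ca)\bmod n)$, so $\mu_c$ permutes the cyclotomic cosets and preserves the required structure. Combined with the fact that a linear cyclic code is uniquely determined by its defining set, this guarantees that the image code is precisely the cyclic code with defining set $\mu_c(A_1)$. Because $\mu_c(A_1)=A_2$ by hypothesis (and because $c^{-1}\bmod n$ is again coprime to $n$, so I am free to use whichever of $\mu_c,\mu_{c^{-1}}$ makes the bookkeeping come out exactly as $A_2$), the image is $C_2$, and therefore $C_1$ and $C_2$ are permutation equivalent.

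The conceptual content of the argument is light, so the point to get right is the bookkeeping: one must consistently track whether the coordinate permutation induced by $\mu_c$ produces the defining set $\mu_c(A_1)$ or $\mu_{c^{-1}}(A_1)$, because evaluation at powers of $\alpha$ is contragredient to the permutation of positions. I expect this inverse-direction bookkeeping, rather than any genuine difficulty, to be the only delicate step; verifying that the image is cyclic and invoking uniqueness of the defining set are then routine.
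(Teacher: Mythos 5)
The paper does not prove this statement at all---it is quoted from the literature (van Lint, Theorem 3.2.2) without proof---so there is nothing to compare against; judged on its own, your argument is the standard one and is correct. The only delicate point, that evaluating $\sum_i v_i\alpha^{ti}$ after permuting positions by $i\mapsto (ci)\bmod n$ shifts the defining set by $\mu_{c^{-1}}$ rather than $\mu_c$, is one you explicitly flag and correctly dispose of by noting that $c^{-1}$ is also coprime to $n$ (equivalently, that $\mu_c(A_1)=A_2$ iff $\mu_{c^{-1}}(A_2)=A_1$ and permutation equivalence is symmetric); the remaining steps---that the translated vanishing conditions characterize the image exactly, and that $\mu_c$ permutes the $q$-cyclotomic cosets so the image is again a cyclic code determined by its defining set---are all in order.
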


The next theorem characterizes all the permutation equivalent linear cyclic codes of certain lengths.

\begin{theorem}\cite[Theorem 1.1]{Huffman1}\label{isoequivalent}
Let $C_1$ and $C_2$ be two linear cyclic codes of length $n$ over $\F_q$ with defining sets $A_{1}$ and $A_{2}$, respectively, and $\gcd(n,\phi(n))=1$. The codes $C_1$ and $C_2$ are permutation equivalent if and only if there exists a multiplier $\mu_a$ such that $\mu_a(A_{1})=A_{2}$. 
\end{theorem}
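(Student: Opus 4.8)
The plan is to prove the two implications separately, with essentially all of the content lying in the forward direction. The reverse implication is immediate: if some multiplier $\mu_a$ satisfies $\mu_a(A_1)=A_2$, then Theorem \ref{multiplier equivalent cyclic} already guarantees that $C_1$ and $C_2$ are permutation equivalent, with no hypothesis on $n$ needed. So I would dispose of this direction in one line and concentrate on the converse, where the arithmetic condition $\gcd(n,\phi(n))=1$ must be used in an essential way.

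For the forward direction I would pass to permutation automorphism groups. Write $\rho$ for the cyclic shift of the $n$ coordinates; it is a regular $n$-cycle, so $T=\langle\rho\rangle$ is a regular cyclic subgroup of $S_n$ of order $n$. Because $C_1$ and $C_2$ are cyclic, $T$ is contained in both $G_i=\mathrm{PAut}(C_i)$. Suppose $\sigma\in S_n$ realizes the permutation equivalence, say $\sigma(C_1)=C_2$; then $G_2=\sigma G_1\sigma^{-1}$, and since $T\le G_2$ we get that both $T$ and $T_1:=\sigma^{-1}T\sigma$ are regular cyclic subgroups of order $n$ inside $G_1$. I would also record the classical fact that a regular abelian subgroup of $S_n$ is self-centralizing, so $C_{S_n}(T)=T$ and the normalizer $N_{S_n}(T)$ is the holomorph $\mathrm{Hol}(\Z/n\Z)$, consisting precisely of the affine maps $i\mapsto ai+b$ with $a$ a unit modulo $n$; the maps with $b=0$ are exactly the coordinate permutations induced by multipliers.

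The crux is then the following group-theoretic claim: when $\gcd(n,\phi(n))=1$, any two regular cyclic subgroups of order $n$ of a transitive group $G\le S_n$ are conjugate inside $G$. This is exactly the point where the cyclic-number hypothesis is indispensable; it forces $n$ to be squarefree (otherwise some prime would divide $\phi(n)$), and, combined with $C_G(T)=T$ so that $[N_G(T):T]$ divides $\phi(n)$ and is therefore coprime to $n$, it is what lets a Sylow/Hall-type conjugacy argument go through. This is essentially the content of P\'alfy's theorem on cyclic CI-structures, and reconstructing or invoking it is the main obstacle of the whole proof. I expect the squarefree reduction together with the self-centralizing property to be the key inputs, handled prime by prime and then assembled using the coprimality of $n$ and $\phi(n)$.

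Granting the claim, the conclusion is short. Applying it to $T$ and $T_1$ inside $G_1$ yields $h\in G_1$ with $\sigma^{-1}T\sigma=hTh^{-1}$, hence $\sigma h\in N_{S_n}(T)=\mathrm{Hol}(\Z/n\Z)$; write this affine map as $\rho^{b}\pi_a$, where $\pi_a$ is the multiplier permutation attached to a unit $a$ and $\rho^{b}\in T$. Then
\[
C_2=\sigma(C_1)=\rho^{b}\pi_a h^{-1}(C_1)=\rho^{b}\pi_a(C_1)=\pi_a(C_1),
\]
using $h^{-1}\in\mathrm{PAut}(C_1)$ to drop $h^{-1}$ and the fact that $\pi_a(C_1)$ is again cyclic, hence $\rho^{b}$-invariant, to drop $\rho^{b}$. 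Translating $\pi_a(C_1)=C_2$ back to defining sets gives $\mu_a(A_1)=A_2$ (up to replacing $a$ by its inverse, according to the chosen convention), which is what we wanted.
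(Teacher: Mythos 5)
This statement is quoted by the paper from \cite[Theorem 1.1]{Huffman1} and is not proved in the paper at all, so there is no in-paper argument to compare against; I can only assess your proposal against the known proof in the cited source. Your architecture is exactly the standard one (and the one used by Huffman, Job and Pless): the reverse direction is Theorem \ref{multiplier equivalent cyclic}; for the forward direction one observes that the shift group $T$ and its conjugate $\sigma^{-1}T\sigma$ are two regular cyclic subgroups of $\mathrm{PAut}(C_1)$, invokes conjugacy of regular cyclic subgroups under the hypothesis $\gcd(n,\phi(n))=1$, and then uses $N_{S_n}(T)=\mathrm{Hol}(\Z/n\Z)$ together with cyclicity of $\pi_a(C_1)$ to strip off the translation part. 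All of the bookkeeping in your last paragraph is correct, including the self-centralizing property of $T$ and the identification of the multiplier up to inversion of $a$.

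The one substantive issue is that the entire difficulty of the theorem is concentrated in the claim you state but do not prove: that when $\gcd(n,\phi(n))=1$, any two regular cyclic subgroups of order $n$ of a subgroup $G\le S_n$ are conjugate in $G$. You correctly flag this as the main obstacle, but your suggested route --- ``a Sylow/Hall-type conjugacy argument'' from $C_G(T)=T$ and $[N_G(T):T]\mid\phi(n)$ --- does not go through as described: $T$ is in general neither a Sylow nor a Hall subgroup of $G$ (take $G=S_n$ itself), so no off-the-shelf conjugacy theorem applies. P\'alfy's actual proof is a nontrivial induction on the structure of $G$, handling the squarefree primes one at a time via minimal normal subgroups and block systems, and it occupies most of his paper. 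So the proposal is a faithful and correct \emph{reduction} of the theorem to P\'alfy's conjugacy result; if citing that result is permitted, your proof is complete, but if the intent was to prove the theorem from first principles, the key step is missing and cannot be filled by the coprimality observation alone.
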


Let $n=p^m$ and $k\le m$, where $p$ is an odd prime and $k$ and $m\geq 2$ are positive integers. Each element of $\Z/n\Z$ can be uniquely represented as $i+jp^k$, where $0\le i < p^k$ and $0\le j < p^{m-k}$. For any $1\le d < p^k$ such that $\gcd(d,p^k)=1$, the map $M_d:\Z/n\Z \rightarrow \Z/n\Z$ defined by $M_d(i+jp^k)=(id \mod{p^k})+jp^k$ is called a {\em generalized multiplier} of $\Z/n\Z$. 
Let $\mu$ and $M$ be a multiplier and a generalized multiplier defined on $\Z/n\Z$, respectively. The composition map $M\mu$ on $\Z/n\Z$ is defined by $\mu(M(x))$ for each $x \in \Z/n\Z$.

Let $\pi$ be a permutation of $\Z/n\Z$ and
$v=(v_0,\ldots,v_{n-1})\in\F_q^n$.
Define $\pi v$ to be\break
$(v_{\pi^{-1}(0)},\ldots,v_{\pi^{-1}(n-1)})\in\F_q^n$.
The map $v\mapsto \pi v$ is linear over $\F_q$. The matrix $M$ such
that $\pi v=vM$ for each $v$ is called the {\em permutation matrix
corresponding to $\pi$.}

\begin{theorem}\cite[Theorem 3.1]{Huffman1}\label{generalized multipliers}
Let $C_1$ and $C_2$ be two linear cyclic codes of length $p^2$ over $\F_q$, where $p$ is an odd prime and $\gcd(q,p^2)=1$. If $C_1$ and $C_2$ are permutation equivalent, then they are equivalent by the action of permutation matrices corresponding to $\mu$ or $M\mu$ where, $\mu $ is a multiplier and $M$ is a generalized multiplier.
\end{theorem}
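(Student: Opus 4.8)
I would treat this as a purely group-theoretic statement about the permutation automorphism groups of the two codes. Write $\sigma$ for the cyclic shift $x\mapsto x+1$ on $\Z/p^2\Z$; this is a $p^2$-cycle, and a length-$p^2$ code is cyclic precisely when $\sigma$ lies in its permutation automorphism group $\mathrm{PAut}$. Suppose a permutation $\pi$ of $\Z/p^2\Z$ realizes the equivalence, say $\pi(C_2)=C_1$ (this matches the convention $C_1=C_2M$ with $M$ the matrix of $\pi$). Since $C_1$ is cyclic, $\sigma(C_1)=C_1$, whence $\pi^{-1}\sigma\pi(C_2)=\pi^{-1}\sigma(C_1)=C_2$ and so $\pi^{-1}\sigma\pi\in G:=\mathrm{PAut}(C_2)$. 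Thus $G$ contains two regular cyclic subgroups of order $p^2$, namely $A=\langle\sigma\rangle$ and $B=\langle\pi^{-1}\sigma\pi\rangle$. The whole problem is now to show that the coset $\pi G$ contains a map of the form $\mu$ or $M\mu$: if $\rho=\pi g$ with $g\in G$ is such a map, then $\rho(C_2)=\pi(C_2)=C_1$, giving the desired equivalence.

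Next I would use Sylow theory to bring $A$ and $B$ into a common Sylow $p$-subgroup. Both are $p$-subgroups of $G$, so each sits in a Sylow $p$-subgroup of $G$; by Sylow's conjugacy theorem there is $g\in G$ mapping one onto the other, and replacing $\pi$ by $\pi g^{-1}$ (which is harmless, since $g\in G$ fixes $C_2$) I may assume $A$ and $B$ lie in a single Sylow $p$-subgroup $P$ of $G$. In turn $P$ is contained in a Sylow $p$-subgroup $W$ of $S_{p^2}$, and $W\cong (\Z/p\Z)\wr(\Z/p\Z)$ has order $p^{p+1}$. The task has thereby been reduced to understanding two regular cyclic subgroups $A,B$ of order $p^2$ inside the explicit $p$-group $W$.

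The crux, and the main obstacle, is the conjugacy analysis inside $W$. I would first identify the normalizer $N_{S_{p^2}}(A)$ with the holomorph of $\Z/p^2\Z$, i.e. the affine maps $x\mapsto ax+b$ with $\gcd(a,p)=1$; these are exactly the translations $\sigma^b$ together with the multipliers $\mu_a$. I would then classify the regular cyclic subgroups of $W$ by decomposing a generating $p^2$-cycle into its base component in $(\Z/p\Z)^p$ and its (necessarily nontrivial) top component in $\Z/p\Z$, and show that every such $B$ is carried to $A$ by a conjugator of the form $M\nu$, where $\nu$ is affine and $M$ is a generalized multiplier $M_d(i+jp)=(di\bmod p)+jp$. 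The generalized multiplier is precisely the ingredient that rotates the base component of the generator consistently with the carry pattern of $\sigma$, and it accounts for exactly those conjugations of $A$ that the affine normalizer cannot produce (note $M_d$ need not be a $p$-element, which is harmless since we only need such a conjugator to exist). This is where the hypothesis that the length is $p^2$ is indispensable, and it is the delicate part of the argument: it amounts to a careful bookkeeping of how a $p^2$-cycle decomposes in $(\Z/p\Z)\wr(\Z/p\Z)$ together with the verification that the gap between $N_{S_{p^2}}(A)$ and the full set of nice conjugators of $A$ is filled exactly by the generalized multipliers. Conceptually this is the concrete face of the failure of $\Z/p^2\Z$ to be a CI-group, since $\gcd(p^2,\phi(p^2))=p\neq 1$; were we instead in the situation $\gcd(n,\phi(n))=1$ of Theorem \ref{isoequivalent}, no generalized multipliers would be needed.

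Finally I would assemble the pieces. From the previous step $\rho_0^{-1}A\rho_0=B=\pi^{-1}A\pi$, so $\pi\rho_0^{-1}$ normalizes $A$ and is therefore affine, say $\pi\rho_0^{-1}=\sigma^b\mu_a$. Writing $\rho_0=M\nu$ with $\nu$ affine and absorbing the affine factors, one obtains $\pi=\sigma^b\,\rho_1$ where $\rho_1$ lies in the group generated by multipliers and generalized multipliers; using the composition rules for these maps, every element of that group can be put in the normal form $M\mu$ (with $M$ trivial in the pure-multiplier case). The leading translation $\sigma^b$ is then discarded because $C_1$ is cyclic: $(M\mu)(C_2)=\sigma^{-b}\pi(C_2)=\sigma^{-b}(C_1)=C_1$. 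Hence $C_1$ and $C_2$ are equivalent through the permutation matrix corresponding to $M\mu$, or to $\mu$ alone when $M$ is trivial, which is the assertion.
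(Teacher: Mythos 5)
First, a framing point: the paper does not prove this statement at all --- it is quoted as \cite[Theorem 3.1]{Huffman1} and used as a black box, so there is no in-paper proof to compare you against. Judged on its own, your outline follows the strategy of the original source: pass to the permutation automorphism group $G$ of $C_2$, observe that it contains the two regular cyclic subgroups $A=\langle\sigma\rangle$ and $B=\pi^{-1}A\pi$, use Sylow conjugacy to place them in a common Sylow $p$-subgroup of $G$ and hence in a Sylow $p$-subgroup $W\cong(\Z/p\Z)\wr(\Z/p\Z)$ of $S_{p^2}$, identify $N_{S_{p^2}}(A)$ with the holomorph (the affine maps), and then assemble a conjugator of the prescribed shape. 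The reduction in your first two paragraphs and the normalizer identification are correct.

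There is nevertheless a genuine gap: the entire mathematical content of the theorem is concentrated in your third paragraph, and that paragraph describes what you ``would show'' rather than showing it. You must actually prove that every regular cyclic subgroup $B$ of order $p^2$ lying with $A$ in the wreath product $W$ is conjugate to $A$ by a permutation of the form $M\nu$ with $\nu$ affine and $M$ a generalized multiplier; this P\'alfy-type analysis of how a $p^2$-cycle sits inside $(\Z/p\Z)\wr(\Z/p\Z)$ is precisely what makes the statement true for length $p^2$, and none of it is carried out --- it is the theorem. A second, smaller gap sits in your final paragraph: the claim that any product of translations, multipliers and generalized multipliers can be brought to the normal form $\sigma^b M\mu$ is not automatic. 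A generalized multiplier acts on the base-$p$ digits of $\Z/p^2\Z$ and interacts with translations through carries, so the needed identities of the type $M_d\sigma^c=\sigma^{c'}M_{d'}\mu_{a'}$ (and the interchange of $M_d$ with $\mu_a$) must be verified explicitly rather than invoked as ``composition rules.'' Until those two computations are supplied, what you have is an accurate road map of the known proof, not a proof.
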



We call the map $\phi_b$ on $\mathbb{Z}/n\mathbb{Z}$ defined by $\phi_b(x)=(x+b) \mod {n}$ a {\em shift map}.  The next theorem shows that certain shift maps send the defining set of a linear cyclic code to the defining set of an isometrically equivalent linear cyclic code.

\begin{theorem}\cite[Theorem 3]{aydin2019equivalence}\label{cyclic2}
Let $C_1$ and $C_2$ be two length $n$ linear cyclic codes over $\F_q$ with defining sets $A_1$ and $A_2$, respectively. 
If $b$ is a positive integer such that $n$ divides $ b |A_1|(q-1)$ and $\phi_b(A_1)=A_2$, then $C_1$ and $C_2$ are isometrically equivalent.  
\end{theorem}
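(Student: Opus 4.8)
The plan is to prove the equivalent reformulation that $C_1$ and $C_2$ are \emph{monomially} equivalent, which by Theorem \ref{isometric=monomial} is the same as isometric equivalence; concretely, I would exhibit a diagonal monomial matrix $\Delta$ over $\F_q$ with $C_1\Delta=C_2$. The guiding idea is that shifting the defining set by $b$ is realized over the splitting field $K$ by the diagonal scaling $T=\mathrm{diag}(\alpha^{0},\alpha^{-b},\dots,\alpha^{-(n-1)b})$: reading the generalized parity check matrix \eqref{PC matrix of linear cyclic codes}, the $k$-th column of $H_2$ equals $\alpha^{bk}$ times that of $H_1$, so $T$ carries the code with defining set $A_1$ to the one with defining set $A_1+b=A_2$; equivalently the generator polynomials satisfy $g_2(x)=\alpha^{b|A_1|}g_1(\alpha^{-b}x)$. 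The difficulty is that $T$ has entries in $K\setminus\F_q$, so it is not a monomial matrix over $\F_q$ and in general does not even map $C_1$ into $\F_q^n$; the whole point is to repair this using the hypotheses.

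First I would set $e=\ord(\alpha^{b(q-1)})=n/\gcd(n,b(q-1))$. Since $A_2=\phi_b(A_1)$ is the defining set of a cyclic code over $\F_q$, both $A_1$ and $A_2$ are unions of $q$-cyclotomic cosets; applying $q$-invariance to $A_2=A_1+b$ together with $qA_1=A_1$ yields $A_1+b(q-1)=A_1$, i.e. $b(q-1)$ lies in the shift-stabilizer $\{s:A_1+s=A_1\}$. Hence the cyclic subgroup $\langle n/e\rangle=\langle b(q-1)\rangle$ is contained in this stabilizer, and the divisibility hypothesis $n\mid b|A_1|(q-1)$ is exactly the statement $e\mid|A_1|$, consistent with the stabilizer containment just derived.

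The key step is to correct $T$ into an $\F_q$-rational map. I would define $\Delta=\mathrm{diag}(d_0,\dots,d_{n-1})$ by $d_k=\nu^{\lfloor k/e\rfloor}$, where $\nu=\alpha^{-be}$. Because $e\mid be$ one has $\alpha^{be}\in\F_q$, so $\nu\in\F_q^{*}$ and $\Delta$ is a genuine monomial matrix over $\F_q$. Writing $k=et+r$ with $0\le r<e$ shows $T^{-1}\Delta=\mathrm{diag}(\alpha^{b(k\bmod e)})$, a diagonal whose entries are periodic with period $e$. A short Fourier computation shows that such a period-$e$ diagonal acts on the transform $c\mapsto(c(\alpha^{j}))_j$ by convolution with a vector supported on $\langle n/e\rangle$; since $\langle n/e\rangle$ stabilizes $A_1$, this convolution preserves the set of transforms vanishing on $A_1$, hence preserves $C_1$ and its extension over $K$. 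Consequently $\Delta$ and $T$ have the same effect on defining sets, so $\Delta$ sends the cyclic code with defining set $A_1$ over $K$ onto the one with defining set $A_2$. As $\Delta$ is defined over $\F_q$, intersecting with $\F_q^{n}$ gives $C_1\Delta=C_2$, proving monomial and therefore isometric equivalence.

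The main obstacle is precisely this middle manoeuvre: the natural scaling $T$ lives over $K$, and one must show it can be adjusted by an $\F_q$-rational diagonal without changing its action on $C_1$. The hypotheses are exactly what make this possible—the $q$-invariance of $A_2$ forces $b(q-1)$, and the subgroup $\langle n/e\rangle$ it generates, into the stabilizer of $A_1$, which is what allows the period-$e$ correction factor $T^{-1}\Delta$ to be absorbed. I expect the only technical points needing care to be the verifications that $T^{-1}\Delta$ is period-$e$ and that a period-$e$ diagonal preserves every cyclic code whose defining set is $\langle n/e\rangle$-invariant; both are routine discrete-Fourier calculations over $K$.
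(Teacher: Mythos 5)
The paper does not prove this statement at all: Theorem \ref{cyclic2} is quoted verbatim from \cite{aydin2019equivalence}, so there is no in-paper proof to compare against. Judged on its own, your argument is correct and in fact delivers something slightly sharper than ``isometric equivalence'': an explicit diagonal monomial matrix $\Delta$ over $\F_q$ with $C_1\Delta=C_2$. Every step I checked goes through: the identity $(cT)(\alpha^a)=c(\alpha^{a-b})$ does show $T$ carries the $K$-extension of $C_1$ to that of $C_2$; $q$-invariance of $A_2=A_1+b$ together with $qA_1=A_1$ does give $A_1+b(q-1)=A_1$, hence $\langle n/e\rangle$-stability of $A_1$; the entries of $T^{-1}\Delta$ are $\alpha^{b(k\bmod e)}$, which is genuinely $e$-periodic on $\Z/n\Z$ since $e\mid n$; and an $e$-periodic diagonal has Fourier support on $\langle n/e\rangle$, so its action on $\widehat{c}$ is a convolution that preserves vanishing on the $\langle n/e\rangle$-stable set $A_1$. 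Since $\Delta$ is invertible over $\F_q$ and the subfield subcodes of the extended codes are exactly $C_1$ and $C_2$, the descent at the end is also fine. One genuine (but easily repaired) slip: ``because $e\mid be$ one has $\alpha^{be}\in\F_q$'' is a non-justification. The correct reason is that $\alpha^{b(q-1)}$ has order $e$ by definition of $e$, so $\alpha^{be(q-1)}=1$, i.e.\ $\alpha^{be}$ is fixed by the Frobenius map and hence lies in $\F_q^{*}$.

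A remark worth making explicit: your key step never uses the hypothesis $n\mid b|A_1|(q-1)$ (you only record that it is equivalent to $e\mid |A_1|$ and then set it aside). That is not a gap --- the hypothesis is in fact automatic once $\phi_b(A_1)=A_2$ with both sets unions of $q$-cyclotomic cosets, as the paper itself proves in the forward direction of Theorem \ref{affine conjecture} via $g_2(0)=\alpha^{b|A_1|}g_1(0)\in\F_q^{*}$ --- but you should say so, since as written it looks like an unused assumption. Your route (correcting the $K$-rational scaling $T$ by an $\F_q$-rational periodic diagonal) is a genuinely constructive alternative to the generator-polynomial manipulation of \cite{aydin2019equivalence} that underlies the paper's equation \eqref{shift generator}; what it buys is the explicit monomial matrix realizing the equivalence, which is directly in the spirit of the paper's Lemmas \ref{monomial 1}--\ref{monomial 2}.
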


The results of Theorems $\ref{multiplier equivalent cyclic}$ and $\ref{cyclic2}$ can be combined to state a more general condition for isometric equivalence of linear cyclic codes.

\begin{corollary}\label{cyclic3}
Let $C_1$ and $C_2$ be two length $n$ linear cyclic codes over $\F_q$ with defining sets $A_1$ and $A_2$, respectively. Let $\theta(x)=(ex+b) \bmod n$ be a map on $\mathbb{Z}/n\mathbb{Z}$, where $e$ and $b$ are positive integers such that $\gcd(e,n)=1$ and $n$ divides $b |A_1|(q-1)$. If $\theta(A_1)=A_2$, then $C_1$ and $C_2$ are isometrically equivalent. 
\end{corollary}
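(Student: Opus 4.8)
The plan is to factor the affine map $\theta$ as a composition of a multiplier and a shift map, and then chain together the two equivalence results already available, namely Theorem~\ref{multiplier equivalent cyclic} and Theorem~\ref{cyclic2}, through an intermediate cyclic code. Concretely, since $\gcd(e,n)=1$ the multiplier $\mu_e(x)=(ex)\bmod n$ is well defined, and one checks directly that $\theta=\phi_b\circ\mu_e$, i.e. $\phi_b(\mu_e(x))=(ex+b)\bmod n=\theta(x)$ for every $x\in\Z/n\Z$.

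First I would introduce the cyclic code $C'$ of length $n$ over $\F_q$ whose defining set is $A'=\mu_e(A_1)$. By Theorem~\ref{multiplier equivalent cyclic}, the hypothesis $\gcd(e,n)=1$ together with $\mu_e(A_1)=A'$ gives that $C_1$ and $C'$ are permutation equivalent; in particular they are isometrically equivalent, since a coordinate permutation preserves Hamming weight (equivalently, by Theorem~\ref{isometric=monomial}, permutation equivalence is a special case of monomial, hence isometric, equivalence).

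Next I would apply Theorem~\ref{cyclic2} to the pair $C'$, $C_2$ with the shift map $\phi_b$. The composition identity gives $\phi_b(A')=\phi_b(\mu_e(A_1))=\theta(A_1)=A_2$, so the defining set of $C_2$ is indeed the $\phi_b$-image of that of $C'$. The one divisibility check needed is that $n$ divides $b\,|A'|\,(q-1)$; here the key observation is that $\mu_e$ is a bijection of $\Z/n\Z$ (again because $\gcd(e,n)=1$), so $|A'|=|\mu_e(A_1)|=|A_1|$, and the hypothesis $n\mid b\,|A_1|\,(q-1)$ transfers verbatim. Theorem~\ref{cyclic2} then yields that $C'$ and $C_2$ are isometrically equivalent.

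Finally, isometric equivalence is an equivalence relation---the composition of two Hamming isometries is again a Hamming isometry---so chaining $C_1\sim C'\sim C_2$ produces an isometry from $C_1$ onto $C_2$, which is the desired conclusion. I do not anticipate a genuine obstacle here: the argument is essentially bookkeeping for the factorization $\theta=\phi_b\circ\mu_e$. The only point requiring a moment's care is the cardinality identity $|A'|=|A_1|$, which is exactly what lets the divisibility hypothesis of Corollary~\ref{cyclic3} be reused unchanged when invoking Theorem~\ref{cyclic2}.
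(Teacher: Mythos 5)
Your proof is correct and follows exactly the route the paper intends: the corollary is stated as a direct combination of Theorem~\ref{multiplier equivalent cyclic} and Theorem~\ref{cyclic2}, obtained by factoring $\theta=\phi_b\circ\mu_e$ and passing through the intermediate code with defining set $\mu_e(A_1)$, just as you do. The cardinality observation $|\mu_e(A_1)|=|A_1|$ that transfers the divisibility hypothesis is the only point of care, and you handle it correctly.
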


Two isometrically equivalent linear cyclic codes under the action of the map $\theta$ of Corollary \ref{cyclic3} will be called {\em affine equivalent}. 

%
%
%
%
%
%
%
%
%

\subsection{Novel sufficient conditions for equivalence of linear cyclic codes}\label{section3.3}

This section studies sufficient conditions for monomial and permutation equivalence of linear cyclic codes over various finite fields. 
Our conditions are easy to check, and they help to classify all the monomially equivalent cyclic codes of certain lengths. 
Moreover, our new conditions enable us to prove monomial
or permutation equivalence of pairs of codes in cases that
can not be resolved by previously mentioned results. 

Recall that $\alpha$ is a fixed primitive $n$-th root of unity in the field $K$.
We define the vector $v^s\in K^n$ to be 
$v^s=(1 , \alpha^s, \alpha^{2s}, \ldots , \alpha^{(n-1)s})$ for any $0 \le s \le n-1$. We denote the entry in the $j$-th column of a length $n$ vector $v$ by $v_j$ for each $0\le j \le n-1$.

The monomial and permutation equivalence of linear codes can also be defined in terms of the generalized parity check matrices. The following lemma is elementary but we record it for further use in this paper.
  
 \begin{lemma}\label{generalized equivalence}
 Let $C_1$ and $C_2$ be two linear codes of length $n$ over $\F_q$, and let $H$ be a generalized parity check matrix for $C_1$.
Then $C_2=C_1PD$ if and only if $HPD^{-1}$ is a generalized parity check matrix for $C_2$, where $P$ is a permutation matrix and $D$ is a non-singular diagonal matrix defined over $\F_q$.
 \end{lemma}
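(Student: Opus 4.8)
The plan is to prove both directions at once by computing, over $\F_q$, the solution set $S=\{c\in\F_q^{n}:(HPD^{-1})c^{T}=0\}$ of the homogeneous system associated with $HPD^{-1}$, and showing that $S=C_1PD$ independently of any assumption on $C_2$. By Definition \ref{GEN}, the statement that $HPD^{-1}$ is a generalized parity check matrix for $C_2$ is exactly the statement $S=C_2$. Hence, once $S=C_1PD$ has been established, the asserted biconditional reduces to the trivial equivalence $S=C_2\Leftrightarrow C_1PD=C_2$, and both implications follow simultaneously. Write $M=PD$ for the monomial matrix; since $P$ and $D$ have entries in $\F_q$, so does $M^{-1}$, and the map $c\mapsto cM^{-1}$ is a bijection of $\F_q^{n}$ onto itself.

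The key step is a purely algebraic identity relating $HPD^{-1}$ to $H$. Because $P$ is a permutation matrix we have $P^{-1}=P^{T}$, and because $D$ is diagonal it is symmetric; hence $(M^{-1})^{T}=(D^{-1}P^{-1})^{T}=(P^{-1})^{T}(D^{-1})^{T}=PD^{-1}$. Consequently, for every $c\in\F_q^{n}$,
\begin{equation*}
(HPD^{-1})c^{T}=H(M^{-1})^{T}c^{T}=H\,(cM^{-1})^{T}.
\end{equation*}
Now I would chain the equivalences: $(HPD^{-1})c^{T}=0$ if and only if $H(cM^{-1})^{T}=0$, which by the generalized parity check property of $H$ (applied to the vector $cM^{-1}$) holds if and only if $cM^{-1}\in C_1$, which in turn holds if and only if $c\in C_1M=C_1PD$. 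This yields $S=C_1PD$, completing the argument as described above.

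I do not expect any serious obstacle here; the content is essentially bookkeeping. The only points requiring care are the transpose--inverse manipulation $(M^{-1})^{T}=PD^{-1}$, where one must invoke both the orthogonality $P^{-1}=P^{T}$ of the permutation matrix and the symmetry of the diagonal matrix $D$, and the observation that $cM^{-1}$ again lies in $\F_q^{n}$ so that the defining property of $H$ (stated only for vectors in $\F_q^{n}$) is applicable; this last point uses precisely that $P$ and $D$, and hence $M^{-1}$, are defined over $\F_q$ rather than only over the extension field carrying $H$.
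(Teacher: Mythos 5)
Your proof is correct: the identity $(M^{-1})^{T}=PD^{-1}$ is verified properly, the chain of equivalences showing that the solution set of $HPD^{-1}$ over $\F_q^{n}$ equals $C_1PD$ is sound, and the biconditional then follows at once; the remark that $M^{-1}$ has entries in $\F_q$, so that the defining property of $H$ applies to $cM^{-1}$, is exactly the point that needs checking. The paper itself offers no proof of this lemma (it is declared elementary and merely recorded), so there is nothing to compare against; your argument is a complete and appropriate filling-in of the omitted details.
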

 
 \begin{remark}
 In Sections $\ref{S:n1}$, $\ref{S:n1}$, and $\ref{S:n1}$, we provide some intermediate lemmas before stating our main result. There might be easier proofs for the equivalence of codes in these lemmas by applying the results of Section \ref{cyclic intro}. However, our main goal is to prove these lemmas using the action of specific permutation or monomial matrices. 
After proving our main results, namely Theorems $\ref{monomial action}$, $\ref{new permutation}$, and $\ref{F4 permutation}$, 
we provide evidence
showing that they are not of the 
types of results discussed in Section \ref{cyclic intro}.
 \end{remark}
\subsubsection{New sufficient condition for monomial equivalence of linear cyclic codes}\label{S:n1}

Let $n$ be a positive integer divisible by $8$ and $\F_q$ be a finite field of odd characteristic. We define the permutation $\sigma$ on $\Z/n\Z$ by
$$\sigma(i)=\begin{cases} i & \text{if}\ i \equiv 0 \ \text{or} \ 1 \pmod 4 \\ (i+\frac{n}{2}) \mod n & \text{otherwise.} \end{cases}$$ 
Let $P_\sigma$ be a permutation matrix corresponding to the action of $\sigma$ and $D$ be a diagonal matrix defined by 
$$D_{(i,i)}=\begin{cases} -1 & \text{if}\ i \equiv 1 \ \text{or} \ 2 \pmod 4 \\1 & \text{otherwise} \end{cases}$$ for each $0\le i \le n-1$. 
Let $\{s_i: 0 \le i \le n-1\}$ be the standard basis of $\F_q^n$. Then 
$$s_i(P_\sigma D)=\begin{cases}
 s_i & \text{if}\ i \equiv 0 \pmod 4\\
 -s_i & \text{if}\ i \equiv 1 \pmod 4\\
  -s_{(i+\frac{n}{2})} & \text{if}\ i \equiv 2 \pmod 4\\
  s_{(i+\frac{n}{2})} & \text{if}\ i \equiv 3 \pmod{4}.
\end{cases}$$
Since the generalized multipliers are not defined over $\Z/ n\Z$ for an even integer $n$, the action of $P_\sigma D$ cannot be of a generalized multiplier type. Later we also show that if two cyclic codes are monomially equivalent under the action of $P_\sigma D$, then they are not necessarily affine equivalent.  
 Before stating our main result, we have two intermediate lemmas.
Recall that $Z(a)$
denotes the $q$-cyclotomic coset modulo $n$ containing $a$. 

\begin{lemma}\label{monomial 1}
Suppose that $q$ is odd.
Let $n=8k$ for some positive integer $k$ and $0\le a \le n-1$ be an odd integer. Then linear cyclic codes over $\F_q$ of length $n$ with the defining sets $Z(a)$ and $Z(\frac{n}{2}+a)$ 
are monomially equivalent under the action of $P_\sigma D$.
\end{lemma}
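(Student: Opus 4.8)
The plan is to show directly that the monomial map $P_\sigma D$ sends a generalized parity check matrix of the code with defining set $Z(a)$ to a generalized parity check matrix of the code with defining set $Z(\frac{n}{2}+a)$, and then invoke Lemma~\ref{generalized equivalence}. Concretely, by \eqref{PC matrix of linear cyclic codes} the rows of a generalized parity check matrix $H$ for the first code are exactly the vectors $v^s=(1,\alpha^s,\ldots,\alpha^{(n-1)s})$ as $s$ ranges over $Z(a)$. The key computation is to understand the single vector $v^s(P_\sigma D)$ and show that it is a scalar multiple of $v^{s'}$ for an appropriate exponent $s'$; if the map $s\mapsto s'$ carries $Z(a)$ onto $Z(\frac{n}{2}+a)$, we are done, because $H(P_\sigma D)^{-1}$ will then be (up to row rescaling) the matrix of the second code.

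First I would compute the entries of $v^s P_\sigma$ using the displayed action of $P_\sigma D$ on the standard basis, being careful that acting on a row vector $v$ by the permutation matrix reindexes coordinates by $\sigma^{-1}$. For a column index $j$, the contribution depends only on $j \bmod 4$. Writing out the four cases and collecting the diagonal signs from $D$, the entry of $v^s(P_\sigma D)$ in column $j$ should come out to $\pm\alpha^{js}$ or $\pm\alpha^{(j+\frac{n}{2})s}$ according to the residue of $j$ modulo $4$. The heart of the argument is that $\alpha^{n/2}=-1$ (since $\alpha$ is a primitive $n$-th root of unity), so the factor $\alpha^{(j+\frac n2)s}=(-1)^s\alpha^{js}$; because $a$, and hence every element $s$ of $Z(a)$, is odd, this equals $-\alpha^{js}$. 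The plan is to feed this identity into the four cases and verify that the signs align so that every column entry becomes a fixed scalar times $\alpha^{j(s+\frac n2)}$, i.e. $v^s(P_\sigma D)=\lambda\, v^{s+n/2}$ for some scalar $\lambda\in\{\pm1\}$ independent of $j$.

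The main obstacle is precisely this sign bookkeeping across the four residue classes of $j$: one must confirm that the sign introduced by $D$ in each class, combined with the $(-1)^s=-1$ coming from the coordinate swap $j\leftrightarrow j+\frac n2$ in the classes $j\equiv 2,3\pmod 4$, produces a consistent global scalar rather than a $j$-dependent sign (which would not give a scalar multiple of a single $v^{s'}$). This is where the specific choices in the definitions of $\sigma$ and $D$, together with the hypotheses $8\mid n$ and $a$ odd, are used; the divisibility by $8$ guarantees $\frac n2\equiv 0\pmod 4$, so that adding $\frac n2$ preserves residues modulo $4$ and the case analysis is self-consistent.

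Once $v^s(P_\sigma D)=\lambda\,v^{s+n/2}$ is established for each $s$, the remaining steps are routine. Since $s\mapsto s+\frac n2 \pmod n$ maps the cyclotomic coset $Z(a)$ bijectively onto $Z(a+\frac n2)$ (the map commutes with multiplication by $q$ modulo $n$ because $q(s+\frac n2)=qs+\frac{qn}{2}\equiv qs+\frac n2\pmod n$, using that $q$ is odd), the rows $\{v^s:s\in Z(a)\}$ of $H$ are transformed, up to nonzero scalars, into the rows $\{v^{s'}:s'\in Z(\frac n2+a)\}$. Rescaling rows does not change the row space or the property of being a generalized parity check matrix, so $H(P_\sigma D)$ is a generalized parity check matrix for the code with defining set $Z(\frac n2+a)$. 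Applying Lemma~\ref{generalized equivalence} (with the roles of $P$ and $D$ read off from $P_\sigma D$) yields the claimed monomial equivalence of the two codes under $P_\sigma D$, completing the proof.
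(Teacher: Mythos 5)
Your proposal is correct and follows essentially the same route as the paper: compute $v^s(P_\sigma D)$ column by column according to $j \bmod 4$, use $\alpha^{n/2}=-1$ together with the oddness of every element of $Z(a)$ and of every relevant column index, and conclude via Lemma~\ref{generalized equivalence} after noting that $s\mapsto s+\tfrac n2$ maps $Z(a)$ onto $Z(a+\tfrac n2)$ because $q$ is odd. The only detail left unverified in your sketch, the four-case sign check, does work out in the paper with global scalar $\lambda=1$ (so $v^s(P_\sigma D)=v^{s+n/2}$ exactly), which is slightly cleaner than the $\lambda\in\{\pm1\}$ you allow for, though rescaling rows would be harmless anyway.
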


\begin{proof}
Let $C_1$ and $C_2$ be the linear cyclic codes of length $n$ over $\F_q$ with the defining sets $Z(a)$ and $Z(\frac{n}{2}+a)$, respectively.
First, note that since $q$ is odd 
we have
$q(\frac{n}{2}+a)\equiv \frac{n}{2}+qa \pmod n$. Thus there is a one-to-one correspondence between the elements of $Z(a)$ and $Z(\frac{n}{2}+a)$ given by the shift map $\phi_{\frac{n}{2}}$. Moreover, both $Z(a)$ and $Z(\frac{n}{2}+a)$ consist of only odd values.

Let $H_1$  be a generalized parity check matrix for $C_1$  in the form of $(\ref{PC matrix of linear cyclic codes})$ and  $b$ be an arbitrary element of $Z(a)$. The vector $v^b$ is a row of $H_1$ and we show that $v^bP_\sigma D=v^{b'}$, where $b'=(\frac{n}{2}+b) \mod n$ is an element of $Z(\frac{n}{2}+a)$. 
This shows that $H_1P_\sigma D$ is a generalized parity check matrix for $C_2$.
In our computations, we use the fact that $\alpha^{\frac{n}{2}}=-1$.

Let $0 \le i \le n-1$. If $i\equiv 1 \pmod 4$, then $(v^bP_\sigma D)_i=\alpha^{ib+\frac{n}{2}}=\alpha^{i(\frac{n}{2}+b)}=\alpha^{ib'}$. If $i\equiv 3 \pmod 4$, then $(v^bP_\sigma D)_i=\alpha^{(i+\frac{n}{2})b}=\alpha^{i(\frac{n}{2}+b)}=\alpha^{ib'}$. 
If $i\equiv 0 \pmod 4$, then $(v^bP_\sigma D)_i=\alpha^{ib}=\alpha^{i(\frac{n}{2}+b)}=\alpha^{ib'}$. If $i \equiv 2\pmod 4$, then $(v^bP_\sigma D)_i=\alpha^{(i+\frac{n}{2})b+\frac{n}{2}}=\alpha^{ib}=\alpha^{i(\frac{n}{2}+b)}=\alpha^{ib'}$. 
Hence, for any $i$, we get $(v^bP_\sigma D)_i=(v^{b'})_i$.

Thus $H_1P_\sigma D$ is a generalized parity check matrix for $C_2$. Therefore, by Lemma $\ref{generalized equivalence}$, the codes $C_1$ and $C_2$ are monomially equivalent under the action of $P_\sigma D$.
\end{proof}

Since $8 \mid n$, the $q$-cyclotomic cosets of $0$ and $\frac{n}{2}$ are both singletons. Let $q\equiv 1 \pmod 4$. Then the sets $\{\frac{n}{4}\}$ and $\{\frac{3n}{4}\}$ are two other singleton $q$-cyclotomic cosets. If $q\equiv 3 \pmod 4$, then the set $\{\frac{n}{4}, \frac{3n}{4}\}$ is a $q$-cyclotomic coset. 

\begin{lemma}\label{monomial 2}
Let $n=8k$ for some positive integer $k$ and $A_1=\{0,\frac{n}{2}\}$ and $A_2=\{\frac{n}{4},\frac{3n}{4}\}$ be the defining sets of length $n$ linear cyclic codes $C_1$ and $C_2$ over $\F_q$, respectively. Then $C_1$ and $C_2$ are monomially equivalent under the action of $P_\sigma D$. 
\end{lemma}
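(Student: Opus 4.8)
The plan is to follow the same strategy as in the proof of Lemma \ref{monomial 1}: build generalized parity check matrices for $C_1$ and $C_2$ of the form \eqref{PC matrix of linear cyclic codes}, compute the effect of $P_\sigma D$ on the rows of the first one, and then invoke Lemma \ref{generalized equivalence}. Concretely, let $H_1$ be the matrix whose rows are $v^0$ and $v^{\frac{n}{2}}$, and let $H_2$ be the matrix whose rows are $v^{\frac{n}{4}}$ and $v^{\frac{3n}{4}}$; these are generalized parity check matrices for $C_1$ and $C_2$ respectively, since both $A_1$ and $A_2$ are unions of $q$-cyclotomic cosets by the discussion preceding the lemma. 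I would aim to show that $H_1P_\sigma D$ is again a generalized parity check matrix for $C_2$; since $D^{-1}=D$, Lemma \ref{generalized equivalence} then yields $C_2=C_1P_\sigma D$, which is exactly the asserted monomial equivalence.

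The new feature compared with Lemma \ref{monomial 1} is that the relevant exponents $0$ and $\frac{n}{2}$ are even, so the clean rule $v^bP_\sigma D=v^{b+\frac{n}{2}}$ no longer applies. Instead I would reuse the entrywise formula established in the proof of Lemma \ref{monomial 1}, namely that $(v^bP_\sigma D)_j$ equals $\alpha^{jb}$, $-\alpha^{jb}$, $-\alpha^{(j+\frac{n}{2})b}$, $\alpha^{(j+\frac{n}{2})b}$ according as $j\equiv 0,1,2,3\pmod 4$, and evaluate it at $b=0$ and $b=\frac{n}{2}$. Using $\alpha^{\frac{n}{2}}=-1$ and the fact that $4\mid n$ forces $\alpha^{\frac{n^2}{4}}=1$, a short calculation gives that $w:=v^0P_\sigma D$ has entries $(1,-1,-1,1)$ and $u:=v^{\frac{n}{2}}P_\sigma D$ has entries $(1,1,-1,-1)$ according to the residue of $j$ modulo $4$. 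Thus both images are $\pm 1$ ``pattern'' vectors that are constant on residue classes modulo $4$.

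There remains the linear-algebra step: to show that $w$ and $u$ span the same subspace of $K^n$ as $v^{\frac{n}{4}}$ and $v^{\frac{3n}{4}}$. Writing $\zeta=\alpha^{\frac{n}{4}}$, a square root of $-1$, the vector $v^{\frac{n}{4}}$ has entries $(1,\zeta,-1,-\zeta)$ and $v^{\frac{3n}{4}}$ has entries $(1,-\zeta,-1,\zeta)$ by residue modulo $4$. I would then record the two identities $w+u=v^{\frac{n}{4}}+v^{\frac{3n}{4}}$ and $w-u=\zeta\,(v^{\frac{n}{4}}-v^{\frac{3n}{4}})$, which are immediate from the entry patterns together with $\zeta^2=-1$. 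Since $q$ is odd, both $2$ and $\zeta$ are invertible in $K$, so these identities show that $\{w,u\}$ and $\{v^{\frac{n}{4}},v^{\frac{3n}{4}}\}$ have the same linear span; as each pair is clearly linearly independent, the two spans coincide and have dimension $2$. Two matrices with the same row space over $K$ have the same right kernel inside $\F_q^n$, so $H_1P_\sigma D$ and $H_2$ define the same code; hence $H_1P_\sigma D$ is a generalized parity check matrix for $C_2$, completing the argument.

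I expect the only genuinely delicate point to be this last step, where, unlike in Lemma \ref{monomial 1}, no single row of $H_1P_\sigma D$ equals a row of $H_2$; rather the two rows must be recombined, which forces the introduction of the fourth root of unity $\zeta$ and the use of $2^{-1}\in\F_q$. Everything else (the entrywise action formula and the verification that $A_1,A_2$ are legitimate defining sets) is routine, and the argument is uniform in the two cases $q\equiv 1$ and $q\equiv 3\pmod 4$ precisely because it takes place over the extension field $K$ rather than over $\F_q$.
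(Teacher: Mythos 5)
Your proof is correct and follows essentially the same route as the paper's: both arguments compute the image of one of the two generalized parity check matrices under $P_\sigma D$, recombine the two rows by adding and subtracting them (using that $2$ and $\alpha^{n/4}$ are invertible in $K$), and conclude via Lemma \ref{generalized equivalence} that the row spaces agree. The only difference is cosmetic — the paper applies $P_\sigma D$ to the matrix with rows $v^{n/4},v^{3n/4}$ and lands in the span of $v^0,v^{n/2}$, whereas you go in the opposite direction; since $P_\sigma D$ is an involution the two versions are interchangeable.
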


\begin{proof}
Since $\alpha^{\frac{n}{2}}=-1$, the code $C_1$ has a parity check matrix in the form 
$$H_1=\begin{bmatrix}
1&1&1&1& \cdots&1 \\
1&-1&1& -1&\cdots&-1\\
\end{bmatrix}.$$
Let 
$$H_2=\begin{bmatrix}
1&\alpha^{\frac{n}{4}}&\alpha^{\frac{2n}{4}}&\alpha^{\frac{3n}{4}}&1& \cdots&\alpha^{\frac{3n}{4}} \\
1&\alpha^{\frac{3n}{4}}&\alpha^{\frac{2n}{4}}&\alpha^{\frac{n}{4}}&1& \cdots&\alpha^{\frac{n}{4}} \\
\end{bmatrix}$$
be a generalized parity check matrix of $C_2$. Next we show that $H_2P_\sigma D$ and $H_1$ generate the same row space over $K$. First, a straightforward computation shows that

$$H_2P_\sigma D=\begin{bmatrix}
1&\alpha^{\frac{3n}{4}}&1&\alpha^{\frac{3n}{4}}&1& \cdots&\alpha^{\frac{3n}{4}} \\
1&\alpha^{\frac{n}{4}}&1&\alpha^{\frac{n}{4}}&1& \cdots&\alpha^{\frac{n}{4}} \\
\end{bmatrix}=\begin{bmatrix}
1&-\alpha^{\frac{n}{4}}&1&-\alpha^{\frac{n}{4}}&1& \cdots&-\alpha^{\frac{n}{4}} \\
1&\alpha^{\frac{n}{4}}&1&\alpha^{\frac{n}{4}}&1& \cdots&\alpha^{\frac{n}{4}} \\
\end{bmatrix}.$$
Next by adding and subtracting the rows of $H_2P_\sigma D$ we find a basis for the row space of $H_2P_\sigma D$ in the form $B=\{(1,0,1,0,\ldots,0),(0,1,0,1,\ldots,1)\}$. One can easily see that the set $B$ is also a basis for the row space of $H_1$. 
Thus $H_2P_\sigma D$ is also a generalized parity check matrix for $C_1$. Now Lemma $\ref{generalized equivalence}$ implies that the codes $C_1$ and $C_2$ are monomially equivalent under the action of $P_\sigma D$.
\end{proof}

Next, we combine the results of Lemmas $\ref{monomial 1}$ and $\ref{monomial 2}$ and state our main result of this section.

\begin{theorem}\label{monomial action}
Let $\F_q$ be a finite field of odd characteristic and $n=8k$ for some positive integer $k$ such that $\gcd(k,q)=1$.  Let $A$ be a union of $q$-cyclotomic cosets modulo $n$ with odd coset leaders. Then the length $n$ linear cyclic codes with the defining sets $A_1=A\cup \{\frac{n}{4},\frac{3n}{4}\} $ and $A_2=\{(a+\frac{n}{2}) \mod n:a \in A\} \cup \{0,\frac{n}{2}\} $ over $\F_q$ are monomially equivalent under the action of $P_\sigma D$. 
\end{theorem}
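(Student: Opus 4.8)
The plan is to build a generalized parity check matrix for the code with defining set $A_1$ as a stack of two blocks, transform it by the single monomial matrix $P_\sigma D$, and apply Lemma \ref{monomial 1} to the first block and Lemma \ref{monomial 2} to the second; the crux is that both lemmas are realized by the \emph{same} matrix $P_\sigma D$. For the setup, write $A=\bigcup_j Z(a_j)$ as a disjoint union of $q$-cyclotomic cosets with odd coset leaders $a_j$. Since $q$ is odd and $\tfrac n2=4k$ is even, the shift $\phi_{n/2}$ preserves parity, so every element of $A$ and of $A+\tfrac n2:=\{(a+\tfrac n2)\bmod n:a\in A\}$ is odd; in particular these two sets are disjoint from the even set $\{0,\tfrac n2,\tfrac n4,\tfrac{3n}4\}$, and, as recorded just before Lemma \ref{monomial 2}, each of $\{0,\tfrac n2\}$ and $\{\tfrac n4,\tfrac{3n}4\}$ is itself a union of $q$-cyclotomic cosets. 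Hence $A_1$ and $A_2$ are bona fide defining sets with $|A_1|=|A_2|=|A|+2$, and $A_2=(A+\tfrac n2)\cup\{0,\tfrac n2\}$.

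Next I would let $C_1,C_2$ be the cyclic codes with defining sets $A_1,A_2$, and take the generalized parity check matrix $H$ for $C_1$ in the form (\ref{PC matrix of linear cyclic codes}), partitioned into the block $H_A$ of rows $v^b$ with $b\in A$ and the block consisting of the two rows $v^{n/4},v^{3n/4}$. Because right multiplication acts rowwise, $HP_\sigma D$ is the stack of $H_AP_\sigma D$ and of $v^{n/4}P_\sigma D,\,v^{3n/4}P_\sigma D$. The row identity $v^bP_\sigma D=v^{(b+n/2)\bmod n}$ proved inside Lemma \ref{monomial 1} is valid for every odd $b$, hence for every $b\in A$; so $H_AP_\sigma D$ is exactly the collection of rows $v^{(a+n/2)\bmod n}$, $a\in A$, namely the rows attached to $A+\tfrac n2$. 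By Lemma \ref{monomial 2} the two remaining transformed rows span, over $K$, the same row space as the rows $v^0,v^{n/2}$ attached to $\{0,\tfrac n2\}$.

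Collecting both blocks, the $K$-row space of $HP_\sigma D$ equals the $K$-row space of the matrix $H_2$ whose rows are $\{v^{(a+n/2)\bmod n}:a\in A\}\cup\{v^0,v^{n/2}\}$, i.e. the generalized parity check matrix for $C_2$ of the form (\ref{PC matrix of linear cyclic codes}). Two matrices with equal row space over $K$ have the same kernel in $K^n$, hence the same kernel after intersecting with $\F_q^n$; thus $HP_\sigma D\,c^T=0\iff c\in C_2$ for all $c\in\F_q^n$. Moreover $HP_\sigma D$ has full row rank $|A|+2=n-\dim C_2$, since $H$ has independent rows and $P_\sigma D$ is invertible, so $HP_\sigma D$ is a genuine generalized parity check matrix for $C_2$. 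Finally, as $D$ is a $\pm1$ diagonal matrix we have $D^{-1}=D$, so Lemma \ref{generalized equivalence} gives $C_2=C_1P_\sigma D$; that is, $C_1$ and $C_2$ are monomially equivalent under the action of $P_\sigma D$.

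The only subtle step is blending the two lemmas: Lemma \ref{monomial 1} supplies a \emph{row-by-row} identity, whereas Lemma \ref{monomial 2} supplies only a \emph{row-space} identity, so the conclusion must be drawn at the level of $K$-row spaces, after which one invokes that the $\F_q$-solution set of a homogeneous linear system is determined by the $K$-row space of its coefficient matrix. Everything else reduces to parity bookkeeping and the disjointness of the cyclotomic cosets involved.
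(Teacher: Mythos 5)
Your proof is correct and follows essentially the same route as the paper's: take a generalized parity check matrix for $C_1$ in the form (\ref{PC matrix of linear cyclic codes}), apply Lemma \ref{monomial 1} to the rows indexed by $A$ and Lemma \ref{monomial 2} to the rows indexed by $\{\frac{n}{4},\frac{3n}{4}\}$, and conclude with Lemma \ref{generalized equivalence}. The paper states this in two sentences; your version merely makes explicit the details it leaves implicit (the row-by-row versus row-space distinction between the two lemmas, and $D^{-1}=D$).
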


\begin{proof}
Let $C_1$ and $C_2$ be the length $n$ linear cyclic codes with the defining set $A_1$ and $A_2$ over $\F_q$, respectively.
Let $H_{1}$ be a generalized parity check matrix for $C_1$, in the form of $(\ref{PC matrix of linear cyclic codes})$. Then Lemmas $\ref{monomial 1}$ and $\ref{monomial 2}$ imply that $H_1P_\sigma D$ is a generalized parity check matrix for $C_2$. Thus by Lemma $\ref{generalized equivalence}$ the codes $C_1$ and $C_2$ are monomially equivalent.
\end{proof}

Now we present some applications of Theorem $\ref{monomial action}$. First
let us note that if two linear cyclic codes are monomially equivalent by Theorem $\ref{monomial action}$, then they are not necessarily affine equivalent or permutation equivalent by the action of a generalized multiplier. 

\begin{example}\label{counter example}
Let $A_1=\{ 0, 1, 3, 4 \}$ and $A_2=\{ 2, 5, 6, 7 \}$ be the defining sets of linear cyclic codes $C_1$ and $C_2$  over $\F_3$ of length $8$. One can easily verify that there is no bijective affine map between $A_1$ and $A_2$. Hence $C_1$ and $C_2$ are not affine equivalent. 
Note also that $A_1=\{1,3\}\cup \{0,4\}$ and $A_2=\{5,7\} \cup \{2,6\}$ satisfy the conditions of Theorem $\ref{monomial action}$ and therefore $C_1$ and $C_2$ are monomially equivalent over $\F_3$. Moreover, generalized multipliers are only defined on integers modulo an odd prime power. Hence $C_1$ and $C_2$ are not permutation equivalent by the action of generalized multipliers. 

Finally, our computation in Magma \cite{magma} shows that all the monomially equivalent cyclic codes of length $8$ over $\F_3$, $\F_7$, and $\F_{11}$ are either affine equivalent, monomially equivalent by the action of $P_\sigma D$, or a combination of both.
 \end{example}

Another application of Theorem $\ref{monomial action}$ is to check whether two linear cyclic codes are isodual.
A linear code is called {\em isodual} if it is monomially equivalent to its Euclidean dual. Isodual codes could also be defined similarly in terms of other inner products; however, in this work, we only consider Euclidean isodual codes. 
As Example 4.3 of \cite{fan} shows, solely applying affine bijections on the defining sets of linear cyclic codes fails to detect the existence of isodual cyclic codes of length $8$ over $\F_3$. However, by applying the result of Theorem $\ref{monomial action}$, we are able to show the existence of isodual cyclic codes of length $8$ over $\F_3$.

\begin{example}
The $3$-cyclotomic cosets modulo $8$ are $\{0\}$, $\{1,3\}$, $\{2,6\}$, $\{4\}$, and $\{5,7\}$. Let $C$ be a linear cyclic code over $\F_3$ of length $8$ with the defining set $A=\{0,1,3,4\}$. Its Euclidean dual $C^\bot$ has the defining set $A'=(\Z/n\Z)\setminus (-A)=\{1,2,3,6\}$.
Let $b=4$. Then $b$ satisfies the conditions of Theorem $\ref{cyclic2}$ and $\phi_4(A')=\{2,5,6,7\}$. Thus $C^\bot$ is isometrically equivalent to the linear cyclic code $D$ with the defining set $\{2,5,6,7\}$ over $\F_3$. Moreover, as we showed in Example $\ref{counter example}$, the cyclic codes $C$ and $D$ are monomially equivalent. 
Therefore, $C$ and $C^\bot$ are monomially equivalent. This makes $C$ and $C^\bot$ a pair of isodual codes over $\F_3$ of length $8$.
\end{example}

\subsubsection{New sufficient conditions for permutation equivalence of linear cyclic codes}\label{S:n2}

Let $n$ be a positive integer divisible by $8$. We define the permutation $\gamma$ on $\Z/n\Z$ by
\[
\gamma(i)=
\begin{cases} 
i & \text{if}\ i \ \text{is even} \\ 
(i-2) \bmod n &\text{if}\ i \ \text{is odd.}\\
\end{cases}
\] 
 We denote the permutation matrix corresponding to $\gamma$ by $P_\gamma$.
Let $\{s_i: 0 \le i \le n-1\}$ be the standard basis of $\F_q^n$. Then 
$$s_iP_\gamma=\begin{cases}
 s_i & \text{if}\ i \ \text{is even}\\
s_{i-2} & \text{if}\ i \ \text{is odd.}\\
\end{cases}$$
Since the generalized multipliers are not defined over $\Z/ n\Z$ for an even integer $n$, the action of $P_\gamma$ cannot be of a generalized multiplier type.
Our main result of this section relies on the following intermediate lemmas. 
Recall that $\alpha$ is a fixed primitive $n$-th root of unity in the field $K$ and $v^s\in K^n$ is defined by
$v^s=(1 , \alpha^s, \alpha^{2s}, \ldots , \alpha^{(n-1)s})$ for any $0 \le s \le n-1$.

\begin{lemma}\label{trivial action1}
Let $\F_q$ be a finite field of odd characteristic and $n=8k$ for some positive integer $k$ such that $\gcd(k,q)=1$. 
\begin{enumerate}
\item Let $A=\{0\}$ or $A=\{\frac{n}{2}\}$ and $C$ be the length $n$ linear cyclic code over $\F_q$ with the defining set $A$. Then $C=CP_\gamma$.
\item Let $q\equiv 1 \pmod 4$ and $C_1$ and $C_2$ be the length $n$ linear cyclic codes over $\F_q$ with the defining sets $\{\frac{n}{4}\}$ and $\{\frac{3n}{4}\}$, respectively. Then $C_2=C_1P_\gamma$.
\end{enumerate}
\end{lemma}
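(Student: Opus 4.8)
The plan is to invoke Lemma \ref{generalized equivalence} with the diagonal matrix $D$ equal to the identity, which reduces each assertion ``$C_2 = C_1 P_\gamma$'' (and ``$C = C P_\gamma$'' in part (1)) to showing that $H_1 P_\gamma$ is a generalized parity check matrix for the target code, where $H_1$ is a generalized parity check matrix for the source code. The key simplification is that every code in the statement has a \emph{singleton} defining set, so its generalized parity check matrix in the form $(\ref{PC matrix of linear cyclic codes})$ is the single row $v^s$, where $s$ is the unique element of the defining set. Thus the entire proof collapses to computing $v^s P_\gamma$ and identifying it with the appropriate $v^{s'}$.

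First I would record the action of $P_\gamma$ on an arbitrary row $v^s$. Using the convention $v P_\gamma = (v_{\gamma^{-1}(0)},\ldots,v_{\gamma^{-1}(n-1)})$ together with the description of $\gamma$, one finds $\gamma^{-1}(j)=j$ for even $j$ and $\gamma^{-1}(j)=(j+2)\bmod n$ for odd $j$. Applying this to $v^s$, whose $j$-th entry is $\alpha^{js}$, yields $(v^s P_\gamma)_j = \alpha^{js}$ when $j$ is even and $(v^s P_\gamma)_j = \alpha^{(j+2)s} = \alpha^{2s}\alpha^{js}$ when $j$ is odd, with indices read modulo $n$. The only further arithmetic inputs needed are $\alpha^{n/2} = -1$ and the reductions of $\alpha^{2s}$ and $\alpha^{jn/2}$ for the specific values $s \in \{0, \tfrac{n}{2}, \tfrac{n}{4}\}$.

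For part (1), taking $s=0$ or $s=\tfrac{n}{2}$ gives $\alpha^{2s}=1$, so the odd-index entries of $v^s P_\gamma$ agree with those of $v^s$; hence $v^s P_\gamma = v^s$, and Lemma \ref{generalized equivalence} yields $C = C P_\gamma$. For part (2), I would take $s = \tfrac{n}{4}$, so that $\alpha^{2s} = \alpha^{n/2} = -1$, and verify $v^{n/4} P_\gamma = v^{3n/4}$ by treating the two parity classes of $j$ separately: for even $j$ one uses $\alpha^{jn/2}=1$ to get $\alpha^{jn/4} = \alpha^{3jn/4}$, while for odd $j$ one uses $\alpha^{jn/2} = \alpha^{n/2} = -1$ (valid precisely because $j$ is odd) so that the extra factor $\alpha^{2s}=-1$ converts the exponent $jn/4$ into $3jn/4$. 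The hypothesis $q \equiv 1 \pmod 4$ is exactly what guarantees that $\{\tfrac{n}{4}\}$ and $\{\tfrac{3n}{4}\}$ are genuine singleton $q$-cyclotomic cosets, so these are legitimate defining sets.

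I do not expect a deep obstacle; the content is entirely bookkeeping. The one point demanding care is tracking the parity of $j$ alongside the reductions modulo $n$ of $jn/4$ and $jn/2$, since the sign $\alpha^{n/2} = -1$ enters differently for even and odd $j$. It is exactly this parity-dependent sign that produces $\alpha^{2s}=-1$ in part (2) and thereby effects the passage from the defining set $\{\tfrac{n}{4}\}$ to $\{\tfrac{3n}{4}\}$, so the sign computation is the part I would double-check most carefully.
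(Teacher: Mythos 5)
Your proposal is correct and follows essentially the same route as the paper: reduce to the single-row generalized parity check matrices $v^0$, $v^{n/2}$, $v^{n/4}$, compute $v^sP_\gamma$ entrywise using $\alpha^{n/2}=-1$, and conclude via Lemma \ref{generalized equivalence}. The parity-dependent sign analysis you flag is exactly the content of the paper's (terser) computation, and your observation that $q\equiv 1\pmod 4$ is what makes $\{\tfrac{n}{4}\}$ and $\{\tfrac{3n}{4}\}$ singleton cyclotomic cosets also appears there.
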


\begin{proof}
If $A=\{0\}$, then the all-ones vector $v^0=(1,1,\ldots,1)$ is a parity check matrix for $C$. If $A=\{\frac{n}{2}\}$, then  $v^{\frac{n}{2}}=(1, -1,1, -1, \ldots, -1 )$ is a parity check matrix for $C$. A straightforward computation shows that $v^0P_\gamma=v^0$ and $v^{\frac{n}{2}}P_\gamma=v^{\frac{n}{2}}$. Hence $C=CP_\gamma$.

If $q\equiv 1 \pmod 4$, then $\{\frac{n}{4}\}$ and $\{\frac{3n}{4}\}$ are singleton $q$-cyclotomic cosets modulo $n$. 
The code $C_1$ has a parity check matrix in the form $v^{{\frac{n}{4}}}=(1,\alpha^{{\frac{n}{4}}}, \alpha^{{\frac{n}{2}}},  \alpha^{{\frac{3n}{4}}}, 1 , \alpha^{{\frac{n}{4}}}, \ldots , \alpha^{{\frac{3n}{4}}} )$. Moreover, $v^{{\frac{n}{4}}}P_\gamma=v^{{\frac{3n}{4}}}$. 
Therefore, by Lemma $\ref{generalized equivalence}$, we have $C_2=C_1P_\gamma$.
\end{proof}

\begin{lemma}\label{trivial action2}
Let $\F_q$ be a finite field of odd characteristic and $n=8k$ for some positive integer $k$ such that $\gcd(k,q)=1$. 
 Let $C$ be a length $n$ linear cyclic code over $\F_q$ with the defining set $A=Z(a)\cup Z(a+\frac{n}{2})$ for some $a \in \Z/n\Z$. Then $C=CP_\gamma$.
\end{lemma}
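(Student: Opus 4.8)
The plan is to mimic the proofs of Lemmas \ref{monomial 1} and \ref{monomial 2}: take the generalized parity check matrix $H$ for $C$ in the Vandermonde form $(\ref{PC matrix of linear cyclic codes})$, whose rows are the vectors $v^b$ with $b$ ranging over $A=Z(a)\cup Z(a+\frac n2)$, and show that $HP_\gamma$ has the same row space over $K$ as $H$. Once that is established, $HP_\gamma$ is again a generalized parity check matrix for $C$, and Lemma \ref{generalized equivalence} (applied with $D$ equal to the identity) will yield $C=CP_\gamma$.

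First I would record the key closure property of the defining set. Since $q$ is odd, $q^j$ is odd for all $j$, so $(a+\frac n2)q^j \equiv aq^j+\frac n2 \pmod n$; thus the shift $\phi_{n/2}$ carries $Z(a)$ bijectively onto $Z(a+\frac n2)$, and $A$ is invariant under $x\mapsto (x+\frac n2)\bmod n$. Consequently, for every row $v^b$ of $H$ the vector $v^{b+n/2}$ is again a row of $H$. I would also record the two facts used throughout: $\alpha^{n/2}=-1$, so that $(v^{b+n/2})_k=(-1)^k\alpha^{kb}$, and that the action of $P_\gamma$ on an arbitrary row vector is $(vP_\gamma)_k=v_k$ for $k$ even and $(vP_\gamma)_k=v_{k+2}$ for $k$ odd.

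The heart of the argument is the identity
$$v^b P_\gamma=\tfrac{1+\alpha^{2b}}{2}\,v^b+\tfrac{1-\alpha^{2b}}{2}\,v^{b+n/2},$$
which I would verify coordinatewise: applying the formula for $P_\gamma$ gives $(v^bP_\gamma)_k=\alpha^{kb}$ on even coordinates and $(v^bP_\gamma)_k=\alpha^{(k+2)b}=\alpha^{2b}\alpha^{kb}$ on odd coordinates, while the right-hand side evaluates to $\alpha^{kb}$ on even coordinates and to $\alpha^{2b}\alpha^{kb}$ on odd coordinates (here $2$ is invertible since $\Char\F_q$ is odd). The conceptual point is that $v^b$ and $v^{b+n/2}$ agree on even coordinates and differ by a sign on odd coordinates, so their $K$-span consists exactly of the vectors equal to $\nu\alpha^{kb}$ on even coordinates and $\omega\alpha^{kb}$ on odd coordinates for arbitrary scalars $\nu,\omega$; the permuted row $v^bP_\gamma$ is precisely such a vector, with $\nu=1$ and $\omega=\alpha^{2b}$.

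Finally, the identity shows that every row of $HP_\gamma$ lies in the row space of $H$, so the row space of $HP_\gamma$ is contained in that of $H$; since $P_\gamma$ is invertible, $HP_\gamma$ and $H$ have equal rank, and the containment therefore forces the two row spaces to coincide. Hence $HP_\gamma$ and $H$ have the same right kernel, so $HP_\gamma$ is a generalized parity check matrix for $C$; on the other hand, by Lemma \ref{generalized equivalence} it is a generalized parity check matrix for $CP_\gamma$, and since a linear code is uniquely determined by any of its generalized parity check matrices we conclude $C=CP_\gamma$. I expect the only delicate step to be the coefficient computation producing the displayed identity; everything else is the closure property of $A$ and a rank count, both routine.
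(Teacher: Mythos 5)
Your proposal is correct and follows essentially the same route as the paper: both arguments rest on the observation that $v^b$ and $v^{b+n/2}$ agree on even coordinates and differ by a sign on odd ones, and that $P_\gamma$ fixes the even part while scaling the odd part by $\alpha^{2b}$. The paper packages this as a change of basis to $v^b\pm v^{b+n/2}$ and checks the two-dimensional spans pairwise (treating $a\in\{0,\frac n2\}$ separately via Lemma \ref{trivial action1}), whereas you express $v^bP_\gamma$ directly as an explicit $K$-linear combination of $v^b$ and $v^{b+n/2}$ and finish with a rank count, which is an equivalent, slightly more uniform presentation.
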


\begin{proof}
If $a=0$ or $a=\frac{n}{2}$, then the proof follows from Lemma $\ref{trivial action1}$. So we assume that $a \neq 0,\frac{n}{2}$. 
Note also that since $q$ is odd 
we have $\phi_{\frac{n}{2}}(Z(a))=Z(a+\frac{n}{2})$. So $Z(a)$ and $Z(a+\frac{n}{2})$ have the same  size.  Let $Z(a)=\{a_1,a_2,\ldots,a_r\}$. 
Next we show that the sets $\{v^{a_i},v^{(a_i+\frac{n}{2})}\}$ and $\{v^{a_i}P_\gamma ,v^{(a_i+\frac{n}{2})}P_\gamma \}$ generate the same vector space over $K$ for each $1\le i \le r$. This shows that if $H$ is a generalized parity check matrix for $C$, in the form of $(\ref{PC matrix of linear cyclic codes})$, then $HP_\gamma$ and $H$ generate the same row space over $K$. Hence Lemma $\ref{generalized equivalence}$ implies that $C=CP_\gamma$. Note that $\alpha^{\frac{n}{2}}=-1$, and
$$v^{a_i}=(1 , \alpha^{a_i} , \alpha^{2a_i} , \alpha^{3a_i} , \ldots , \alpha^{(n-1)a_i} ) $$
 and 
 $$v^{(a_i+\frac{n}{2})}=(1 , \alpha^{a_i+\frac{n}{2}} , \alpha^{2a_i} , \alpha^{3a_i+\frac{n}{2}} , \ldots, \alpha^{(n-1)a_i+\frac{n}{2}} ). $$ 
By adding and subtracting the vectors $v^{a_i}$ and $v^{(a_i+\frac{n}{2})}$ we get  
$$v^{a_i}+v^{(a_i+\frac{n}{2})}=(2 , 0, 2\alpha^{2a_i}, 0 , 2\alpha^{4a_i} , \ldots , 0 )$$
 and 
 $$v^{a_i}-v^{(a_i+\frac{n}{2})}=(0, 2\alpha^{a_i}, 0, 2\alpha^{3a_i}, 0 , 2\alpha^{5a_i} , \ldots , 2\alpha^{(n-1)a_i} ).$$ 
 Clearly the vectors $v^{a_i}+v^{(a_i+\frac{n}{2})}$ and $v^{a_i}-v^{(a_i+\frac{n}{2})}$ are linearly independent over $K$. Moreover, 
$$v^{a_i}P_\gamma =(1 , \alpha^{3a_i}, \alpha^{2a_i}, \alpha^{5a_i}, \ldots, \alpha^{a_i} ) $$ 
and 
$$v^{(a_i+\frac{n}{2})}P_\gamma =(1 , \alpha^{3a_i+\frac{n}{2}}, \alpha^{2a_i}, \alpha^{5a_i+\frac{n}{2}}, \ldots \ \alpha^{a_i+\frac{n}{2}} ). $$ 
Since $P_\gamma$ does not change the entries in even number columns, $v^{a_i}+v^{(a_i+\frac{n}{2})}=v^{a_i}P_\gamma+v^{(a_i+\frac{n}{2})}P_\gamma $. Moreover, 
$$v^{a_i}P_\gamma - v^{(a_i+\frac{n}{2})}P_\gamma =(0, 2\alpha^{3a_i}, 0, 2\alpha^{5a_i}, 0, 2\alpha^{7a_i}, \ldots \ 2\alpha^{a_i} )=\alpha^{2a_i}(v^{a_i}-v_{(a_i+\frac{n}{2})}).$$ 
This completes the proof by showing that $\{v^{a_i},v^{(a_i+\frac{n}{2})}\}$ and $\{v^{a_i}P_\gamma ,v^{(a_i+\frac{n}{2})}P_\gamma \}$ generate the same vector space over $K$ for each $1\le i \le r$. 
\end{proof}

Next, we combine the results of the above lemmas and state a sufficient condition for permutation equivalence of linear cyclic codes. This is our main result of this section.

\begin{theorem}\label{new permutation}
Let $n$ be a positive integer divisible by $8$ and $q$ be a prime power such that $q\equiv 1 \pmod 4$ and $\gcd(n,q)=1$. Let $A\subseteq \Z/n\Z$ be a union of $q$-cyclotomic cosets modulo $n$ such that for each $a \in A$ either $a\in\{0,\frac{n}{2}\}$, or $(a+\frac{n}{2}) \bmod n$ is also an element of $A$. Then length $n$ linear cyclic codes with the defining sets $A\cup\{\frac{n}{4}\}$ and $A\cup \{\frac{3n}{4}\}$ over $\F_q$ are permutation equivalent under the action of $P_\gamma$.
\end{theorem}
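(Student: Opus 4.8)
The plan is to work entirely with generalized parity check matrices and to reduce the theorem to the preceding lemmas via Lemma \ref{generalized equivalence}. Write $C_1$ and $C_2$ for the length $n$ cyclic codes with defining sets $A\cup\{\frac{n}{4}\}$ and $A\cup\{\frac{3n}{4}\}$, and let $H_1$ be the generalized parity check matrix for $C_1$ of the form $(\ref{PC matrix of linear cyclic codes})$, whose rows are the vectors $v^t$ with $t\in A\cup\{\frac{n}{4}\}$. Since $P_\gamma$ is a permutation matrix, the diagonal factor $D$ in Lemma \ref{generalized equivalence} is the identity, and that lemma tells us $C_2=C_1P_\gamma$ holds precisely when $H_1P_\gamma$ is a generalized parity check matrix for $C_2$. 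Right multiplication by $P_\gamma$ distributes over linear combinations of the rows of $H_1$, so the row space of $H_1P_\gamma$ equals $\Span\{v^tP_\gamma : t\in A\cup\{\frac n4\}\}$; hence it suffices to show this coincides with $\Span\{v^t : t\in A\cup\{\frac{3n}4\}\}$, the row space of a generalized parity check matrix for $C_2$.

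The next step is to split the spanning set according to the symmetry hypothesis on $A$. Since $A$ is a union of $q$-cyclotomic cosets and $q$ is odd, the shift $\phi_{n/2}$ satisfies $\phi_{n/2}(Z(a))=Z(a+\frac{n}{2})$, so the hypothesis lets me partition $A$ into blocks of three kinds: the singletons $\{0\}$ and $\{\frac{n}{2}\}$ when they occur, and pairs $Z(a)\cup Z(a+\frac{n}{2})$ with $a\neq 0,\frac{n}{2}$. The span of a union is the sum of the spans, and the blocks are disjoint, so $\Span\{v^tP_\gamma : t\in A\}$ is the sum over blocks $B$ of $\Span\{v^tP_\gamma : t\in B\}$. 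Lemma \ref{trivial action1}(1) handles the singleton blocks ($v^0P_\gamma=v^0$ and $v^{n/2}P_\gamma=v^{n/2}$) and Lemma \ref{trivial action2} handles each pair block ($P_\gamma$ fixes the span of $\{v^t : t\in Z(a)\cup Z(a+\frac{n}{2})\}$); summing, $\Span\{v^tP_\gamma : t\in A\}=\Span\{v^t : t\in A\}$. For the remaining row, Lemma \ref{trivial action1}(2), which uses $q\equiv 1\pmod 4$ to make $\{\frac n4\}$ and $\{\frac{3n}4\}$ singleton cosets, gives $v^{n/4}P_\gamma=v^{3n/4}$. Adding this one-dimensional contribution yields $\Span\{v^tP_\gamma : t\in A\cup\{\frac n4\}\}=\Span\{v^t : t\in A\cup\{\frac{3n}4\}\}$, as required.

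The only points needing care are bookkeeping ones. First, I should note that if $\frac n4\in A$ then by the symmetry hypothesis $\frac{3n}4\in A$ as well, so $A\cup\{\frac n4\}=A\cup\{\frac{3n}4\}=A$ and $C_1=C_2$ trivially; thus I may assume $\frac n4,\frac{3n}4\notin A$, which also keeps the dimension count ($H_1$ contributing $|A|+1$ rows) consistent. Second, I must make sure the block decomposition is exhaustive even in the degenerate case where a single coset is self-paired, i.e. $Z(a)=Z(a+\frac n2)$; here one checks that the coset is then closed under adding $\frac n2$, so it already splits internally into the pairs $\{a_i,a_i+\frac n2\}$ used in the proof of Lemma \ref{trivial action2}, and that argument applies verbatim. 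I expect this self-paired case to be the only genuinely subtle step, since everything else is a direct assembly of the lemmas; once it is dispatched, Lemma \ref{generalized equivalence} closes the proof by declaring $C_1$ and $C_2$ permutation equivalent under the action of $P_\gamma$.
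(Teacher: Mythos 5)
Your proof is correct and follows essentially the same route as the paper's: the paper likewise reduces the theorem to Lemmas \ref{trivial action1} and \ref{trivial action2} combined via Lemma \ref{generalized equivalence}, though it states this in two lines without spelling out the block decomposition of $A$ or the edge cases ($\frac{n}{4}\in A$, and $Z(a)=Z(a+\frac{n}{2})$) that you handle explicitly. Your added detail only makes the argument more complete.
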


\begin{proof}
Let $C_1$ and $C_2$ be length $n$ linear cyclic codes with the defining sets $A\cup\{\frac{n}{4}\}$ and $A\cup \{\frac{3n}{4}\}$ over $\F_q$, respectively. Suppose that the matrix $H$ is a generalized parity check matrix for $C_1$, in the form of $(\ref{PC matrix of linear cyclic codes})$. 
The proof follows from Lemmas $\ref{trivial action1}$ and $\ref{trivial action2}$ as the matrix $HP_\gamma$ is a generalized parity check matrix of $C_2$. Therefore, by Lemma $\ref{generalized equivalence}$, $C_1$ and $C_2$ are permutation equivalent under the action of $P_\gamma$.
\end{proof}

The next example shows that permutation equivalent cyclic codes of Theorem $\ref{new permutation}$ are not necessarily affine equivalent. Moreover, as we mentioned earlier, $P_\gamma$ is not of a generalized multiplier type. 

\begin{example}
Let $n=8$ and $A=\{0,1,5\}$. By Theorem $\ref{new permutation}$, length $8$ linear cyclic codes with the defining sets $A_1=A\cup \{2\}$ and $A_2=A\cup \{6\}$ over $\F_5$ are permutation equivalent. One can easily verify that there is no affine map between the sets $A_1$ and $A_2$.
Moreover, our computation in Magma \cite{magma} shows that all monomially equivalent length $8$ linear cyclic codes over $\F_5$ are either affine equivalent, permutation equivalent under the action of $P_\gamma$, or a combination of both. This classifies all the monomially equivalent cyclic codes of length $8$ over $\F_5$.
\end{example}

\subsubsection{New sufficient conditions for permutation equivalence of linear cyclic codes over $\F_4$}\label{S:n3}

Let $n$ be a positive odd integer divisible by $27$ and $\F_4=\{0,1,\omega,\omega^2\}$ be the field of four elements, where $\omega^2=\omega+1$. For the rest of this section, $\alpha$ denotes a fixed primitive $n$-th root of unity in 
some extension of $\F_4$ denoted $K$, such that $\alpha^{\frac{n}{3}}=\omega$. As before, $v^s\in K^n$ is defined by
$v^s=(1 , \alpha^s, \alpha^{2s}, \ldots , \alpha^{(n-1)s})$ for any $0 \le s \le n-1$. 
We define the permutation $\chi$ on $\Z/n\Z$ by
$$\chi(i)=\begin{cases} i+3 & \text{if}\ i \equiv 0 \ \text{or}\ 4\ \text{or}\ 5 \pmod 9 \\ i-3 & \text{if}\ i \equiv 3 \ \text{or}\ 7 \ \text{or}\ 8 \pmod 9 \\
i & \text{otherwise.}
 \end{cases}$$ 
Note that since $\chi(0) \neq 0$ the permutation $\chi$ is not of  multiplier 
or  generalized multiplier type.
We denote the permutation matrix corresponding to $\chi$ by $P_\chi$. 
Let $\{s_i: 0 \le i \le n-1\}$ be the standard basis of $\F_4^n$. Then 
$$s_iP_\chi= 
\begin{cases} s_{i+3} & \text{if}\ i \equiv 0 \ \text{or}\ 4\ \text{or}\ 5 \pmod 9 \\ s_{i-3} & \text{if}\ i \equiv 3 \ \text{or}\ 7 \ \text{or}\ 8 \pmod 9 \\
i & \text{otherwise.}
 \end{cases}$$
Since $3 \mid n$ there are always three singleton $4$-cyclotomic cosets modulo $n$ namely $\{0\}$, $\{\frac{n}{3}\}$, and $\{\frac{2n}{3}\}$. Before stating our main result of this section, we have three intermediate lemmas.

\begin{lemma}\label{F4 permutation 1}
Let $n$ be a positive odd integer divisible by $27$ and $C$ be a length $n$ linear cyclic code over $\F_4$ with the defining set $\{a\}$, where $a\in \{0, \frac{n}{3},\frac{2n}{3}\}$. Then $C=CP_\chi$.
\end{lemma}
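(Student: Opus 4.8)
The plan is to invoke Lemma~\ref{generalized equivalence} with $C_1=C_2=C$, $P=P_\chi$, and $D$ the identity matrix, which reduces the claim $C=CP_\chi$ to showing that $HP_\chi$ is a generalized parity check matrix for $C$, where $H$ is any such matrix for $C$. Since $a\in\{0,\frac{n}{3},\frac{2n}{3}\}$ is the leader of a singleton $4$-cyclotomic coset, the code $C$ has the single-row generalized parity check matrix $H=v^a$ in the form of~(\ref{PC matrix of linear cyclic codes}), whose row space over $K$ is one-dimensional. Thus it suffices to prove that $v^aP_\chi$ is a nonzero scalar multiple of $v^a$; in fact I expect to obtain the stronger equality $v^aP_\chi=v^a$, after which the reduction closes immediately because $HP_\chi=v^a$ is literally the original matrix $H$.

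The key observation driving the argument is that $\chi$ \emph{preserves residues modulo $3$}: on each residue class modulo $9$ it acts either as the identity or by adding or subtracting $3$, so $\chi(i)\equiv i\pmod 3$ in every case. First I would record this congruence and note that, since $\chi$ is a bijection, it forces $\chi^{-1}(j)\equiv j\pmod 3$ as well. The second ingredient is the periodicity of the parity check row: because $\alpha^{\frac{n}{3}}=\omega$ and $a\in\{0,\frac{n}{3},\frac{2n}{3}\}$, the scalar $\alpha^a$ lies in $\{1,\omega,\omega^2\}$ and therefore has multiplicative order dividing $3$. Consequently the entry $(v^a)_i=(\alpha^a)^i$ depends only on $i\bmod 3$, i.e.\ $v^a$ is constant on each residue class modulo $3$.

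Combining these, I would compute the $j$-th entry of $v^aP_\chi$ using $s_iP_\chi=s_{\chi(i)}$, which gives $(v^aP_\chi)_j=(\alpha^a)^{\chi^{-1}(j)}$. Since $\chi^{-1}(j)\equiv j\pmod 3$ and $\alpha^a$ has order dividing $3$, this equals $(\alpha^a)^j=(v^a)_j$ for every $j$, whence $v^aP_\chi=v^a$; Lemma~\ref{generalized equivalence} then yields $C=CP_\chi$. The verification is routine, and there is no substantial obstacle: the $a=0$ case is immediate since $v^0$ is the all-ones vector, which any permutation fixes, and the remaining two cases are identical in structure. The only points requiring care are tracking the direction of the permutation action so that the computation is phrased correctly through $\chi^{-1}$, and checking that $\chi$ genuinely fixes residues modulo $3$ across all three branches of its definition.
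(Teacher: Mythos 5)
Your proposal is correct and follows essentially the same route as the paper: both reduce to showing $v^aP_\chi=v^a$ via Lemma \ref{generalized equivalence}, using that $(v^a)_j$ depends only on $j\bmod 3$ (since $\alpha^a\in\{1,\omega,\omega^2\}$) and that $\chi$ preserves residues modulo $3$. You merely spell out the step the paper dismisses with ``one can easily see,'' and you handle the $\chi^{-1}$ direction of the permutation action correctly.
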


\begin{proof}
Note that $v^a=(1, \alpha^{a}, \alpha^{2a}, \ldots, \alpha^{(n-1)a} )$ is a parity check matrix for $C$. Moreover, the entry in $j$-th column of $v^a$ is
$$(v^a)_j=\begin{cases} 1 & \text{if}\ a=0 \\ \omega^j & \text{if}\ a=\frac{n}{3}\\
\omega^{2j}& \text{if} \ a=\frac{2n}{3}.
 \end{cases}$$ 
One can easily see that the $j$-th column of $v^a$ remains unchanged after the action of $P_\chi$. Thus $v^a=v^aP_\chi$. Therefore, by Lemma $\ref{generalized equivalence}$, we have $C=CP_\chi$.
\end{proof}
Recall that $Z(a)$ denotes the $4$-cyclotomic coset modulo $n$ containing $a$. 

\begin{lemma}\label{F4 permutation 2}
Let $n$ be a positive odd integer divisible by $27$, and $C_1$ and $C_2$ be length $n$ linear cyclic codes over $\F_4$ with defining sets $Z(\frac{n}{9})$ and $Z(\frac{2n}{9})$, respectively.
Then $C_2=C_1P_\chi$.
\end{lemma}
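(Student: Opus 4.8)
The plan is to follow the pattern of the preceding lemmas: build the canonical generalized parity check matrices $H_1,H_2$ for $C_1,C_2$ in the form \eqref{PC matrix of linear cyclic codes} from their defining sets, and then invoke Lemma~\ref{generalized equivalence} with $D=I$, which reduces the claim $C_2=C_1P_\chi$ to showing that $H_1P_\chi$ is a generalized parity check matrix for $C_2$; equivalently, that $H_1P_\chi$ and $H_2$ have the same row space over $K$. First I would record the two cyclotomic cosets explicitly. Since the multiplicative order of $4$ modulo $9$ is $3$ with $\langle 4\rangle=\{1,4,7\}$, one gets $Z(\tfrac n9)=\{\tfrac n9,\tfrac{4n}9,\tfrac{7n}9\}$ and $Z(\tfrac{2n}9)=\{\tfrac{2n}9,\tfrac{5n}9,\tfrac{8n}9\}$, so $H_1$ has the three rows $v^{bn/9}$ with $b\in\{1,4,7\}$ and $H_2$ has the three rows $v^{cn/9}$ with $c\in\{2,5,8\}$.

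Next I would compute the action of $P_\chi$ on a row. A direct check shows $\chi$ is an involution: inside each length-$9$ block it is the product of the transpositions swapping the residues $0\!\leftrightarrow\!3$, $4\!\leftrightarrow\!7$, $5\!\leftrightarrow\!8$, fixing $1,2,6$. Hence $(v^mP_\chi)_j=(v^m)_{\chi(j)}=\alpha^{m\chi(j)}$. Writing $\zeta=\alpha^{n/9}$, a primitive $9$-th root of unity with $\zeta^3=\omega$, and $\delta(j)=\chi(j)-j\in\{0,\pm3\}$, this becomes $(v^{bn/9}P_\chi)_j=\zeta^{bj}\,\zeta^{b\delta(j)}$. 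The decisive observation is that every $b\in\{1,4,7\}$ satisfies $b\equiv1\pmod3$, so $\zeta^{b\delta(j)}=\zeta^{\delta(j)}=:f(j)$ is the \emph{same} for all three rows; thus $v^{bn/9}P_\chi=v^{bn/9}\odot f$, the entrywise product, with $f(j)\in\{1,\omega,\omega^2\}$ depending only on $j\bmod 9$.

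The heart of the argument is to show each such row lies in $\Span_K\{v^{2n/9},v^{5n/9},v^{8n/9}\}$. Checking the nine residues gives $\delta(j+3)\equiv\delta(j)+3\pmod9$, hence $f(j+3)=\omega f(j)$, which means $f(j)\zeta^{-j}$ is a $3$-periodic function of $j$. Since $3$ is invertible in characteristic $2$ and $\omega$ is a primitive cube root of unity, any $3$-periodic function expands uniquely in the characters $\{1,\omega^{j},\omega^{2j}\}=\{1,\zeta^{3j},\zeta^{6j}\}$; writing $f(j)\zeta^{-j}=e_0+e_1\zeta^{3j}+e_2\zeta^{6j}$ yields
\[
(v^{bn/9}P_\chi)_j=e_0\,\zeta^{(b+1)j}+e_1\,\zeta^{(b+4)j}+e_2\,\zeta^{(b+7)j}.
\]
For each $b\in\{1,4,7\}$ the exponent set $\{b+1,b+4,b+7\}$ is congruent to $\{2,5,8\}$ modulo $9$, so every row of $H_1P_\chi$ is a $K$-linear combination of $v^{2n/9},v^{5n/9},v^{8n/9}$, i.e.\ it belongs to the row space of $H_2$.

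Finally I would close with a dimension count: the rows of $H_1$ are linearly independent (they are distinct rows of an $n\times n$ Vandermonde matrix) and $P_\chi$ is invertible, so the three rows of $H_1P_\chi$ remain independent; being contained in the $3$-dimensional row space of $H_2$, they span it. Hence $H_1P_\chi$ and $H_2$ have the same row space, $H_1P_\chi$ is a generalized parity check matrix for $C_2$, and Lemma~\ref{generalized equivalence} gives $C_2=C_1P_\chi$. I expect the main obstacle to be precisely the middle step: recognizing that the permutation-induced factor $f$ is independent of $b$ (because $\{1,4,7\}\subseteq 1+3\Z$) and that multiplying $v^{bn/9}$ by $f$ rigidly transports its single character $\zeta^{bj}$ into the common span of $\zeta^{2j},\zeta^{5j},\zeta^{8j}$. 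A more computational alternative, in the spirit of Lemma~\ref{monomial 2}, is to write out the three images explicitly and verify that the resulting $3\times3$ change-of-basis matrix over $K$ is nonsingular, but the periodicity viewpoint makes the bookkeeping cleanest.
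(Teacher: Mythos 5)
Your proof is correct. The overall reduction coincides with the paper's---both arguments take $D=I$ in Lemma~\ref{generalized equivalence} and show that $P_\chi$ carries a generalized parity check matrix of $C_1$ to one of $C_2$---but the central computation is done in a genuinely different way. The paper changes basis first: it replaces the rows $v^{n/9},v^{4n/9},v^{7n/9}$ by their sum $u$ and its cyclic shifts $u',u''$, which are sparse vectors supported on one residue class modulo $3$ and $9$-periodic on that support, and then checks by direct inspection that $uP_\chi=\omega x$, $u'P_\chi=\omega x'$, $u''P_\chi=\omega x''$, where $x,x',x''$ is the analogous basis built from $Z(\tfrac{2n}{9})$; no dimension count is needed because the image basis is exhibited explicitly. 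You instead stay in the ``frequency domain'': you observe that $P_\chi$ acts on each Vandermonde row as entrywise multiplication by one and the same function $f(j)=\zeta^{\delta(j)}$ (this uses $b\equiv 1\pmod 3$ for all $b\in\{1,4,7\}$, which is the structural heart of the matter), expand $f(j)\zeta^{-j}$ in the characters $1,\zeta^{3j},\zeta^{6j}$ via the relation $f(j+3)=\omega f(j)$, conclude that the exponent sets $\{b+1,b+4,b+7\}$ all reduce to $\{2,5,8\}\pmod 9$, and finish with a dimension count. The two computations are essentially Fourier duals: the paper's basis $\{u,u',u''\}$ is obtained from the rows of $H_1$ by an invertible DFT-type matrix $\left(\zeta^{-mb}\right)_{m,b}$, so its ``time-domain'' verification and your ``character-expansion'' verification encode the same identity. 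The paper's route yields the explicit scalar $\omega$ relating the two parity check matrices, while yours makes more transparent exactly which features of $\chi$ (the congruence $\delta(j+3)\equiv\delta(j)+3\pmod 9$ and the fact that the defining sets lie in $1+3\Z$ and $2+3\Z$ times $n/9$) drive the equivalence. All steps you use---that $\chi$ is an involution, the independence of $f$ from $b$, the $3$-periodicity of $f(j)\zeta^{-j}$, and the linear independence of distinct Vandermonde rows---check out.
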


\begin{proof}
First, note that $Z(\frac{n}{9})=\{\frac{n}{9},\frac{4n}{9},\frac{7n}{9}\}$ and $Z(\frac{2n}{9})=\{\frac{2n}{9},\frac{5n}{9},\frac{8n}{9}\}$.
Let $u=v^{\frac{n}{9}}+v^{\frac{4n}{9}}+v^{\frac{7n}{9}}$. Then for each $0\le j \le n-1$, we have
$$u_j=\alpha^{j\frac{n}{9}}+\alpha^{j\frac{4n}{9}}+\alpha^{j\frac{7n}{9}}
=\alpha^{j\frac{n}{9}}(1+\alpha^{j\frac{n}{3}}+ \alpha^{j\frac{2n}{3}}).$$
Thus
$$u_j=\begin{cases} \alpha^{j\frac{n}{9}} & \text{if}\ 3 \mid j \\ 0 & \text{otherwise}
 \end{cases}=\begin{cases} 1& \text{if}\ j \equiv 0 \pmod 9\\ \omega & \text{if}\ j \equiv 3 \pmod 9\\ 
 \omega^2 & \text{if}\ j \equiv 6 \pmod 9\\
 0 & \text{otherwise}
 \end{cases}$$ 
 for each $0\le j \le n-1$.
A similar computation shows that if $x=v^{\frac{2n}{9}}+v^{\frac{5n}{9}}+v^{\frac{8n}{9}}$, then  
 $$x_j=\begin{cases} 1& \text{if}\ j \equiv 0 \pmod 9\\ \omega^2 & \text{if}\ j \equiv 3 \pmod 9\\ 
 \omega & \text{if}\ j \equiv 6 \pmod 9\\
 0 & \text{otherwise}
 \end{cases}$$
 for each $0\le j \le n-1$.
Let $u'$ and $u''$ be the cyclic shifts of $u$ 
by one and two positions to the right, respectively.
The set $S=\{u,u',u''\}$ is linearly independent over $K$ and matrix $H$ consisting of elements of $S$ as its rows is a parity check matrix for $C_1$. 
 Next, we show that if $x'$ and $x''$ 
are the cyclic shifts of $x$ 
by one and two positions to the right, respectively,
then $\{uP_\chi, u'P_\chi, u'' P_\chi\}=\{\omega x,\omega x',\omega x''\}$. This implies that $HP_\gamma$ is a parity check matrix for $C_2$. 
A straightforward computation shows that
$$(uP_\chi)_j=\begin{cases} \omega & \text{if}\ j \equiv 0 \pmod 9\\ 1 & \text{if}\ j \equiv 3 \pmod 9\\ 
 \omega^2 & \text{if}\ j \equiv 6 \pmod 9\\
 0 & \text{otherwise}
 \end{cases}$$
 for each $0\le j \le n-1$. Thus $uP_\chi=\omega x$. The equalities $u'P_\chi=\omega x'$ and $u''P_\chi=\omega x''$ follow accordingly. Hence, $HP_\gamma$ is a parity check matrix for $C_2$ and by Lemma $\ref{generalized equivalence}$, 
and we have $C_2=C_1P_\chi$. 
\end{proof}

\begin{lemma}\label{F4 permutation 3}
Let $n=3^tk$, where $k$ and $t\ge 3$ are positive integers such that $\gcd(k,3)=1$. Let $1\le e \le n-1$.
Then the following statements hold.
\begin{enumerate}
\item The set $A_e=\{(ek+i\frac{n}{3^{t-1}}) \bmod n: 0\le i \le 3^{t-1}-1 \}$ is a union of $4$-cyclotomic cosets modulo $n$.
\item  Suppose that $C$ is the length $n$ linear cyclic code over $\F_4$ with defining set $A_e$. Then $C=CP_\chi$.
\end{enumerate}
\end{lemma}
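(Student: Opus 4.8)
The plan is to treat the two parts separately, handling (1) by a short group‑theoretic observation and (2) by the generalized parity check matrix technique of the preceding lemmas. For part (1), I would first note that $n/3^{t-1}=3k$, so that $A_e=\{(ek+3ki)\bmod n : 0\le i\le 3^{t-1}-1\}$ is exactly the coset $ek+\langle 3k\rangle$ of the cyclic subgroup $\langle 3k\rangle\subseteq \Z/n\Z$, namely the unique subgroup of order $3^{t-1}$, i.e.\ the set of all multiples of $3k$. Since $n$ is odd, $\mu_4$ is an automorphism of $\Z/n\Z$ and hence fixes $\langle 3k\rangle$ setwise; moreover $4ek-ek=3ek\in\langle 3k\rangle$, so $\mu_4(A_e)=4ek+\langle 3k\rangle=ek+\langle 3k\rangle=A_e$. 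Thus $A_e$ is closed under multiplication by $4$, and is therefore a union of $4$-cyclotomic cosets modulo $n$.

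For part (2), I would let $H$ be the generalized parity check matrix of $C$ with rows $v^b$, $b\in A_e$, as in $(\ref{PC matrix of linear cyclic codes})$, and aim to show that $HP_\chi$ has the same row space over $K$; the conclusion $C=CP_\chi$ then follows from Lemma $\ref{generalized equivalence}$ (with trivial diagonal part). The key computation is that $\chi$ is an involution which acts on residues modulo $9$ as the product of transpositions $(0\ 3)(4\ 7)(5\ 8)$, fixing $1,2,6$. Consequently $(v^bP_\chi)_j=\alpha^{b\chi(j)}=d_j\,\alpha^{bj}$, where the factor $d_j$ equals $\alpha^{3b}$, $\alpha^{-3b}$, or $1$ according as $j\bmod 9$ lies in $\{0,4,5\}$, $\{3,7,8\}$, or $\{1,2,6\}$. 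In other words, $v^bP_\chi$ is $v^b$ twisted by a diagonal matrix whose entries are $9$-periodic in the column index.

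The crucial step is then to re-expand this $9$-periodic twist. Since $\zeta:=\alpha^{n/9}$ is a primitive $9$-th root of unity and $9$ is invertible in characteristic $2$, the inverse discrete Fourier transform over $\Z/9\Z$ writes $d_j=\sum_{m=0}^{8}\hat d_m\,\zeta^{mj}$ for suitable $\hat d_m\in K$; only the existence of the $\hat d_m$ is needed, not their explicit values. This yields $v^bP_\chi=\sum_{m=0}^{8}\hat d_m\,v^{(b+mn/9)\bmod n}$. Here $n/9=3^{t-2}k$, and because $t\ge 3$ this is a multiple of $3k$, so $b+mn/9\equiv b\equiv ek\pmod{3k}$ and hence each exponent $b+mn/9$ lies in $A_e$. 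Therefore every row of $HP_\chi$ is a $K$-linear combination of rows of $H$, so the row space of $HP_\chi$ is contained in that of $H$; since $P_\chi$ is invertible and the $v^b$ are linearly independent (distinct rows of a Vandermonde matrix in the $\alpha^b$), the two row spaces have equal dimension and thus coincide, and Lemma $\ref{generalized equivalence}$ gives $C=CP_\chi$.

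I expect the main obstacle to be identifying the $9$-periodic diagonal structure of $v^bP_\chi$ and recognizing that the Fourier shifts $b\mapsto b+mn/9$ keep the exponent inside $A_e$; this is precisely where the hypothesis $t\ge 3$ enters, since it guarantees that $n/9$ is a multiple of $3k$. A more computational alternative, mirroring the proof of Lemma $\ref{F4 permutation 2}$, would be to pass to the sparse vectors obtained by summing $v^b$ over the cyclotomic cosets contained in $A_e$ and to track the action of $P_\chi$ on them directly, but the Fourier viewpoint avoids the attendant case analysis.
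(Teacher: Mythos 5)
Your proof is correct. Part (1) is essentially the paper's argument in group-theoretic clothing: the paper verifies $4(ek+s\tfrac{n}{3^{t-1}})\in A_e$ by writing $4^j=3l+1$ and using $3k=\tfrac{n}{3^{t-1}}$, which is exactly your observation that $A_e=ek+\langle 3k\rangle$ is fixed by $\mu_4$ because $3ek\in\langle 3k\rangle$. Part (2) reaches the same conclusion by a genuinely different (Fourier-dual) route. The paper changes basis: it sums all rows of $H$ to obtain a sparse vector $u$ supported on the multiples of $3^{t-1}$, takes the cyclic shifts $u^{[m]}$, $0\le m\le 3^{t-1}-1$, as a new generalized parity check matrix $H'$, and checks that $P_\chi$ permutes these rows (sending $u^{[m]}$ to $u^{[m\pm 3]}$ or fixing it), using $9\mid 3^{t-1}$. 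You instead keep the Vandermonde rows $v^b$, identify $v^bP_\chi$ as $v^b$ twisted by a $9$-periodic diagonal factor, and expand that factor in the characters $j\mapsto\zeta^{mj}$ of $\Z/9\Z$ to land back in the span of $\{v^{b'}:b'\in A_e\}$; the hypothesis $t\ge 3$ enters for you as $3k\mid n/9$, in the same role that $9\mid 3^{t-1}$ plays in the paper. The paper's version is more explicit and self-contained (no appeal to invertibility of the character matrix, and the permutation of the sparse rows is visible by inspection), while yours avoids constructing the auxiliary basis and isolates the structural reason the argument works — that the twist introduced by $\chi$ lives on $\Z/9\Z$ and its Fourier support is contained in the stabilizer coset directions of $A_e$. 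Both are valid proofs of the lemma.
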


\begin{proof}
To prove (1) we show that $A_e=\displaystyle\bigcup_{i=0}^{3^{t-1}-1}Z(ek+i\frac{n}{3^{t-1}})$, where 
some of the cyclotomic cosets are repeating in the union.
 Obviously, $A_e \subseteq \displaystyle\bigcup_{i=0}^{3^{t-1}-1}Z(ek+i\frac{n}{3^{t-1}})$. Let $0\le s \le 3^{t-1}-1$ and $j\geq 0$ be an arbitrary integer.  We have  $4^j=3l+1$ for some positive integer $l$ and
$$4^j(ek+s\frac{n}{3^{t-1}})=ek+s\frac{n}{3^{t-1}}+ 3lek+3ls\frac{n}{3^{t-1}}=ek+(s+le+3ls)\frac{n}{3^{t-1}} \in A,$$
where the last equality follows form the fact that $3k=\frac{n}{3^{t-1}}$. Hence $Z(ek+s\frac{n}{3^{t-1}}) \subseteq A$ for any $0\le s \le 3^{t-1}-1$. This implies that $A=\displaystyle\bigcup_{i=0}^{3^{t-1}-1}Z(ek+i\frac{n}{3^{t-1}})$.

Now we prove (2). Let $H$ be a generalized parity check matrix for $C$ consisting of the row vectors
$$v^{(ek+i\frac{n}{3^{t-1}})}=(1, \alpha^{ek+i\frac{n}{3^{t-1}}}, \alpha^{2(ek+i\frac{n}{3^{t-1}})}, \ldots, \alpha^{(n-1)(ek+i\frac{n}{3^{t-1}})} )$$
 for $0\le i \le 3^{t-1}-1$. Next, we produce a new generalized parity check matrix $H'$ for $C$ using linear combinations of rows of $H$. We also show that the rows of $H'P_\chi$ are just permutations of the rows of $H'$. This implies that $C=CP_\chi$. 

Let $u=\displaystyle\sum_{i=0}^{3^{{t-1}}-1} v^{(ek+i\frac{n}{3^{t-1}})}$ be the sum of 
all  rows of $H$. For each $0\le l \le n-1$, we can compute the entry of the $l$-th column of $u$ as
\begin{equation}\label{E:new form}
u_l=\sum_{i=0}^{3^{t-1}-1}\alpha^{l(ek+i\frac{n}{3^{t-1}})}=\alpha^{ekl}(\displaystyle\sum_{i=0}^{3^{{t-1}}-1} \alpha^{\frac{iln}{3^{t-1}}})=\begin{cases} \alpha^{ekl} & \text{if}\ 3^{t-1} \mid l \\ 0 & \text{otherwise.}
 \end{cases}
 \end{equation}
 Let $u^{[m]}$ be $m$ positions cyclic shift of the vector $u$ to the right for each $0\le m \le 3^{t-1}-1$. Then $(\ref{E:new form})$ implies that the set $B=\{u^{[m]}: 0\le m\le 3^{t-1}-1\}$ is linearly independent over $K$. Let $H'$ be a $3^{t-1} \times n$ matrix over $K$ consisting of elements of $B$ as its rows. Then $H'$ is a generalized parity check matrix for $C$. Moreover, for any $0\le m \le 3^{t-1}-1$ and $0\le l \le n-1$ we have
 $$(u^{[m]})_l=\begin{cases} \alpha^{ekl} & \text{if}\ 3^{t-1} \mid l-m \\ 0 & \text{otherwise.}
 \end{cases}$$
 Since $9 \mid 3^{t-1}$, 
 $$ u^{[m]} P_\chi=\begin{cases} u^{[m+3]}& \text{if}\ m \equiv 0 \ \text{or}\ 4\ \text{or}\ 5 \pmod 9 
 \\ u^{[m-3]} & \text{if}\ m \equiv 3 \ \text{or}\ 7 \ \text{or}\ 8 \pmod 9 \\
 u^{[m]} & \text{otherwise.}
 \end{cases}$$
 Hence the rows of $H'P_\chi$ are just a permutation of the rows of $H'$. Therefore $C=CP_\chi$.
 \end{proof}
 
 Now we combine all the above results and state a sufficient condition for the permutation equivalence of linear cyclic codes over $\F_4$. Let $n=3^tk$, where $k$ and $t\ge 3$ are positive integers such that $\gcd(k,3)=1$.
We define $A_e$
as in part (1) of Lemma~\ref{F4 permutation 3}
for any $1\le e \le n-1$. The next theorem is our main result of this section.
 

  \begin{theorem}\label{F4 permutation}
Let $n=3^tk$, where $k$ and $t\ge 3$ are positive integers such that $\gcd(k,3)=1$.
Suppose that $B \subseteq \{0, \frac{n}{3},\frac{2n}{3}\}$ and $1\le e_j \le n-1$ for $1\le j \le r$, where $r$ is a positive integer. Then length $n$ linear cyclic codes with the defining sets $T_1=Z(\frac{n}{9}) \cup B \cup \displaystyle\bigcup_{j=1}^{r} A_{e_j}$ and $T_2=Z(\frac{2n}{9}) \cup B \cup \displaystyle\bigcup_{j=1}^{r} A_{e_j}$ are permutation equivalent over $\F_4$ under the action of $P_\chi$.
 \end{theorem}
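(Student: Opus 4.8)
The plan is to exhibit a single generalized parity check matrix for the first code, track the effect of right multiplication by $P_\chi$ on its row space, and then invoke Lemma~\ref{generalized equivalence}. Let $C_1$ and $C_2$ be the length $n$ linear cyclic codes over $\F_4$ with defining sets $T_1$ and $T_2$, and let $H$ be the generalized parity check matrix for $C_1$ in the form of $(\ref{PC matrix of linear cyclic codes})$, whose rows are the vectors $v^s$ for $s\in T_1$. Since distinct exponents $s$ give distinct values $\alpha^s$, these rows are linearly independent over $K$, and for any $c\in\F_4^n$ the equation $Hc^T=0$ depends only on the $K$-row space of $H$. Hence it suffices to show that $HP_\chi$ has the same row space as the generalized parity check matrix of $C_2$; Lemma~\ref{generalized equivalence}, applied with the diagonal factor equal to the identity, will then give $C_2=C_1P_\chi$.

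First I would split the generators of the row space according to the three kinds of constituents of $T_1$:
$$\Span\{v^s:s\in T_1\}=\Span\{v^s:s\in Z(\tfrac{n}{9})\}+\Span\{v^s:s\in B\}+\sum_{j=1}^r\Span\{v^s:s\in A_{e_j}\}.$$
Because right multiplication by the invertible permutation matrix $P_\chi$ is $K$-linear, it distributes over this sum, so I only need the image of each summand. Lemma~\ref{F4 permutation 2} shows that $P_\chi$ sends $\Span\{v^s:s\in Z(\frac n9)\}$ onto $\Span\{v^s:s\in Z(\frac{2n}{9})\}$; the basis it produces for the image matches a basis of the target up to the scalar $\omega$, which does not affect the span. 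Lemma~\ref{F4 permutation 1} shows that each $v^a$ with $a\in B\subseteq\{0,\frac n3,\frac{2n}{3}\}$ is fixed by $P_\chi$, so the $B$-summand is preserved. Lemma~\ref{F4 permutation 3} shows that $P_\chi$ merely permutes a basis of $\Span\{v^s:s\in A_{e_j}\}$, so each of those summands is preserved as well.

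Assembling the three computations yields
\begin{align*}
\bigl(\Span\{v^s:s\in T_1\}\bigr)P_\chi
&=\Span\{v^s:s\in Z(\tfrac{2n}{9})\}+\Span\{v^s:s\in B\}+\sum_{j=1}^r\Span\{v^s:s\in A_{e_j}\}\\
&=\Span\{v^s:s\in T_2\},
\end{align*}
which is exactly the row space of the generalized parity check matrix of $C_2$. Therefore $HP_\chi$ is a generalized parity check matrix for $C_2$, and Lemma~\ref{generalized equivalence} gives $C_2=C_1P_\chi$, so the two codes are permutation equivalent under the action of $P_\chi$.

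I expect no serious obstacle here, since the real content lies in the three preparatory lemmas; the only point requiring care is that the displayed decomposition is a sum of subspaces, not a direct sum. This matters because the sets can overlap --- for example, if some $e_j\equiv 0\pmod 3$ then $A_{e_j}$ already contains both $Z(\frac n9)$ and $Z(\frac{2n}{9})$, and in fact $T_1=T_2$ --- but since right multiplication by $P_\chi$ respects arbitrary sums of subspaces, no case analysis on these overlaps is needed and the assembly goes through uniformly.
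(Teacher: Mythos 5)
Your proposal is correct and follows essentially the same route as the paper: both assemble the three preparatory lemmas to show that $P_\chi$ carries a generalized parity check matrix of $C_1$ to one of $C_2$ and then invoke Lemma~\ref{generalized equivalence}. Your version simply makes explicit the row-space decomposition and the harmless overlap between the constituent sets, details the paper leaves implicit.
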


\begin{proof}
Let $C_1$ and $C_2$ be linear cyclic codes of length $n$ over $\F_4$ with the defining sets $T_1$ and $T_2$, respectively.
By Lemmas $\ref{F4 permutation 1}$, $\ref{F4 permutation 2}$, and $\ref{F4 permutation 3}$ the matrix $P_\chi$ maps a generalized parity check matrix of $C_1$ to a generalized parity check matrix of $C_2$. Therefore, the result follows from Lemma $\ref{generalized equivalence}$.
\end{proof}

Next, we present an application of the above result.

\begin{example}
Let $n=27$. An easy computation shows that $Z(1)=\{1+i\frac{n}{9}: 0\le i \le 8 \}$ and $Z(2)=\{2+i\frac{n}{9}: 0\le i \le 8 \}$. Now by Theorem $\ref{F4 permutation}$, the linear cyclic codes of length $27$ over $\F_4$ with the following 
pairs of defining sets are permutation equivalent: 
\begin{itemize}
\item $Z(0) \cup Z(1) \cup Z(3)$ and $Z(0) \cup Z(1) \cup Z(6)$.
\item $Z(0) \cup Z(2) \cup Z(3)$ and $Z(0) \cup Z(2) \cup Z(6)$. 
\item $Z(0) \cup Z(1) \cup Z(3) \cup Z(9)$ and $Z(0) \cup Z(1) \cup Z(6) \cup Z(9)$. 
\item $Z(0) \cup Z(1) \cup Z(3) \cup Z(9) \cup Z(18)$ and $Z(0) \cup Z(1) \cup Z(6) \cup Z(9) \cup Z(18)$.
\end{itemize}
The above pairs 
of codes are not permutation equivalent under the action of multipliers, generalized multipliers, or a combination of both. This is mainly because multipliers and generalized multipliers always map $0$ to $0$. However, $\chi(0)=3$. Moreover, the above pairs of codes are not affine equivalent. There are many more such pairs of permutation equivalent linear cyclic codes  over $\F_4$ of length $27$. 
To the best of our knowledge, the permutation equivalence
of the above pairs of codes can not be proved by earlier results
in the literature.
\end{example}


\begin{example} This example presents some other values of $n$ and $q$ such that there exist at least a pair of monomially equivalent linear cyclic codes over $\F_q$ of length $n$ which are not affine equivalent. 
\begin{itemize}
\item For $q=2$, lengths $n=45,49$.
\item For $q=3$, lengths $n=8^\ast,16^{\ast\ast},32^{\ast\ast},40^\ast, 48^{\ast\ast}, 56^\ast$.
\item For $q=4$, lengths $n=25,27^\diamond,49$.
\item For $q=5$, lengths $n=\underline{8},\underline{16},\underline{24}^\ast$.
\item For $q=7$, lengths $n=8^\ast,16^\ast,18,24^\ast,32^{\ast\ast}, 40^\ast$.
\item For $q=11$, lengths $n=8^\ast,16^{\ast\ast},24^\ast$.
\end{itemize} 
In the above list:
\begin{itemize}
\item $\ast$ shows the code lengths for which there exist a pair of monomially equivalent codes obtained by Theorem $\ref{monomial action}$.

\item The underline shows the code lengths for which there exist a pair of permutation equivalent codes obtained by Theorem $\ref{new permutation}$.

\item $\diamond$ shows the code lengths for which there exist a pair of permutation equivalent codes obtained by Theorem $\ref{F4 permutation}$.
\item $\ast\ast$ shows the code lengths for which there exist a pair of monomially equivalent codes obtained by the action of the monomial matrix 
\begin{equation*}
\begin{bmatrix}
 P_\sigma D& \cdots&0 \\
 \vdots& \ddots &\vdots \\
 0&\cdots & P_\sigma D
\end{bmatrix}
\end{equation*}
consisting of more than one block of the matrix $P_\sigma D$ of Theorem $\ref{monomial action}$ on the main diagonal. 
\end{itemize}
\end{example}

The sufficient conditions of Theorems $\ref{monomial action}$, $\ref{new permutation}$, and $\ref{F4 permutation}$ along with the monomial and permutation equivalence criteria given in Section \ref{cyclic intro} detect many monomially equivalent linear cyclic codes over various finite fields. Therefore, it is extremely beneficial and cheap to use such conditions, and make the search for new linear cyclic codes with good parameters more systematic.

\subsection{More results on the equivalence of linear cyclic codes}\label{conjecture section}

Conjecture $2$ of \cite{aydin2019equivalence} 
proposes that the condition ``$n$ divides $|A_1|(q-1)b$'' of Theorem $\ref{cyclic2}$ is a necessary condition for the isometric equivalence of linear cyclic codes by a shift map. The next theorem proves this fact and also strengthens the result of Theorem $\ref{cyclic2}$. Recall that $\alpha$ is a fixed primitive $n$-th root of unity in the field $K$ which is a finite field extension of $\F_q$.

\begin{theorem}\label{affine conjecture}
Let $C_1$ and $C_2$ be two linear cyclic codes over $\F_q$ of length $n$ with defining sets $A_1$ and $A_2$, respectively and $b$ be a positive integer. The codes $C_1$ and $C_2$ are isometrically equivalent through the shift map $\phi_b$  on their defining sets if and only if $\phi_b$ is
a bijection between $A_1$ and $A_2$
and $n$ divides $|A_1|(q-1)b$.
\end{theorem}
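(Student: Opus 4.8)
The plan is to prove the two implications separately, treating the backward direction as essentially a restatement of Theorem~\ref{cyclic2} and concentrating the real work on the forward (necessity) direction. Throughout I will use that $\phi_b$ is a permutation of $\Z/n\Z$, so that ``$\phi_b$ is a bijection between $A_1$ and $A_2$'' is the same statement as $\phi_b(A_1)=A_2$.

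For the backward implication, I would assume $\phi_b(A_1)=A_2$ and $n\mid |A_1|(q-1)b$. These are exactly the hypotheses of Theorem~\ref{cyclic2}, so that theorem immediately yields that $C_1$ and $C_2$ are isometrically equivalent through $\phi_b$, and nothing further is needed here. For the forward implication I would first record that being isometrically equivalent through the shift map $\phi_b$ forces $\phi_b(A_1)=A_2$. Taking the generalized parity check matrix $H_1$ of $C_1$ from $(\ref{PC matrix of linear cyclic codes})$, whose rows are the vectors $v^{t}$ with $t\in A_1$, the shift $\phi_b$ corresponds on the level of check spaces to sending $v^{t}$ to $v^{(t+b)\bmod n}$, since $(v^{t})_j\,\alpha^{jb}=\alpha^{j(t+b)}=(v^{(t+b)\bmod n})_j$. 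Hence the defining set carried by the transformed check matrix is $\{(t+b)\bmod n:t\in A_1\}=\phi_b(A_1)$, and comparing with the defining set of $C_2$ gives $\phi_b(A_1)=A_2$.

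The heart of the argument is then to deduce the divisibility $n\mid |A_1|(q-1)b$ from the set identity $A_2=A_1+b$. Here I would exploit that both $A_1$ and $A_2$ are unions of $q$-cyclotomic cosets, hence both invariant under multiplication by $q$. Applying this to $A_2=A_1+b$ gives $A_1+qb=qA_1+qb=q(A_1+b)=A_2=A_1+b$, so that $A_1+(q-1)b=A_1$. Consequently $(q-1)b$ lies in the translation stabilizer $S=\{s\in\Z/n\Z:A_1+s=A_1\}$, which is a subgroup of $\Z/n\Z$; writing $m=|S|$ we have $S=\frac{n}{m}\Z/n\Z$, so $\frac{n}{m}$ divides $(q-1)b$. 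Because $S$ acts freely by translation on the nonempty, $S$-invariant set $A_1$, the integer $m$ divides $|A_1|$. Combining $\frac{n}{m}\mid (q-1)b$ with $m\mid |A_1|$ yields $n\mid m(q-1)b\mid |A_1|(q-1)b$, as required. I would remark that this computation in fact shows the divisibility to be a consequence of the set relation $\phi_b(A_1)=A_2$ together with $q$-invariance, which is precisely the sense in which the theorem both resolves the conjecture of \cite{aydin2019equivalence} and strengthens Theorem~\ref{cyclic2}.

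The step I expect to be the crux is this passage from the mere set identity $A_2=A_1+b$ to the divisibility, via the observation that $q$-invariance alone forces $(q-1)b$ into a stabilizer subgroup whose order divides $|A_1|$. An alternative, more computational route would write the realizing monomial matrix as $M=PD$, with $P$ a permutation matrix and $D$ a nonsingular diagonal matrix over $\F_q$, note that $H_1PD^{-1}$ and $H_1D_b$ have the same $K$-row space for $D_b=\mathrm{diag}(1,\alpha^{b},\dots,\alpha^{(n-1)b})$, so $H_1PD^{-1}=U H_1 D_b$ for some $U\in GL_{|A_1|}(K)$, and extract the divisibility from a determinant computation using that $P$ and $D$ are defined over $\F_q$. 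I expect this second approach to be considerably more delicate, which is why I would favor the stabilizer argument above.
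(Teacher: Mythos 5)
Your proof is correct, but the forward direction is argued by a genuinely different route than in the paper. Both proofs reduce the forward implication to showing that the single set identity $\phi_b(A_1)=A_2$, combined with the fact that $A_1$ and $A_2$ are unions of $q$-cyclotomic cosets, already forces $n\mid |A_1|(q-1)b$; the backward direction is Theorem~\ref{cyclic2} in both cases. The paper gets the divisibility algebraically: writing $g_2(x)=\prod_{i\in A_2}(x-\alpha^i)=\alpha^{b|A_1|}g_1(\alpha^{-b}x)$, it evaluates at $x=0$ and uses that the constant terms of $g_1$ and $g_2$ are nonzero elements of $\F_q$, hence fixed by the Frobenius, which yields $\alpha^{b|A_1|(q-1)}=1$, i.e.\ $n\mid b|A_1|(q-1)$. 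Your route is combinatorial: from $q$-invariance of both sets you derive $A_1+(q-1)b=A_1$, place $(q-1)b$ in the translation stabilizer $S$ of $A_1$, and combine $\tfrac{n}{|S|}\mid (q-1)b$ with $|S|\mid |A_1|$ via orbit counting. Both arguments are sound and of comparable length; yours has the small bonus of establishing the sharper statement $n\mid (q-1)b\,|S|$ with $|S|$ a divisor of $|A_1|$, whereas the paper's evaluation at $0$ is essentially the observation that $\alpha^{b|A_1|}=g_2(0)/g_1(0)$ must lie in $\F_q^*$. One stylistic caveat: like the paper, you should make explicit that ``isometrically equivalent through the shift map $\phi_b$ on their defining sets'' is being read as including $\phi_b(A_1)=A_2$ by definition (the paper simply assumes this without loss of generality); your attempted derivation of it from the action on check matrices is harmless but not really needed, and as written it only shows that the diagonal action $D_b$ carries the check space of $C_1$ to that of the code with defining set $\phi_b(A_1)$, not that an arbitrary isometry ``through $\phi_b$'' must act this way.
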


\begin{proof}
Without loss of generality assume $\phi_b(A_1)=A_2$.
We only prove the forward direction as the reverse direction follows from Theorem $\ref{cyclic2}$.
Let $g_1(x)$ and $g_2(x)$ be the generator polynomials of $C_1$ and $C_2$, respectively. 
Then $g_1(x)=\displaystyle\prod_{i\in A_1}(x-\alpha^i)$ and 
\begin{equation}\label{shift generator}
g_2(x)=\prod_{i\in A_2}(x-\alpha^i)=\prod_{i\in A_1}(x-\alpha^{i+b})=\alpha^{b|A_1|}\prod_{i\in A_1}(\alpha^{-b}x-\alpha^i)=\alpha^{b|A_1|} g_1(\alpha^{-b}x).
\end{equation}
Since $g_2(x)$ is defined over $\F_q$, the equality $\big(g_2(0)\big)^q=g_2(0)$ holds. By combining this fact with $(\ref{shift generator})$, we get 
\begin{equation}\label{constantterm}
0=\big(g_2(0)\big)^q-g_2(0)=\alpha^{b|A_1|q} \big(g_1(0)\big)^q-\alpha^{b|A_1|} g_1(0).
\end{equation}
The generator polynomial of a non-empty linear cyclic code always has a non-zero constant term. Let the non-zero constant term of $g_1(x)$ be $g_1(0)=c \in \F_q$. Now $(\ref{constantterm})$ implies that 
\begin{equation}\label{shift equ 3}
\alpha^{b|A_1|q}c^q -\alpha^{b|A_1|} c=c\alpha^{b|A_1|}(\alpha^{b|A_1|(q-1)}-1)=0.
\end{equation}
Since $c$ and $\alpha^{b|A_1|}$ are both non-zero, $(\ref{shift equ 3})$ implies that $\alpha^{b|A_1|(q-1)}=1$ or equivalently $n \mid b|A_1|(q-1)$.
\end{proof}

Let $g_1(x)$ and $g_2(x)$ be generator polynomials of two monomially equivalent cyclic codes of length $n$ over $\F_q$. Conjecture $1$ of \cite{aydin2019equivalence} proposes that for any integer $m\geq 1$ coprime to $q$, the length $nm$ cyclic codes generated by $g_1(x)$ and $g_2(x)$ are monomially equivalent. Indeed this statement holds for linear cyclic codes. However note that the linear cyclic code of length $nm$ generated by $g(x)$ has minimum distance of at most two since $x^n-1$ is in correspondence with a weight two codeword. As such low minimum distance cyclic codes are not very interesting for us, we only state this result without proof. 

\begin{proposition}\label{monomial conjecture}
Let $g_1(x)$ and $g_2(x)$ be generator polynomials of two monomially (respectively permutation) equivalent cyclic codes over $\F_q$ of length $n$. For any integer $m\geq 1$ coprime to $q$, the length $nm$ cyclic codes generated by $g_1(x)$ and $g_2(x)$ are also monomially (respectively permutation) equivalent.
\end{proposition}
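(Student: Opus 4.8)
The plan is to reduce the length-$nm$ problem to the length-$n$ problem by realizing each length-$nm$ code as the preimage of the corresponding length-$n$ code under one fixed linear map, and then to lift the given equivalence through that map. Write $S=\F_q[x]/\langle x^{nm}-1\rangle$ and $R_n=\F_q[x]/\langle x^{n}-1\rangle$, and set $C_i=\langle g_i(x)\rangle\subseteq R_n$ and $C_i'=\langle g_i(x)\rangle\subseteq S$ for $i\in\{1,2\}$. The hypothesis $\gcd(m,q)=1$ is only needed to keep $nm$ coprime to $q$, so that the $C_i'$ are genuine cyclic codes in the sense used throughout the paper.

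First I would establish the structural identity. Consider the reduction map $\pi\colon S\to R_n$, $f(x)\mapsto f(x)\bmod (x^n-1)$, which is well defined because $x^n-1\mid x^{nm}-1$. In coordinates it is the fixed \emph{folding} map sending $(c_0,\dots,c_{nm-1})$ to the length-$n$ vector whose $i$-th entry is $\sum_{k=0}^{m-1}c_{i+kn}$; it is a surjective $\F_q$-algebra homomorphism with kernel $\langle x^n-1\rangle$. I would then prove the key identity $C_i'=\pi^{-1}(C_i)$. Since $\pi$ is a surjective ring homomorphism, $\pi(C_i')=\langle \pi(g_i)\rangle=\langle g_i \bmod (x^n-1)\rangle=C_i$; and $g_i\mid x^n-1$ forces $\langle x^n-1\rangle=\ker\pi\subseteq C_i'$. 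Combining these gives $\pi^{-1}(C_i)=C_i'+\ker\pi=C_i'$. This is the crucial point: $C_i'$ depends on $g_i$ only through the length-$n$ code $C_i$.

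Next I would lift the equivalence. Reindex the $nm$ coordinates by $j=i+kn$ with $0\le i\le n-1$ and $0\le k\le m-1$, identifying $\F_q^{nm}\cong\F_q^{n}\otimes\F_q^{m}$, so that $\pi=I_n\otimes\mathbf 1$ where $\mathbf 1\colon\F_q^m\to\F_q$ is the coordinate-sum map. If $M$ is a monomial matrix with $C_1=C_2M$, write its action on basis vectors as $e_iM=d_i\,e_{\sigma(i)}$ and set $M'=M\otimes I_m$, i.e.\ apply $M$ to each of the $m$ length-$n$ ``columns'' separately. Then $e_{(i,k)}M'=d_i\,e_{(\sigma(i),k)}$, so $M'$ sends each coordinate $(i,k)$ to a scalar multiple of the coordinate $(\sigma(i),k)$; hence $M'$ is again a monomial matrix, and a permutation matrix whenever $M$ is. A one-line check on basis vectors shows the intertwining relation $\pi(cM')=\pi(c)M$ for every $c$, since $\pi(e_{(i,k)}M')=d_i\,e_{\sigma(i)}=\pi(e_{(i,k)})M$.

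Finally I would combine the two steps. From the intertwining relation one gets $\pi(cM'^{-1})=\pi(c)M^{-1}$, and then, using $C_i'=\pi^{-1}(C_i)$,
\[
C_2'M'=\pi^{-1}(C_2)M'=\pi^{-1}(C_2M)=\pi^{-1}(C_1)=C_1',
\]
so $C_1'$ and $C_2'$ are monomially (respectively permutation) equivalent; since $M'$ is a permutation whenever $M$ is, both versions of the statement follow simultaneously. I expect the only genuine obstacle to be the structural identity $C_i'=\pi^{-1}(C_i)$, together with the bookkeeping that $M\otimes I_m$ is honestly monomial in the original coordinate order $j=i+kn$; once these are in place, the concluding computation is purely formal.
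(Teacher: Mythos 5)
The paper states this proposition explicitly \emph{without} proof (the authors remark that the length-$nm$ codes have minimum distance at most $2$ and are therefore not interesting enough to warrant one), so there is no argument of theirs to compare yours against; your proposal has to stand on its own, and it does. The two pillars are both sound. The identity $C_i'=\pi^{-1}(C_i)$ is correct: $\pi$ is a surjective $\F_q$-algebra homomorphism, so $\pi(\langle g_i\rangle_S)=\langle g_i\rangle_{R_n}$, and $g_i\mid x^n-1$ puts $\ker\pi=\langle x^n-1\rangle_S$ inside $C_i'$, whence $\pi^{-1}(C_i)=C_i'+\ker\pi=C_i'$. The lift $M'=M\otimes I_m$ is honestly monomial in the indexing $j=i+kn$ (it sends coordinate $i+kn$ to $d_i$ times coordinate $\sigma(i)+kn$), it is a permutation matrix exactly when $M$ is, and the intertwining relation $\pi(cM')=\pi(c)M$ reduces to a check on basis vectors. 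The final computation $C_2'M'=\pi^{-1}(C_2M)=\pi^{-1}(C_1)=C_1'$ then closes the argument, consistently with the paper's convention $C_1=C_2M$ for equivalence. Your observation that $\gcd(m,q)=1$ is never used in the argument and only ensures that the length-$nm$ codes fall under the paper's standing assumption $\gcd(\text{length},q)=1$ is also accurate. This is a complete and clean proof of a statement the authors chose to leave unproved; the one thing worth making explicit in a write-up is that ``the cyclic code of length $nm$ generated by $g_i$'' means the ideal $\langle g_i\rangle$ in $\F_q[x]/\langle x^{nm}-1\rangle$ (whose generator polynomial is again $g_i$, since $g_i\mid x^n-1\mid x^{nm}-1$), which is the reading your argument uses and the one the paper's remark about the weight-two codeword $x^n-1$ presupposes.
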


\section{Equivalence of linear constacyclic codes over $\F_4$}\label{constacyclic code equivalence}

Constacyclic codes are one of the first well-known generalizations of linear cyclic codes with many similar properties. More information about constacyclic codes can, for example, be found in \cite{Kai} and \cite{Krishna}. To keep the statements of this section simple and also apply the results of this section to prune the search algorithm for new binary quantum codes, we only restrict our attention to constacyclic codes over $\F_4$.

 

\subsection{A short overview on equivalence of constacyclic codes over $\F_4$}\label{constacyclic1}

Let $\F_4=\{0,1,\omega,\omega^2\}$ be the field of four elements, where $\omega^2=\omega+1$. Throughout this section, we assume that $n$ is a positive odd integer and $\delta$ is a fixed primitive $3n$-th root of unity in $L$ which is a finite field extension of $\F_4$ such that $\delta^n=\omega$.

For any $0\neq \eta \in \mathbb{F}_4$, a linear code $C \subseteq \mathbb{F}_4^n$ is called an \textit{$\eta$-constacyclic} code, if for any codeword $(a_0,a_1,\cdots,a_{n-1}) \in C$, the vector $(\eta a_{n-1},a_0,\cdots,a_{n-2})$ is also in $C$. 
If $\eta=1$, then $\eta$-constacyclic code is a cyclic code. 
 Similar to linear cyclic codes, there is a one-to-one correspondence between $\eta$-constacyclic codes of length $n$ over $\F_4$ and ideals of the ring $\mathbb{F}_4[x]/\langle x^n-\eta \rangle$. Therefore, each $\eta$-constacyclic code over $\F_4$ of length $n$ can be uniquely represented by a unique monic polynomial 
$g(x)$ over $\F_4$ such that $g(x)\mid x^n-\eta$. The polynomial $g(x)$ is called the {\em generator polynomial} of such $\eta$-constacyclic code. 
As in the case of linear cyclic codes,
the codewords of an $\eta$-constacyclic code over $\F_4$ are
$(a_0,\ldots,a_{n-1})$ where $a_0+\ldots+a_{n-1}x^{n-1}$
is a multiple of $g(x)$ by a polynomial over $\F_4$.

For any $a\in\F_4$, the {\em conjugate} of $a$ is defined by $\overline a=a^2$. 
By \cite[Theorem 3.2]{Chen},
the conjugation map 
\begin{equation}\label{two equivalent constacyclic}
\Theta:\mathbb{F}_4[x]/\langle x^n -\omega\rangle \rightarrow \mathbb{F}_4[x]/\langle x^n -\omega^2 \rangle 
 \end{equation}
defined by $\Theta(\sum_{i=0}^{n-1}a_ix^i)=\sum_{i=0}^{n-1}\overline{a_i}x^i$ is an isometry,  and it gives a one-to-one correspondence between $\omega$- and $\omega^2$-constacyclic codes over $\F_4$ of length $n$. 
Therefore, from now on, we restrict our attention only to $\omega$-constacyclic codes over $\mathbb{F}_4$. 
When $\gcd(n,3)=1$, another isometry of linear codes which also gives a one-to-one correspondence between $\omega$-constacyclic codes and linear cyclic codes of length $n$ over $\F_4$ is provided in \cite[Theorem 15]{bierbrauer}. 
Thus only when $3\mid n$, $\omega$-constacyclic codes and cyclic codes of length $n$ over $\F_4$ can have different parameters.

%

%

All the roots of $x^n-\omega$ belong to the field $L$ and are in the form of $\{\delta^{1+3j}:0\le j \le n-1\}$.
Hence an $\omega$-constacyclic code with the generator polynomial $g(x)$ can be represented by its unique {\em defining set} defined by
$$\{1+3j: 0\le j \le n-1 \ \text{and}\ g(\delta^{1+3j})=0 \}.$$
The following lemma states another useful isometry of $\omega$-constacyclic codes.
\begin{lemma}\cite[Lemma $2.4$]{aydin2017some}\label{consta isometry}\label{affine1}
Let $n$ and $e\equiv 1 \pmod 3$ be positive integers such that $\gcd(3n,e)=\gcd(n,2)=1$. Then the $\F_4$-isomorphism ${\psi}:\mathbb{F}_4[x]/\langle x^n -\omega\rangle \rightarrow \mathbb{F}_4[x]/\langle x^n -\omega \rangle$ defined by $\psi(f(x))=f(x^e) \bmod{(x^n-\omega)} $ maps each $\omega$-constacyclic code to an isometric (permutation) equivalent $\omega$-constacyclic code. 
\end{lemma}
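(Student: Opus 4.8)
The statement to prove is Lemma \ref{consta isometry}/\ref{affine1}, which asserts that the map $\psi(f(x)) = f(x^e) \bmod (x^n - \omega)$ is a well-defined $\F_4$-isomorphism of $\F_4[x]/\langle x^n - \omega\rangle$ sending each $\omega$-constacyclic code to a permutation-equivalent $\omega$-constacyclic code, under the hypotheses $e \equiv 1 \pmod 3$ and $\gcd(3n, e) = \gcd(n,2) = 1$.

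\medskip

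The plan is to proceed in three stages: well-definedness, bijectivity, and the isometry (permutation-equivalence) claim. First I would verify that $\psi$ is well defined, i.e.\ that reduction modulo $x^n - \omega$ makes sense after substituting $x \mapsto x^e$. The crucial point is that $\psi$ must respect the ideal $\langle x^n - \omega\rangle$, so I would check that $\psi(x^n - \omega) \equiv 0$; concretely, $(x^e)^n - \omega = x^{en} - \omega$, and since $x^n \equiv \omega$ in the quotient, one reduces $x^{en}$ using $en \bmod n$. Here the hypothesis $e \equiv 1 \pmod 3$ enters: writing $e = 3s + 1$, one has $x^{en} = (x^n)^e \equiv \omega^e = \omega^{3s+1} = \omega$ because $\omega^3 = 1$ in $\F_4$. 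This is exactly what guarantees that $x^n - \omega$ maps into the same ideal, so $\psi$ descends to a map on the quotient ring; it is clearly $\F_4$-linear and multiplicative, hence a ring homomorphism.

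\medskip

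Second, for bijectivity I would use the condition $\gcd(3n, e) = 1$, which forces $\gcd(n, e) = 1$. Then $e$ is invertible modulo $n$, and the substitution $x \mapsto x^e$ permutes the roots $\{\delta^{1+3j} : 0 \le j \le n-1\}$ of $x^n - \omega$ among themselves: indeed $(\delta^{1+3j})^e = \delta^{e + 3je} = \delta^{1 + 3(s + je)}$ using $e = 3s+1$, which is again of the form $\delta^{1+3j'}$. Since $\gcd(3n, e) = 1$ the map $1 + 3j \mapsto e(1+3j) \bmod 3n$ is a bijection on the set of residues $\equiv 1 \pmod 3$ modulo $3n$, so $\psi$ permutes the $n$ roots and is therefore injective on the quotient (an $\F_4$-algebra endomorphism permuting a spanning root structure), hence bijective by finiteness. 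This also identifies the action of $\psi$ on defining sets: the image code has defining set obtained by multiplying by $e$ modulo $3n$, a multiplier-type action restricted to the exponents $1 + 3j$.

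\medskip

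Third, and this is where the real content lies, I would establish that $\psi$ is an isometry, i.e.\ preserves Hamming weight, and in fact is realized by a coordinate permutation. The main obstacle is that $\psi$ is defined polynomially by $x \mapsto x^e$, and one must translate this into an explicit permutation of the $n$ coordinate positions $\{0, 1, \ldots, n-1\}$ that witnesses the permutation equivalence. The key is that for a monomial $x^i$ with $0 \le i \le n-1$, $\psi(x^i) = x^{ei} \bmod (x^n - \omega) = \omega^{q_i} x^{r_i}$ where $ei = q_i n + r_i$ with $0 \le r_i \le n-1$; thus $\psi$ sends the monomial $x^i$ to a scalar multiple (a power of $\omega$) of the monomial $x^{r_i}$. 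Because $\gcd(n, e) = 1$ the map $i \mapsto r_i = ei \bmod n$ is a permutation $\pi$ of the coordinate set. So $\psi$ acts as a monomial transformation: it permutes coordinates by $\pi$ and rescales each by a unit $\omega^{q_i} \in \F_4^\ast$. Since each scaling factor is a nonzero field element, the Hamming weight of every codeword is preserved, establishing that $\psi$ is an isometry, hence by Theorem \ref{isometric=monomial} a monomial equivalence. To upgrade this to permutation equivalence as stated, I expect the delicate step is to show the scaling factors can be absorbed: because $x^n \equiv \omega$ already encodes the constacyclic twist consistently on both source and target rings, the induced map between the two $\omega$-constacyclic code structures is compatible with their constacyclic shifts in a way that makes the net transformation a pure coordinate permutation. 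I would verify this by checking that $\psi$ intertwines the two constacyclic shift operators (i.e.\ $\psi$ conjugates multiplication-by-$x$ on the source to multiplication-by-$x^e$, which is an automorphism of the same constacyclic structure), so the weight-preserving bijection respects the ambient $\omega$-constacyclic module structure and the scalar units are consistent powers of $\omega$ that fold into the constacyclic convention rather than into a genuine diagonal rescaling — yielding a permutation rather than merely monomial equivalence.
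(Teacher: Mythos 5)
The paper never proves this lemma---it is imported verbatim from the literature---so there is no internal proof to compare against; your proposal has to stand on its own. Its first three stages do stand: well-definedness follows from $x^{en}\equiv\omega^{e}=\omega$ (using $e\equiv 1\pmod 3$ and $\omega^{3}=1$), bijectivity from $\gcd(3n,e)=1$, and the computation $\psi(x^{i})=\omega^{\lfloor ei/n\rfloor}x^{\,ei\bmod n}$ shows that $\psi$ acts on coordinates as a monomial matrix $PD$ whose diagonal entries are powers of $\omega$. That already yields the substantive content of the lemma: $\psi(C)$ is monomially, hence by Theorem \ref{isometric=monomial} isometrically, equivalent to $C$.

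The final paragraph, however, is a genuine gap, and not one you can close. The assertion that the scalars $\omega^{\lfloor ei/n\rfloor}$ ``fold into the constacyclic convention'' is a hope, not an argument: the only automorphisms of the ambient structure available for absorbing them (multiplication by $x^{m}$, which perturbs each exponent of $\omega$ by $0$ or $1$, and global scalar multiplication, which shifts all exponents by a constant) cannot cancel a diagonal whose exponents $\lfloor ei/n\rfloor$ run through all residues modulo $3$. Worse, the upgraded claim appears to be false. Take $n=5$, $e=7$: then $\psi$ carries the $\omega$-constacyclic code $C_{1}$ with generator $x^{2}+x+\omega$ (defining set $\{1,4\}$) to $C_{2}$ with generator $x^{2}+\omega x+\omega$ (defining set $\{7,13\}$). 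For a weight-three codeword over $\F_{4}$ the product of its three nonzero entries is unchanged under rescaling by $\lambda\in\F_{4}^{*}$ (since $\lambda^{3}=1$), so the multiset of these products taken over all ten weight-three supports is a permutation-equivalence invariant; a direct computation gives $\{1,1,1,\omega,\omega,\omega,\omega,\omega,\omega^{2},\omega^{2}\}$ for $C_{1}$ but $\{1,1,1,1,\omega,\omega^{2},\omega^{2},\omega^{2},\omega^{2},\omega^{2}\}$ for $C_{2}$. These codes are therefore monomially but not permutation equivalent. The correct conclusion of your argument is isometric (equivalently monomial) equivalence only; the parenthetical ``(permutation)'' cannot be reached by this route, and any downstream use of it (for instance the reverse implication of Theorem \ref{Palfy consta}) should be re-examined.
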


Let $A_1$ and $A_2$ be defining sets of two length $n$ $\omega$-constacyclic codes over $\F_4$. 
By Lemma $\ref{consta isometry}$, if there exists a multiplier $\mu_{e}$ defined on $\mathbb Z/3n\mathbb Z$ such that $\mu_e(A_1)=A_2$, then the $\omega$-constacyclic codes with the defining sets $A_1$ and $A_2$ are isometrically equivalent.  

\begin{theorem} \cite[Theorem 3.8.8]{Huffman}\label{Going down}
Let $C$ be an $[n,k,d]$ linear code over $\F_{4^t}$ for some $t \geq 1$. If $C$ has a basis of codewords over $\F_4$, then $C\vert_{\F_4}$, the restriction of $C$ to the codewords over $\F_4$, is an $[n,k,d]$ linear code over $\F_4$.  
\end{theorem}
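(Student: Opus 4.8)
The statement asserts two numerical facts about $C|_{\F_4}=C\cap\F_4^n$: that its $\F_4$-dimension equals $k$, and that its minimum distance equals $d$. The length is visibly $n$, and the $\F_4$-linearity of $C|_{\F_4}$ is immediate, since it is the intersection of the $\F_{4^t}$-subspace $C$ (which is in particular an $\F_4$-subspace of $\F_{4^t}^n$) with the $\F_4$-subspace $\F_4^n$. The plan is to establish the dimension claim by a Galois-descent argument built on the Frobenius map, and then to obtain the minimum distance equality by descending a minimum-weight codeword of $C$ to one with entries in $\F_4$.

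For the dimension, I would fix a basis $\{b_1,\dots,b_k\}\subseteq\F_4^n$ of $C$ over $\F_{4^t}$, as furnished by the hypothesis. Its $\F_4$-span is clearly contained in $C|_{\F_4}$, and the $b_i$ remain $\F_4$-linearly independent, so it suffices to prove the reverse inclusion. Here I would invoke the Frobenius automorphism $\varphi\colon x\mapsto x^4$ of $\F_{4^t}$, whose fixed field is exactly $\F_4$ (its fixed points solve $x^4=x$). Given $c\in C|_{\F_4}$, write $c=\sum_{i=1}^k\lambda_i b_i$ with uniquely determined $\lambda_i\in\F_{4^t}$. Applying $\varphi$ coordinatewise and using that every entry of $c$ and of each $b_i$ lies in $\F_4$, hence is fixed by $\varphi$, yields $c=\sum_{i=1}^k\lambda_i^{4}b_i$. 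Uniqueness of the coefficients then forces $\lambda_i^{4}=\lambda_i$, i.e. $\lambda_i\in\F_4$ for every $i$, so $c$ lies in the $\F_4$-span. This gives $C|_{\F_4}=\Span_{\F_4}\{b_1,\dots,b_k\}$ and $\dim_{\F_4}C|_{\F_4}=k$.

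For the minimum distance $d'$ of $C|_{\F_4}$, the inequality $d'\ge d$ is immediate from $C|_{\F_4}\subseteq C$. For $d'\le d$ I would descend a witness. Let $G$ be the $k\times n$ matrix whose rows are $b_1,\dots,b_k$; it has entries in $\F_4$ and generates $C$ over $\F_{4^t}$ and $C|_{\F_4}$ over $\F_4$. Choose $x\in\F_{4^t}^k$ with $\wt(xG)=d$, and set $S=\supp(xG)$, so $|S|=d$. Then $x$ annihilates every column of $G$ indexed outside $S$; these are $\F_4$-linear conditions, and the left null space of the associated $\F_4$-submatrix has the same rank over $\F_4$ as over $\F_{4^t}$ (a minor lying in $\F_4$ vanishes in $\F_4$ iff it vanishes in $\F_{4^t}$). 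Since that null space is nonzero over $\F_{4^t}$, it is nonzero over $\F_4$, so there is a nonzero $x'\in\F_4^k$ with $\supp(x'G)\subseteq S$. Then $x'G$ is a nonzero codeword of $C|_{\F_4}$ with $0<\wt(x'G)\le|S|=d$, whence $d'\le\wt(x'G)\le d$.

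The \emph{main obstacle} is the upper bound on the minimum distance. The dimension count is a routine Frobenius descent, but forcing a genuinely minimum-weight codeword to live in $\F_4^n$ is the delicate point: it hinges on recognizing that the vanishing pattern on the complement of the support cuts out an $\F_4$-defined linear subspace whose solvability is unchanged under extension of scalars. Once this rank-invariance principle is in place, the two nontrivial assertions follow, and the remaining verifications are bookkeeping.
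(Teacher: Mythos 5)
The paper does not prove this statement; it is quoted verbatim as \cite[Theorem 3.8.8]{Huffman}, so there is no in-paper proof to compare against. Your argument is correct and complete, and it is essentially the standard one behind that textbook result: Frobenius descent on the coefficients gives $\dim_{\F_4}(C\vert_{\F_4})=k$, and the observation that the rank of the $\F_4$-submatrix of columns off the support of a minimum-weight codeword is unchanged under extension of scalars yields $d'\le d$, with $d'\ge d$ trivial from the inclusion $C\vert_{\F_4}\subseteq C$. No gaps.
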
 

The code $C\vert_{\F_4}$ in the above theorem is known as a {\em subfield subcode} of the code $C$. 

\subsection{New results on equivalence of constacyclic codes over $\F_4$}\label{constacyclic2}

Our initial goal of this section is to show that certain affine bijections on the defining set of $\omega$-constacyclic codes over $\F_4$ produce codes with the same parameters. 
We denote the defining set of an $\omega$-constacyclic code with the generator polynomial $g(x)$ by $A_g$.

\begin{theorem}\label{affine2}
Let $C_1$ and $C_2$ be two $\omega$-constacyclic codes over $\mathbb{F}_4$ of length $n$ with the generator polynomials $g(x)$ and $h(x)$, respectively. Let $1\le j \le n$ be a positive integer such that $n$ divides $3j\deg(g(x))$. If the shift map $\phi_{3j}(x)$ defined on $\Z/3n\Z$ satisfies $\phi_{3j}(A_g)=A_h$, then $C_1$ and $C_2$ have the same parameters. 
\end{theorem}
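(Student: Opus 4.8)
The plan is to mimic the generator-polynomial argument used for the cyclic case in Theorem~\ref{affine conjecture}, but now working inside $\mathbb{F}_4[x]/\langle x^n-\omega\rangle$ and keeping track of the extra factor of $3$ that appears because the defining set lives in $\mathbb{Z}/3n\mathbb{Z}$. First I would translate the hypothesis $\phi_{3j}(A_g)=A_h$ into a relation between $g(x)$ and $h(x)$ in terms of the root set. Since $A_g=\{1+3i : g(\delta^{1+3i})=0\}$ and the roots of $x^n-\omega$ are exactly $\{\delta^{1+3i}:0\le i\le n-1\}$, the shift $s\mapsto (s+3j)\bmod 3n$ on exponents corresponds to multiplying each root $\delta^s$ by $\delta^{3j}$. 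Writing $\deg g(x)=|A_g|=:d$, this gives
\begin{equation*}
h(x)=\prod_{s\in A_h}(x-\delta^{s})=\prod_{s\in A_g}(x-\delta^{s+3j})=\delta^{3jd}\,g(\delta^{-3j}x),
\end{equation*}
exactly paralleling equation~(\ref{shift generator}); the divisibility hypothesis $n\mid 3j\,d$ is precisely what is needed for $\delta^{3jd}=(\delta^{n})^{3jd/n}=\omega^{3jd/n}$ to be an element of $\mathbb{F}_4$, so that the scalar relating $h$ to the twist of $g$ lies in the base field.

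Next I would show that the map $x\mapsto \delta^{-3j}x$ on roots is realized by an honest isometry on codewords. The natural candidate is the $\mathbb{F}_4$-linear map sending the codeword with polynomial $f(x)$ to the codeword obtained from $f(\delta^{-3j}x)$ after reducing modulo $x^n-\omega$; because $\delta^{3n}=1$ and $\delta^n=\omega$, the substitution permutes the coordinate positions up to multiplication by powers of $\omega$, hence acts as a monomial transformation on $\mathbb{F}_4^n$ and preserves Hamming weight. One then checks that this monomial map carries $C_1$ onto the code whose generator polynomial has root set $\delta^{3j}\cdot(\text{roots of }g)$, which by the displayed computation is exactly the root set of $h$, i.e. it carries $C_1$ onto $C_2$. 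Since a weight-preserving $\mathbb{F}_4$-linear bijection between the two codes is an isometry, Theorem~\ref{isometric=monomial} gives monomial equivalence and in particular identical parameters $[n,k,d]$.

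The step I expect to be the main obstacle is verifying rigorously that the substitution $f(x)\mapsto f(\delta^{-3j}x)\bmod (x^n-\omega)$ is well defined on the quotient ring and acts as a \emph{monomial} (not merely a $K$-linear) map on $\mathbb{F}_4^n$; one must confirm that each coordinate $x^m$ is sent to a scalar multiple of a single coordinate $x^{m'}$, with the scalar a power of $\omega\in\mathbb{F}_4$, and that the scalar is nonzero. Concretely this amounts to checking that $\delta^{-3jm}x^m$ reduces, under $x^n=\omega$, to $\omega^{c(m)}x^{m\bmod n}$ for an appropriate integer exponent $c(m)$, which uses $\delta^{3n}=1$ together with $\delta^n=\omega$. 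A secondary subtlety is ensuring the image code is genuinely defined over $\mathbb{F}_4$: this is where the constant-term conjugacy trick of Theorem~\ref{affine conjecture} (using $(h(0))^4=h(0)$) reappears, guaranteeing the scalar $\delta^{3jd}$ is $\mathbb{F}_4$-rational, which the divisibility condition $n\mid 3j\deg(g(x))$ supplies. Once these two points are settled, the conclusion that $C_1$ and $C_2$ share the same parameters follows immediately.
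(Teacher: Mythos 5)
There is a genuine gap at exactly the point you flag as ``the main obstacle,'' and the check you propose there in fact fails. The substitution $f(x)\mapsto f(\delta^{-3j}x)$ sends the coordinate $x^m$ to $\delta^{-3jm}x^m$ (no reduction modulo $x^n-\omega$ is even needed, since degrees do not increase), so it is a \emph{diagonal} map with scalars $\delta^{-3jm}$. These scalars lie in $\F_4^*=\{\delta^0,\delta^{n},\delta^{2n}\}$ only when $n\mid 3jm$, which fails for general $m$; the hypothesis $n\mid 3j\deg(g)$ only forces the single global scalar $\delta^{3j|A_g|}$ into $\F_4$, not the per-coordinate ones. Consequently the map is not a monomial transformation of $\F_4^n$ and does not carry $C_1$ into $\F_4^n$ at all, so you cannot invoke Theorem~\ref{isometric=monomial} to get monomial equivalence over $\F_4$. (You may be conflating this map with $\psi(f(x))=f(x^e)$ from Lemma~\ref{affine1}, which really does act by permuting positions up to powers of $\omega$; the shift map has no such realization.)

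The paper closes this gap differently: it treats $\Theta(f(x))=f(\delta^{-3j}x)$ as a weight-preserving ring isomorphism of $L[x]/\langle x^n-\omega\rangle$ over the extension field $L$, uses the divisibility hypothesis only to conclude $\Theta(g(x))=a\,h(x)$ with $a\in\F_4$, so that the $L$-codes generated by $g$ and $h$ are isometric over $L$, and then descends to $\F_4$ via the subfield-subcode result (Theorem~\ref{Going down}), since both $L$-codes have bases over $\F_4$. This is also why the theorem asserts only that $C_1$ and $C_2$ ``have the same parameters'' rather than that they are monomially equivalent over $\F_4$ --- the stronger conclusion your route aims for is not established by this argument. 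Your computation $h(x)=\delta^{3jd}g(\delta^{-3j}x)$ and the role of $n\mid 3jd$ are correct and match the paper; what is missing is the passage through codes over $L$ and Theorem~\ref{Going down} in place of the (false) claim that the substitution is monomial over $\F_4$.
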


\begin{proof}
Consider the map $\Theta: L[x]/\langle x^n-\omega \rangle \rightarrow L[x]/\langle x^n-\omega \rangle$ defined by $\Theta (f(x))=f(\delta^{-3j}x)$. First note that $\Theta(x^n-\omega)=(\delta^{-3j})^nx^n-\omega=x^n-\omega$. This shows that $\Theta$ is well-defined. Moreover, since $\Theta$ is a substitution map, it is  also 
a homomorphism and preserves the Hamming weight of vectors. 
The ring $L[x]/\langle x^n-\omega \rangle$ has finitely many elements, and to prove $\Theta$ is a bijection, it suffices to show that $\Theta$ 
is surjective. Let $f(x) \in L[x]/\langle x^n-\omega \rangle$. Then $f(\delta^{3j}x)\in L[x]/\langle x^n-\omega \rangle$ and $\Theta(f(\delta^{3j}x))=f(x)$. This implies that $\Theta$ is an isomorphism. 

We can factorize $g(x)$ as $g(x)=\displaystyle\prod_{i\in A_{g}} (x-\delta^i)$ over $L$. Then 
\begin{equation}\label{isometry factorization}
\Theta(g(x))=\prod_{i\in A_{g}}(\delta^{-3j}x-\delta^i)=(\delta^{-3j})^{|A_{g}|}\prod_{i\in A_{g}}(x-\delta^{(i+3j)})=a\prod_{i\in A_{g}}(x-\delta^{(i+3j)})=ah(x)
\end{equation}
where $a=(\delta^{-3j})^{|A_{g}|} \in \F_4$ as $n$ divides $3j\deg(g(x))$. 
Note that in the last equality of (\ref{isometry factorization})
we used the assumption that $\phi_{3j}(A_g)=A_h$.
 Thus $\Theta$ gives an isometry of linear codes between the $\omega$-constacyclic codes over $L$ of length $n$ generated by $g(x)$ and $h(x)$. 
Both of the mentioned codes have a basis over $\F_4$. Thus by Theorem $\ref{Going down}$, their subfield subcodes over $\F_4$, which are $C_1$ and $C_2$, have the same parameters.
\end{proof}
 

We give a generalization of  Theorem~\ref{affine2}.

\begin{corollary}\label{affine consta}
Let $C_1$ and $C_2$ be two $\omega$-constacyclic codes over $\mathbb{F}_4$ of length $n$ with the generator polynomials $g(x)$ and $h(x)$, respectively. If there exists a map $\theta(x)=(ex+3j) \bmod {(3n)}$ on $\Z/3n\Z$ such that 
\begin{itemize}
\item $e=3k+1$ and $\gcd(3n,e)=1$, 
\item $n$ divides $3j\deg(g(x))$,
\item and $\theta(A_g)=A_h$,
\end{itemize}
 then $C_1$ and $C_2$ have the same parameters. 
\end{corollary}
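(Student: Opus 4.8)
The plan is to reduce Corollary~\ref{affine consta} to the already-proved Theorem~\ref{affine2} by factoring the affine map $\theta$ into a multiplier part and a pure shift part, and then composing the two isometries. Concretely, since $e=3k+1$ satisfies $e\equiv 1\pmod 3$ and $\gcd(3n,e)=1$, the multiplier $\mu_e$ on $\mathbb Z/3n\mathbb Z$ falls exactly under the hypotheses of Lemma~\ref{consta isometry} (recall $n$ is odd throughout this section, so $\gcd(n,2)=1$ holds automatically). Thus $\mu_e$ induces an isometric equivalence of $\omega$-constacyclic codes, and in particular it sends the code with defining set $A_g$ to the $\omega$-constacyclic code whose defining set is $\mu_e(A_g)$.

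First I would set $A' = \mu_e(A_g) = \{(ei)\bmod 3n : i\in A_g\}$ and let $C'$ be the $\omega$-constacyclic code of length $n$ with defining set $A'$, with generator polynomial $g'(x)$. By Lemma~\ref{consta isometry}, $C_1$ and $C'$ are isometrically equivalent, hence have the same parameters; note also that $\mu_e$ is a bijection on defining sets, so $|A'|=|A_g|=\deg(g(x))=\deg(g'(x))$. Next I would observe that $\theta = \phi_{3j}\circ\mu_e$ as maps on $\mathbb Z/3n\mathbb Z$, so the hypothesis $\theta(A_g)=A_h$ reads $\phi_{3j}(A')=A_h$. At this point the pure shift $\phi_{3j}$ carries the defining set $A'$ to $A_h$, and the divisibility hypothesis ``$n$ divides $3j\deg(g(x))$'' becomes ``$n$ divides $3j\deg(g'(x))$'' because the two degrees are equal. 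Therefore the triple of hypotheses of Theorem~\ref{affine2} is satisfied for the pair $(C',C_2)$, and that theorem yields that $C'$ and $C_2$ have the same parameters.

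Combining the two steps, $C_1$ has the same parameters as $C'$, which has the same parameters as $C_2$; hence $C_1$ and $C_2$ have the same parameters, as claimed. The only genuinely substantive points to verify carefully are that $\theta$ really decomposes as $\phi_{3j}\circ\mu_e$ modulo $3n$ (a direct check, since $\theta(x)=\mu_e(x)+3j=(ex)+3j$), that the multiplier step is licensed precisely by the congruence and coprimality conditions imposed on $e$, and that the shift step's divisibility condition transfers across the multiplier because multipliers preserve cardinalities of defining sets.

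I do not expect a serious obstacle here: the corollary is essentially a composition statement, and both ingredient results are available in the excerpt. The mildest subtlety is bookkeeping around defining sets versus generator polynomials—one must confirm that $\mu_e$ indeed maps defining sets of $\omega$-constacyclic codes to defining sets of $\omega$-constacyclic codes (which it does, since $e\equiv 1\pmod 3$ guarantees $\mu_e$ preserves the residue class $1+3\mathbb Z$ that indexes roots of $x^n-\omega$) and that the degree, and hence the divisibility hypothesis, is genuinely invariant under the multiplier action. Once these are noted, the proof is just the concatenation of Lemma~\ref{consta isometry} and Theorem~\ref{affine2}.
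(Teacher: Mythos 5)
Your proof is correct and is essentially the paper's own argument: the paper proves this corollary by citing exactly the same two ingredients (Lemma~\ref{consta isometry} for the multiplier $\mu_e$ and Theorem~\ref{affine2} for the shift $\phi_{3j}$), composed via the factorization $\theta=\phi_{3j}\circ\mu_e$. Your additional bookkeeping --- that $e\equiv 1\pmod 3$ keeps defining sets inside the residue class $1+3\mathbb{Z}$ and that the divisibility hypothesis transfers because $|\mu_e(A_g)|=|A_g|=\deg g$ --- is a faithful expansion of what the paper leaves implicit.
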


\begin{proof}
The proof follows from Lemma $\ref{affine1}$ and Theorem $\ref{affine2}$.
\end{proof}

Next, we show that the condition ``$n$ divides $ 3j\deg(g(x))$'' of Theorem $\ref{affine2}$ is a necessary condition for the existence of a shift bijection between the defining sets of two $\omega$-constacyclic codes over $\F_4$. 
\begin{theorem}\label{shift sufficient}
Let $A_1$ and $A_2$ be defining sets of two $\omega$-constacyclic codes of length $n$ over $\F_4$ and $b$ be a positive integer. If the shift map $\phi_b(x)$ defined on $\Z/3n\Z$ satisfies $\phi_b(A_1)=A_2$, then $3$ divides $b$ 
and $n$ divides $ b |A_1|$.
\end{theorem}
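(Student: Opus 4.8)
The plan is to establish the two divisibility conditions separately: the first from the residue structure of constacyclic defining sets modulo $3$, and the second by an argument parallel to the proof of Theorem~\ref{affine conjecture}, with the $n$-th root of unity replaced by the $3n$-th root $\delta$.

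First I would prove that $3\mid b$. Recall that the roots of $x^n-\omega$ are exactly the powers $\delta^{1+3j}$, so the defining set of any $\omega$-constacyclic code over $\F_4$ is a subset of $\{1+3j: 0\le j \le n-1\}$; in particular every element of $A_1$ and of $A_2$ is congruent to $1 \pmod 3$. Assuming $A_1$ is non-empty, pick any $a\in A_1$. Then $a\equiv 1 \pmod 3$, while $\phi_b(a)=a+b\in A_2$ also satisfies $a+b\equiv 1 \pmod 3$. Subtracting gives $b\equiv 0 \pmod 3$, so $3\mid b$. (If $A_1=\emptyset$ then both codes equal the full space $\F_4^n$ and the remaining claim $n\mid b|A_1|$ holds trivially.)

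Next I would derive $n\mid b|A_1|$ by mimicking the generator-polynomial computation of Theorem~\ref{affine conjecture}. Let $g_1(x)=\prod_{i\in A_1}(x-\delta^i)$ and $g_2(x)=\prod_{i\in A_2}(x-\delta^i)$ be the generator polynomials of $C_1$ and $C_2$, both lying in $\F_4[x]$. Using $A_2=\phi_b(A_1)$ one factors
$$g_2(x)=\prod_{i\in A_1}(x-\delta^{i+b})=\delta^{b|A_1|}\, g_1(\delta^{-b}x),$$
so in particular $g_2(0)=\delta^{b|A_1|}g_1(0)$. Since $g_2(x)\in\F_4[x]$, its constant term is fixed by the fourth-power map, that is $\big(g_2(0)\big)^4=g_2(0)$. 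Writing $c=g_1(0)$, which is a non-zero element of $\F_4$ because $g_1(x)\mid x^n-\omega$ and all roots of $x^n-\omega$ are non-zero, and using $c^4=c$, this identity becomes $\delta^{4b|A_1|}c=\delta^{b|A_1|}c$, whence $\delta^{3b|A_1|}=1$.

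Finally, since $\delta$ is a primitive $3n$-th root of unity, the relation $\delta^{3b|A_1|}=1$ forces $3n\mid 3b|A_1|$, i.e. $n\mid b|A_1|$, which completes the proof. The computation is routine; the only points requiring care are recording that a non-trivial $\omega$-constacyclic generator polynomial has non-zero constant term and that constacyclic defining sets consist solely of residues congruent to $1\pmod 3$ — the two structural facts driving the two conclusions. I expect no substantial obstacle, as the argument is a direct $\F_4$-analogue of Theorem~\ref{affine conjecture}; the mild improvement (obtaining $n\mid b|A_1|$ rather than $n\mid 3b|A_1|$) is exactly what comes from $\delta$ being a $3n$-th, rather than an $n$-th, root of unity.
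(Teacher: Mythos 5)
Your proof is correct, and both halves reach the same conclusions as the paper by the same underlying mechanism; the only difference is in how the second divisibility is packaged. For $3\mid b$ you argue exactly as the paper does, from the fact that every element of a constacyclic defining set is $\equiv 1 \pmod 3$. For $n \mid b|A_1|$, the paper takes a shortcut: it observes that $x^n-\omega \mid x^{3n}-1$, so $A_1$ and $A_2$ are also defining sets of length-$3n$ cyclic codes over $\F_4$, and then invokes Theorem~\ref{affine conjecture} to get $3n \mid 3b|A_1|$, i.e.\ $n\mid b|A_1|$. You instead inline the computation behind Theorem~\ref{affine conjecture} directly in the constacyclic setting, replacing $\alpha$ by the primitive $3n$-th root $\delta$: the identity $g_2(0)=\delta^{b|A_1|}g_1(0)$ together with $\bigl(g_2(0)\bigr)^4=g_2(0)$ and $g_1(0)\in\F_4\setminus\{0\}$ forces $\delta^{3b|A_1|}=1$, hence $3n\mid 3b|A_1|$. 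Both routes are valid; the paper's is shorter, while yours is self-contained and in fact slightly cleaner logically, since the forward implication of Theorem~\ref{affine conjecture} is stated under the hypothesis of isometric equivalence through $\phi_b$, whereas here one only has the combinatorial condition $\phi_b(A_1)=A_2$ (the paper is really reusing the computation inside that proof, which only needs the combinatorial condition — exactly what you reproduce). The degenerate case $A_1=\emptyset$, where $3\mid b$ cannot actually be deduced, is glossed over by both you and the paper, but that is an artifact of the theorem statement rather than a flaw in your argument.
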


\begin{proof}
As we mentioned earlier, roots of $x^n-\omega$ are $\delta^{3k+1}$ for $0\le k\le n-1$. Thus, 
$3$ divides $b$ 
 as otherwise for each $s \in \phi_b(A_1)$, $\delta^s$ is not a root of $x^n-\omega$.  
Moreover, $x^n-\omega \mid x^{3n}-1$ and therefore defining set of each $\omega$-constacyclic code over $\F_4$ of length $n$ is also defining set of a length $3n$ linear cyclic code over $\F_4$. Now Theorem $\ref{affine conjecture}$ implies that $3n \mid 3b|A_1|$ which is equivalent to $n\mid b |A_1|$.
\end{proof}

Our final goal of this section is to show that all permutation equivalent $\omega$-constacyclic codes of length $n$ over $\F_4$ such that $(3n,\phi(3n))=1$ are given by the action of multipliers on their defining sets.
This result is analogous to the result of Theorem $\ref{isoequivalent}$ for linear cyclic codes.

\begin{theorem}\label{Palfy consta}
Let $C_1$ and $C_2$ be two non-trivial $\omega$-constacyclic codes over $\F_4$ of length $n$ with defining sets $A_1$ and $A_2$ such that $\gcd(3n,\phi(3n))=1$. Then $C_1$ and $C_2$ are permutation equivalent if and only if there exists a multiplier $\mu_e$ defined on $\mathbb Z/3n\mathbb Z$ such that $\mu_e(A_1)=A_2$ for some positive integer $e\equiv 1 \pmod 3$.
\end{theorem}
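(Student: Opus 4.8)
The reverse direction is immediate and I would dispose of it first: if $\mu_e$ is a multiplier on $\Z/3n\Z$ with $e\equiv 1\pmod 3$ and $\mu_e(A_1)=A_2$, then since $n$ is odd we may invoke Lemma~\ref{consta isometry} with this $e$, which produces an isometry (hence a permutation equivalence) between $C_1$ and $C_2$. So the whole content lies in the forward direction, and the next thing I would record is a reduction that the hypothesis hands us for free. Because $3\mid 3n$, the condition $\gcd(3n,\phi(3n))=1$ forces $3$ not to divide $n$ (if $3\mid n$ then $9\mid 3n$, so $3\mid\phi(3n)$, a contradiction), and moreover $\gcd(n,\phi(n))\mid\gcd(3n,\phi(3n))=1$. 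Thus throughout the forward direction we may assume $\gcd(n,3)=1$, which places us in the range where $\omega$-constacyclic and cyclic codes are linked and where Theorem~\ref{isoequivalent} is available for lengths $n$ and $3n$.

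The plan for the forward direction is to transport, into the constacyclic (monomial) setting, the P\'alfy-type argument underlying Theorem~\ref{isoequivalent}, by realizing $C_1$ and $C_2$ as genuine Cayley objects of the cyclic group $\Z/3n\Z$. Concretely, let $T$ be the $\omega$-constacyclic shift on $\F_4^n$; it is a monomial matrix with $T^n=\omega I$, hence $T^{3n}=I$, so $\langle T\rangle\cong\Z/3n\Z$ and $C_1,C_2$ are exactly the $\langle T\rangle$-invariant $\F_4$-subspaces recorded by $A_1,A_2$. To convert this monomial action into a regular permutation action, I would pass to the $3n$-element set $\Omega=\{\,\omega^a e_i : a\in\Z/3\Z,\ 0\le i\le n-1\,\}$ of scaled coordinate vectors, on which $T$ acts as a single $3n$-cycle, i.e. as the regular representation of $\Z/3n\Z$. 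Under the identification $\Omega\cong\Z/3n\Z$, each $C_j$ determines a $\Z/3n\Z$-invariant coloured object on $\Omega$ whose spectrum is precisely the defining-set data $A_j\subseteq\Z/3n\Z$ already used in the proof of Theorem~\ref{shift sufficient}.

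With this dictionary in place, a permutation equivalence $P\in S_n$ with $PC_1=C_2$ extends to a bijection $\widetilde P$ of $\Omega$ commuting with the scalar action, and $\widetilde P$ carries the coloured object of $C_1$ to that of $C_2$. Since $C_2$ is $\langle T\rangle$-invariant and also equals $\widetilde P\,C_1$, the two regular cyclic subgroups $\langle T\rangle$ and $\langle \widetilde P T\widetilde P^{-1}\rangle$ both lie in the automorphism group of the coloured object attached to $C_2$. This is where the hypothesis enters: because $\gcd(3n,\phi(3n))=1$, P\'alfy's theorem shows that $\Z/3n\Z$ is a CI-group with respect to all such objects, so the isomorphism may be taken inside the holomorph $\mathrm{Hol}(\Z/3n\Z)=N_{S_{3n}}(\langle T\rangle)$, that is, as a shift composed with a group automorphism of $\Z/3n\Z$. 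Shifts fix the defining set, while group automorphisms of $\Z/3n\Z$ are exactly the multipliers $\mu_e$; hence $\mu_e(A_1)=A_2$ for some $e$ with $\gcd(e,3n)=1$. Finally, since $A_1$ and $A_2$ both lie in the residue class $1\pmod 3$ (all roots of $x^n-\omega$ are the $\delta^{1+3j}$), the identity $\mu_e(A_1)=A_2$ forces $e\cdot 1\equiv 1\pmod 3$, i.e. $e\equiv 1\pmod 3$, the asserted normalization.

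The main obstacle, and the step I would treat most carefully, is the passage between the length $n$ monomial picture and the length $3n$ regular permutation picture: one must verify that an honest coordinate permutation of $\F_4^n$ induces a colour-preserving, $\Z/3n\Z$-equivariant isomorphism on $\Omega$ (so that the scalar twist by $\omega$ is tracked correctly), and that the descent from an element of $\mathrm{Hol}(\Z/3n\Z)$ back to an action on the defining set is faithful. Equivalently, one must justify that the P\'alfy conclusion for length $3n$ cyclic codes, namely Theorem~\ref{isoequivalent} with $m=3n$, applies to the objects arising from $C_1,C_2$ and not merely to ordinary cyclic codes; this is exactly the content the monomial-to-permutation lift on $\Omega$ is designed to supply. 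An alternative I would keep in reserve, available precisely because $\gcd(n,3)=1$, is to transport $C_1,C_2$ to cyclic codes of length $n$ through the isometry of \cite{bierbrauer}, apply Theorem~\ref{isoequivalent} at length $n$, and then translate the resulting $\Z/n\Z$-multiplier back to a $\Z/3n\Z$-multiplier with $e\equiv 1\pmod 3$; the delicate point there becomes tracking how the isometry transforms defining sets.
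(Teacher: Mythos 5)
Your architecture is the same as the paper's: dispose of the reverse direction by Lemma~\ref{consta isometry}, and for the forward direction lift $A_1,A_2$ to defining sets of length-$3n$ cyclic codes $D_1,D_2$, transfer the permutation equivalence of $C_1,C_2$ to one of $D_1,D_2$, invoke the hypothesis $\gcd(3n,\phi(3n))=1$ to obtain a multiplier on $\Z/3n\Z$, and read off $e\equiv 1\pmod 3$ from the fact that the defining sets lie in the residue class $1\bmod 3$. However, the step that carries essentially all of the content --- that a coordinate permutation $P$ of $\F_4^n$ with $C_1=C_2P$ induces a coordinate permutation of $\F_4^{3n}$ carrying $D_1$ to $D_2$ --- is exactly the step you defer (``the step I would treat most carefully,'' ``the main obstacle'') and never carry out. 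As written the argument is incomplete at its pivot: the P\'alfy/CI machinery has nothing to act on until this transfer is established, and it is not a formality, since $D_1$ is a genuinely larger object than the embedded copy of $C_1$ and one must check that the lifted permutation preserves all of $D_1$, not only the low-degree part.

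The paper closes this gap with a short concrete computation rather than an appeal to Cayley objects. Writing $H_1$ for a generalized parity check matrix of $C_1$ built from $A_1=\{a_1,\dots,a_r\}$, the fact that every $a_j\equiv 1\pmod 3$ gives $\delta^{a_jn}=\omega^{a_j}=\omega$, so a generalized parity check matrix for $D_1$ is the block matrix $H=\begin{bmatrix}H_1&\omega H_1&\omega^2H_1\end{bmatrix}$; taking $P_3=\mathrm{diag}(P,P,P)$ one verifies, for $i=rn+s$, that $(HP_3)\bigl[x^{i}g_2(x)\bigr]_{3n}=\omega^{r}H_1P\bigl[x^{s}g_2(x)\bigr]_{n}=0$, i.e.\ $HP_3$ is a generalized parity check matrix for $D_2$. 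This is precisely the ``colour-preserving, equivariant lift'' you postulate on $\Omega$, and once it is in hand no extension of P\'alfy's theorem to coloured objects is required: $D_1$ and $D_2$ are honest cyclic codes of length $3n$, so Theorem~\ref{isoequivalent} applies verbatim. Your side observations (that $\gcd(3n,\phi(3n))=1$ forces $3\nmid n$, and the alternative route through the isometry of \cite{bierbrauer}) are correct but not needed by the paper's argument. To complete a proof along your lines, supply the block-diagonal computation above.
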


\begin{proof}
 
We only prove the forward direction as the reverse follows from Lemma $\ref{affine1}$.
Since $x^n-\omega \mid x^{3n}-1$,
the sets $A_1$ and $A_2$ are also defining sets of two linear cyclic codes $D_1$ and $D_2$ of length $3n$ over $\F_4$, respectively. 
It is enough to show that $D_1$ and $D_2$ are permutation equivalent, and since $(3n,\phi(3n))=1$, Theorem \ref{isoequivalent} implies the existence of a multiplier $\mu_e$ defined on $\mathbb Z/3n\mathbb Z$ such that $\mu_e(A_1)=A_2$. Moreover, the fact that $A_1,A_2 \subset \{3k+1: 0\le k \le n-1\}$ implies $e\equiv 1 \pmod 3$. 

Let $A_1=\{a_1,a_2,\cdots,a_r\}$ and the matrix  
$$H_1=\begin{bmatrix}
1& \delta^{a_1}& \cdots & \delta^{a_1(n-1)} \\
\vdots&\vdots & \cdots & \vdots \\
1& \delta^{a_r}& \cdots & \delta^{a_r(n-1)} \\
\end{bmatrix} $$
be a generalized parity check matrix for the code $C_1$. Since $\delta^n=\omega$
the matrix
\begin{equation}\label{GPC1}
H=\begin{bmatrix}
1& \delta^{a_1}& \cdots & \delta^{a_1(3n-1)} \\
\vdots&\vdots & \cdots & \vdots \\
1& \delta^{a_r}& \cdots & \delta^{a_r(3n-1)} \\
\end{bmatrix}=\begin{bmatrix}
H_1&\omega H_1&\omega^2H_1\\
\end{bmatrix}
\end{equation}
is a generalized parity check matrix for $D_1$. 
As the codes $C_1$ and $C_2$ are permutation equivalent, by Lemma $\ref{generalized equivalence}$, there exists a permutation matrix $P$ such that $H_1P$ is a generalized parity check matrix for $C_2$. Let 
\begin{equation}\label{matrix m}
P_{3}=\begin{bmatrix}
P& 0&0 \\
0& P&0 \\
0& 0&P \\
\end{bmatrix}
\end{equation}
be the $3n\times 3n$ permutation matrix over $\F_4$ containing $3$ copies of $P$ on the main diagonal. Let $g_2(x)$ be the generator polynomial of $C_2$. By Lemma $\ref{generalized equivalence}$, $D_1$ and the linear code over $\F_4$ with the generalized parity check matrix $HP_{3}$ are permutation equivalent. Next, we show that $HP_{3}$ is also a  generalized parity check matrix for $D_2$.   

We denote the length $n$ and $3n$ column vectors corresponding to a polynomial $f(x)\in \F_4[x]/\langle x^n-1\rangle$ by $\left[ f(x)\right]_n$ and $\left[ f(x)\right]_{3n}$, respectively.
Note that 
\begin{equation*} HP_{3}=\begin{bmatrix}
 H_1P & \omega H_1P &\omega^2 H_1P\\ \end{bmatrix}.
 \end{equation*}
 Let $g_2(x)=\sum_{j=0}^{n-1}b_jx^j$ be the generator polynomial of the code $D_2$ and $0\le i \le 3n-1$. The vector $\left[ x^i g_2(x) \right]_{3n}$ has at most $n$ non-zero coordinates. Let $i=rn+s$, where $0\le s<n$ and $0 \le r \le 2$.
 Then \begin{equation}\label{equ3}
\begin{split}
 (HP_{3})\left[ x^i g_2(x) \right]_{3n}&=
 \omega^r H_1P \left[x^sg_2(x) \right]_n=0.
 \end{split}
\end{equation}
The last equality of $(\ref{equ3})$ follows from the fact that $H_1P$ is a generalized parity check matrix for $C_2$. 
This shows that  $HP_{3}$ is a  generalized parity check matrix for $D_2$. Hence
 $D_1$ and $D_2$ are permutation equivalent and this completes the proof. 
\end{proof}

%

\section{New linear and quantum codes}\label{applications}
The results of this paper make the search algorithm for new linear 
classical and quantum codes more efficient. In practice, the parameters of quantum codes are known much less than classical linear codes in the literature,
as can be seen in the tables \cite{Grassl}. 
Results about equivalence of linear codes
can be used to prune branches of the search algorithms
for new good codes.
Hence pruning the search algorithm for linear codes also helps to discover more record-breaking quantum codes.
In particular, we provide examples of two new quantum codes and one new linear code over $\F_4$ with a better minimum distance than the previous best codes with the same length and dimension. 

The following connection between quaternary linear codes and binary quantum codes will be used in the construction of our new quantum codes. The Hermitian dual of a linear code $C$ over $\F_4$ is denoted by $C^{\bot_h}$.

\begin{theorem}\cite[Theorem 2]{Calderbank}\label{quantum def}
 Let $C$ be a linear $[n,k,d]$ code over $\F_4$ such that $C^{\bot_h}\subseteq C$. Then we can construct an $[[n, 2k-n,d']]$ {binary quantum code}, where $d'$
 is the minimum weight in $C\setminus C^{\bot_h}$.
 If $C= C^{\bot_h}$, then $d'=d$.
 \end{theorem}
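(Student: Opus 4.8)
The plan is to realize $C$ and its Hermitian dual inside the stabilizer formalism for binary quantum codes, by translating the $\F_4$-Hermitian structure into the binary symplectic structure on $n$ qubits. First I would fix the $\F_2$-basis $\{1,\omega\}$ of $\F_4$ and identify $\F_4^n$ with $\F_2^{2n}$ by sending $a=u+\omega v$ (with $u,v\in\F_2^n$) to the pair $(u\mid v)$, which labels the Pauli operator $X^uZ^v$ modulo phases. The Pauli group on $n$ qubits modulo phases is isomorphic to $(\F_2^{2n},+)$, and two operators commute precisely when their images are orthogonal under the symplectic form $\big((u\mid v),(s\mid t)\big)\mapsto u\cdot t+v\cdot s$. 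Writing $\Tr=\Tr_{\F_4/\F_2}$, the key computation I would carry out is that $\Tr(x+\omega y)=y$ for $x,y\in\F_2$, whence for $a=u+\omega v$ and $b=s+\omega t$ the $\omega$-coefficient of $a\overline b$ equals $u\cdot t+v\cdot s$. Therefore $\sum_i\Tr(a_i\overline{b_i})$ equals the symplectic form of the images of $a$ and $b$, so Hermitian-trace orthogonality over $\F_4$ coincides with symplectic orthogonality over $\F_2$. I would also note that $X^uZ^v$ acts nontrivially on qubit $i$ iff $a_i\ne 0$, so the qubit support of the Pauli operator equals the Hamming weight of $a\in\F_4^n$.

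Next I would prove the bridging lemma that for an $\F_4$-linear code the Hermitian dual and the trace-Hermitian (symplectic) dual coincide, $C^{\bot_h}=C^{\bot_t}$. The inclusion $C^{\bot_h}\subseteq C^{\bot_t}$ is immediate, and for the converse I would use $\F_4$-linearity: if $\Tr\langle u,v\rangle_h=0$ for all $u\in C$, then applying this to $\omega u\in C$ as well yields both $\Tr(c)=0$ and $\Tr(\omega c)=0$ for $c=\langle u,v\rangle_h$, and a direct check ($\Tr(c)=y$, $\Tr(\omega c)=x+y$ for $c=x+\omega y$) forces $c=0$. With this lemma, the hypothesis $C^{\bot_h}\subseteq C$ says exactly that the image $S$ of $C^{\bot_h}$ in $\F_2^{2n}$ is a symplectic self-orthogonal (isotropic) subspace, since $C^{\bot_h}\subseteq C=(C^{\bot_h})^{\bot_h}=(C^{\bot_h})^{\bot_t}$ translates to $S\subseteq S^{\bot_s}$.

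With $S$ identified as a valid stabilizer, I would invoke the standard stabilizer construction: an isotropic subspace $S\subseteq\F_2^{2n}$ defines a binary quantum code on $n$ qubits encoding $n-\dim_{\F_2}S$ logical qubits, whose normalizer is the symplectic dual $S^{\bot_s}$ and whose minimum distance is the least qubit-support weight of an element of $S^{\bot_s}\setminus S$. Here $\dim_{\F_4}C^{\bot_h}=n-k$ gives $\dim_{\F_2}S=2(n-k)$, so the number of logical qubits is $n-2(n-k)=2k-n$ (nonnegative precisely because $C^{\bot_h}\subseteq C$). Moreover $S^{\bot_s}$ is the image of $(C^{\bot_h})^{\bot_t}=(C^{\bot_h})^{\bot_h}=C$, so the nontrivial logical operators are exactly the elements of $C\setminus C^{\bot_h}$, and the weight correspondence of the first paragraph gives $d'=\min\{\wt(v):v\in C\setminus C^{\bot_h}\}$. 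Finally, when $C=C^{\bot_h}$ one has $2k-n=0$ and $C\setminus C^{\bot_h}=\emptyset$, so the distance is read off from the nonzero codewords of $C$ itself, yielding $d'=d$.

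The main obstacle I expect is not the stabilizer machinery, which is standard, but the careful bookkeeping of the two dualities. One must verify that the Hamming weight on $\F_4^n$ matches the qubit-support weight of the associated Pauli operator, so that $d'$ is genuinely an $\F_4$-computation, and that the identity $C^{\bot_h}=C^{\bot_t}$ is applied consistently to replace symplectic duals by Hermitian duals in both the dimension count $2k-n$ and the identification of the normalizer with $C$. Keeping these translations aligned is where the proof must be handled with care.
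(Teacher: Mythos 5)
The paper does not prove this statement at all---it is quoted verbatim from \cite[Theorem 2]{Calderbank} as background---so there is no internal proof to compare against; measured against the original Calderbank--Rains--Shor--Sloane argument, your reconstruction is essentially the same proof and is correct: the identification of $\F_4^n$ with $\F_2^{2n}$ carrying the Hermitian trace form to the symplectic form (your computations $\Tr(x+\omega y)=y$ and $\omega$-coefficient of $a\overline b$ equal to $u\cdot t+v\cdot s$ check out), the lemma $C^{\bot_h}=C^{\bot_t}$ for $\F_4$-\emph{linear} codes via evaluating at both $u$ and $\omega u$, the dimension count $n-2(n-k)=2k-n$, and the weight correspondence are exactly the right ingredients. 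The only step you gloss over is lifting the isotropic subspace $S\subseteq\F_2^{2n}$ to an actual abelian subgroup of the Pauli group not containing $-I$ (a sign-choice argument), which is standard stabilizer machinery and fair to invoke.
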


Our main tool for construction
of new quantum codes is the {\em nearly self-orthogonal construction} of quantum codes provided below. 
\begin{theorem}\cite[Theorem 2]{Lisonek}\label{lisonek construction}
Let $C$ be an ${[n,k]}$ linear code over $\F_4$ and $e=n-k- \dim(C \cap C^{\bot_h})$. Then there exists a binary
quantum code with parameters $[[n+e,2k-n+e,d]]$, where
$$ d\geq \min\{d(C), d(C+ C^{\bot_h}) +1\}.$$ 
\end{theorem}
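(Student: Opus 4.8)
The plan is to obtain the quantum code from the dual-containing construction of Theorem~\ref{quantum def}, applied not to $C$ itself but to a length $n+e$ extension of $C$ that has been forced to contain its Hermitian dual by appending $e$ carefully chosen coordinates. Throughout I write $\langle u,v\rangle=\sum_i u_i\overline{v_i}$ for the Hermitian form on $\F_4$-vectors, so that $C^{\bot_h}=\{x:\langle x,c\rangle=0\ \text{for all}\ c\in C\}$, and I call $H=C\cap C^{\bot_h}$ the Hermitian hull of $C$. Since $\dim C^{\bot_h}=n-k$, the defining quantity is $e=\dim C^{\bot_h}-\dim H$. My first step would be to analyze the form on $C^{\bot_h}$: its radical is exactly $C^{\bot_h}\cap(C^{\bot_h})^{\bot_h}=C^{\bot_h}\cap C=H$, so the induced Hermitian form on the $e$-dimensional quotient $C^{\bot_h}/H$ is non-degenerate. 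I would then fix vectors $w_1,\dots,w_e\in C^{\bot_h}$ whose images form a basis of this quotient, so that a basis of $H$ together with $w_1,\dots,w_e$ is a basis of $C^{\bot_h}$, and the Gram matrix $G=(\langle w_i,w_j\rangle)_{i,j}$ is an invertible $e\times e$ Hermitian matrix.

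The crux of the construction is the choice of the appended coordinates, and this is where the main difficulty lies. Because every non-degenerate Hermitian form over a finite field is equivalent to the standard one, the matrix $G$ factors as $G=RR^{\dagger}$ for some invertible $e\times e$ matrix $R$ over $\F_4$, where $R^{\dagger}$ denotes the conjugate transpose. Writing $r_1,\dots,r_e$ for the rows of $R$, I would define the extended code
$$\tilde C=\langle\,(c,0):c\in C\,\rangle+\langle\,(w_i,r_i):1\le i\le e\,\rangle\subseteq\F_4^{n+e},$$
where each vector of $C$ is padded with $e$ zeros. Since the rows $r_i$ are linearly independent, $\dim\tilde C=k+e$. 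The verification that $\tilde C$ is dual-containing is the heart of the argument: a vector $(v,s)$ lies in $\tilde C^{\bot_h}$ if and only if $v\in C^{\bot_h}$ and $\langle v,w_i\rangle+\langle s,r_i\rangle=0$ for all $i$. Writing $v=v_H+\sum_i\beta_i w_i$ with $v_H\in H$ and using $\langle v_H,w_i\rangle=0$ together with $RR^{\dagger}=G$, one finds $\langle s-\textstyle\sum_j\beta_j r_j,\,r_i\rangle=0$ for every $i$, whence $s=\sum_j\beta_j r_j$ by non-degeneracy of the form carried by the $r_i$. This exhibits $(v,s)=(v_H,0)+\sum_i\beta_i(w_i,r_i)\in\tilde C$, proving $\tilde C^{\bot_h}\subseteq\tilde C$.

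Finally I would read off the parameters and the distance. Applying Theorem~\ref{quantum def} to the $[n+e,k+e]$ dual-containing code $\tilde C$ yields a binary quantum code of length $n+e$ and dimension $2(k+e)-(n+e)=2k-n+e$, whose distance equals the minimum weight of $\tilde C\setminus\tilde C^{\bot_h}$. To bound this weight I would split into two cases according to the tail $s$. If $s=0$, then the padding forces $v\in C$, and $(v,0)\in\tilde C^{\bot_h}$ exactly when $v\in H$, so any $(v,0)\in\tilde C\setminus\tilde C^{\bot_h}$ has $v\in C\setminus\{0\}$ and weight at least $d(C)$. If $s\neq 0$, then $v\in C+C^{\bot_h}$, and since $C\cap W=0$ for the complement $W=\langle w_1,\dots,w_e\rangle$ one checks that $v\neq 0$; hence the weight is at least $d(C+C^{\bot_h})$ in the first $n$ coordinates plus at least $1$ in the tail. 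Combining the two cases gives $d\ge\min\{d(C),\,d(C+C^{\bot_h})+1\}$, as claimed.

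The delicate point throughout is the factorization $G=RR^{\dagger}$: this is exactly what lets the appended coordinates absorb the Hermitian self-pairings of $w_1,\dots,w_e$ and convert a merely ``nearly'' self-orthogonal code into an exactly dual-containing one, while the distance bound then falls out cleanly from the block structure of $\tilde C$. I expect the only genuinely non-routine ingredient to be establishing (or citing) that every non-degenerate Hermitian form over $\F_4$ admits such a factorization, everything else being a dimension count and a two-case weight estimate.
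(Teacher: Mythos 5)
The paper quotes this theorem from \cite{Lisonek} without reproducing a proof, so the comparison must be made against the original argument of that source, and measured against it your proposal is correct and takes essentially the same route: Lison\v{e}k and Singh pick the coset representatives $w_1,\dots,w_e$ of $C^{\bot_h}$ modulo the hull to be orthonormal (using the fact you isolate, that all non-degenerate Hermitian forms over $\F_4$ are equivalent to the standard one) and append identity tails, which is exactly your factorization $G=RR^{\dagger}$ in the normalized case $R=I$, followed by the same application of Theorem~$\ref{quantum def}$ to the $[n+e,k+e]$ dual-containing extension and the same two-case weight estimate. Two harmless details: in characteristic $2$ your $s=\sum_j\beta_j r_j$ versus $s=-\sum_j\beta_j r_j$ is immaterial, and in the degenerate case $\tilde C=\tilde C^{\bot_h}$ (where $\tilde C\setminus\tilde C^{\bot_h}$ is empty and Theorem~$\ref{quantum def}$ gives $d'=d(\tilde C)$) your identical case analysis bounds $d(\tilde C)$ itself, so the stated inequality still follows.
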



\begin{example}
Let $n=51$ and $A$ be the defining set of a linear cyclic code $C$ over $\F_4$ with the coset leaders $\{0,2,7,17,34\}$. The code $C$ is {affine equivalent to 24} linear cyclic codes of the same length over $\F_4$,
of which only one needs to be considered, thus reducing the running
time by a factor of 24.
Moreover, $C$ is a $[51,40]$ linear cyclic code and $\min\{d(C), d(C+ C^{\bot_h}) +1\}=4$. After applying the construction of Theorem $\ref{lisonek construction}$, we get $e=3$ which implies the existence of a $[[54,32]]$ binary quantum code $D$. 
Note that the dual containing code $D$ was constructed using the proof of Theorem \ref{lisonek construction} provided in \cite{Lisonek}.
 The minimum weight in $D\setminus D^{\bot_h}$ is $6$. Hence we get a $[[54,32,6]]$ which has a {\em better minimum distance} than the current best known binary quantum code with the same length and dimension. 
\end{example}

\begin{example}
Let $n=111$ and $A$ be the defining set of an $\omega$-constacyclic code $C$ over $\F_4$ with the coset leaders $\{19,37\}$. The code $C$ has the same parameters as {five} other $\omega$-constacyclic codes of the same length over $\F_4$ through an affine bijection on their defining sets. The code $C$ is a $[111,90]$ linear code, $e=3$, and $\min\{d(C), d(C+ C^{\bot_h}) +1\}=9$.
After applying the construction of Theorem $\ref{lisonek construction}$ to the 
code $C$, we constructed a new binary quantum code
with parameters $[[114,72,9]]$ which has a {\em better minimum distance} than the current best known binary quantum code. Moreover, such quantum code is also a $[114,93,9]$ linear code over $\F_4$, which has a {\em better minimum distance} than the current best known linear code over $\F_4$ with the same length and dimension.
\end{example}  


\section*{Acknowledgment}
This work was supported by 
the Natural Sciences and Engineering Research Council of Canada (NSERC), Project No.\ RGPIN-2015-06250 and RGPIN-2022-04526.

\bibliographystyle{abbrv}
\bibliography{DastbastehLisonek}

\end{document}